\documentclass[11pt]{article}

\newif\ifsubmission
\submissionfalse
\usepackage{fouriernc}
\usepackage{braket}
\usepackage[margin=1in]{geometry}
\usepackage{amsmath,amssymb,amsthm,mathtools}
\usepackage{microtype}
\usepackage{enumitem}
\usepackage[hidelinks]{hyperref}
\usepackage{cleveref}
\usepackage{xcolor}
\usepackage{authblk}
\usepackage{graphicx}
\usepackage{algorithm}
\usepackage{algpseudocode}
\algtext*{EndFor}
\algtext*{EndIf}
\usepackage{tikz} \usetikzlibrary{arrows.meta,angles,quotes,positioning,calc,decorations.pathreplacing}
\usepackage{quantikz}

\newtheorem{theorem}{Theorem}[section]

\newtheorem{lemma}[theorem]{Lemma}
\newtheorem{corollary}[theorem]{Corollary}

\theoremstyle{definition}
\newtheorem{definition}[theorem]{Definition}
\theoremstyle{remark}

\newcommand{\fbs}{\operatorname{fbs}}
\newcommand{\OR}{\operatorname{OR}}
\newcommand{\AND}{\operatorname{AND}}
\newcommand{\E}{\mathbb{E}}

\newcommand{\cF}{\mathcal{F}}

\newcommand{\cB}{\mathcal B}
\newcommand{\cH}{\mathcal H}
\providecommand{\cF}{\mathcal F}
\providecommand{\cG}{\mathcal G}
\providecommand{\cR}{\mathcal R}

\newcommand{\eps}{\varepsilon}

\newcommand{\deph}{\mathrm{deph}}
\newcommand{\negq}{\mathrm{neg}}

\newcommand{\bit}{\{0,1\}}
\newcommand{\grover}{\textsf{Grover}}

\newcommand{\PiQ}[1]{\Pi^{Q}_{#1}}
\newcommand{\PiC}[1]{\Pi^{C}_{#1}}
\newcommand{\PiA}[1]{\Pi^{A}_{#1}}

\newcommand{\ketbra}[1]{\ket{#1}\bra{#1}}%
\counterwithin{equation}{section}

\begin{document}
\title{Noisy Quantum Query Complexity via Fractional Block Sensitivity}
\date{}

\ifsubmission
  \author{}
\else
  \author{
  Mehil Agarwal,
  Shravas Rao, and
  Fang Song \\
  Department of Computer Science, Portland State University, USA\\
  \texttt{\{mehil,shravas,fang.song\}@pdx.edu}
}
\fi

\maketitle
\thispagestyle{empty}

\begin{abstract}
We study quantum query complexity under several models of imperfect oracle access, and develop lower bounds through a common framework based on fractional block sensitivity (\(\fbs\)).

For a negligent oracle that applies the correct query with probability \(1-p\), we prove a lower bound in terms of \(\fbs(f,0^n)\). We also show, perhaps surprisingly, that negligence need not destroy quantum speedups: any function \(f\) can be transformed into a partial function \(f'\) whose negligent query complexity essentially preserves the quantum query complexity of \(f\). This gives partial functions with exponential quantum speedups even under negligent queries, and in particular rules out a general lower bound in terms of \(\fbs(f)\) for partial functions in this model.

For two other models, we obtain general lower bounds in terms of \(\fbs(f)\) for all Boolean functions. For hybrid algorithms using \(Q\) coherent and \(C\) classical queries, we prove the tradeoff \(C+Q^2=\Omega(\fbs(f))\). For an IID dephasing noisy model where each query dephases the query-index register at rate \(p\) independently, we prove \(\Omega\!\left(p\,\fbs(f)\right)\) queries are necessary.

Finally, we introduce a broader family of time-varying dephasing models and identify a variational resource that is always lower bounded by \(\fbs(f)\). Computing this resource reduces to a convex optimization problem, providing a simple way to derive lower bounds for new noise schedules. As applications, we recover the hybrid and IID dephasing bounds and determine the query complexity of unstructured search when the dephasing rate grows over time.

\end{abstract}

\newpage
\setcounter{page}{1}
\section{Introduction}
\label{sec:intro}
Quantum computers are capable of solving certain problems more efficiently than their classical counterparts. A canonical example is that of unstructured search, in which one is asked to find a marked item out of $n$ possible items.
A classical algorithm must search through at least a majority of the list, and hence requires $\Omega(n)$ queries to the input.
On the other hand, there exists a quantum algorithm that uses $O(\sqrt{n})$ quantum queries to the input by Grover's algorithm~\cite{G96}, and this algorithm is optimal~\cite{BBBV97, BBHT98, Ambainis02}.

It is interesting to ask, to what extent this speed-up holds in more limited models of quantum computation.
Currently, quantum computers that can solve unstructured search in its full generality do not yet exist.
However, it is conceivable that in the way to building working quantum computers, one will have access to quantum computers with various limitations in the near future.
And thus, the question becomes, can one still obtain a quadratic speed-up for unstructured search?
And can other problems lend themselves to a speed-up in these limited models?

In this work, we consider several models.
The first was studied by Regev and Schiff~\cite{RS08} in which the quantum query fail with some probability $p$, which we refer to as the negligent oracle model.
The second was studied by Rosmanis~\cite{Ansis22}, in which there is a limited number of quantum queries allowed and otherwise an algorithm only gets to issue classical queries, which we refer to as the hybrid model.
The third was also studied by Rosmanis~\cite{R23}, in which each quantum query is dephased with some probability $p$.
In all three settings, it was shown that speed-up over classical algorithms no longer exists.
That is, not only does Grover's algorithm no longer solve the problem in $O(\sqrt{n})$ queries, but no algorithm can.
In all three models, $\Omega(n)$ queries are required, as for classical algorithms.
In the hybrid and dephasing models, these results were further refined and simplified in~\cite{HLS24}.

In addition, we also introduce a fourth model that can be seen as a generalization of both the hybrid model and the dephasing model described above.
In this model, each query is dephased with some probability that can vary over the queries, and in fact the dephasing between queries does not need to be independent.

We generalize the three results above to show that quantities related to the fractional block sensitivity, a relaxation of block sensitivity independently introduced by Tal~\cite{T13} and  Gilmer, Saks and Srinivasan~\cite{GSS16}, is a lower bound for the query complexity in all four models.
In particular, the fractional block sensitivity at $0^n$ is a lower bound for the negligent model, and the maximum fractional block sensitivity over all inputs are lower bounds for the other three models.
As the fractional block sensitivity of unstructured search at $0^n$ is $n$, this recovers all three of the results in~\cite{RS08,Ansis22,R23}.

The significance of fractional block sensitivity is that it accurately captures the randomized query complexity of many problems.
In particular, it was shown that fractional block sensitivity is equivalent to many other lower bounds that have been introduced for randomized query complexity~\cite{AKPV17}.
These include Aaronson's relational adversary bound~\cite{A06}, Ambainis's weighted adversary bound~\cite{Ambainis06}, the Kolmogorov complexity adversary bound~\cite{LM08}, and the minimax over probability distributions adversary bound.
Thus, one interpretation of the results in this paper is that for many problems, namely those for which any of these lower bound techniques capture the right lower bound, the presence of a faulty oracle or a limited amount of quantum queries removes any possibility of a quantum speed-up.

Finally, we note that in the case of the negligent oracle model, it does seem necessary to only consider the fractional block sensitivity at $0^n$.
In particular, we show that for any function $f$, there is a partial function $f'$ with the same randomized query complexity, but whose query complexity in the negligent oracle model is the same as the quantum query complexity of $f$.

\subsection{Main Results}

We start by defining fractional block sensitivity, which is essential to all of our lower bounds.
\label{def:fbs-lp}
Given a boolean function $f: \{0, 1\}^n \rightarrow \{0, 1\}$.
The fractional block sensitivity at $x$ that is $\mathsf{fbs}(f, x)$ is defined by the optimum value of the following linear program.
\begin{alignat}{3}
 & \text{maximize}  &\sum_{B: f(x) \neq f(x^B)} w_B & \label{ref:lp}\\
 & \text{subject to} \quad &\sum_{B\ni i} w_B  
 &\leq 1,  \quad \text{for }i = 1 ,..., n \nonumber\\ 
                 && 0 \leq w_B &\leq 1,\quad \text{for } B\subseteq [n] \nonumber.
\end{alignat}
Here, \(x^B\) is the input obtained by flipping block \(B\). Note that the value of the corresponding integer program in which the variables $w_B$ are required to be $0$ or $1$ is equivalent to the block sensitivity at $x$.
The fractional block sensitivity $\mathsf{fbs}(f)$ is the maximum of $\mathsf{fbs}(f, x)$ over all choices of $x$.
\paragraph*{The standard quantum query model}

We review the standard quantum query model. 
Here one has query access to the input $x \in  \{0, 1\}^n$ in the form of an oracle 
\begin{equation} \label{eq:standard-qo}
    O_x : \ket{i}\ket{b} \mapsto \ket{i}\ket{b\oplus x_i}.
\end{equation}
The equivalent quantum phase oracle is defined as
 \begin{equation} \label{eq:phase-qo}
        O_x : \ket{i}\ket{b} \mapsto (-1)^{bx_i}\ket{i}\ket{b}.
 \end{equation}

A quantum query algorithm with query complexity $T$ is a sequence of unitary operators $U_0...U_T$ interleaved with queries to the quantum oracle $O_{x}$ with input $x.$ After $t \in \{0,...,T\}$ queries on input $x$, the state of the algorithm $\ket{\phi^t_x}$ is defined as
    \[
    \ket{\phi^t_x} = U_t O_x U_{t-1} O_x... U_1 O_x U_0 \ket{0}
    \]
    The algorithm succeeds in computing a function $f$ if it outputs the correct value on any input with probability at least $1-\eps$. The quantum query complexity $Q_{\eps}(f)$ of function $f$ is defined as the smallest query complexity among the algorithms that compute that function.

In the noisy models we consider in this paper, we use modifications of the oracle that capture the relevent differences.

\paragraph*{The negligent oracle model}

In the negligent oracle model of Regev and Schiff, the oracle works as intended with some fixed probability $1-p$, and does nothing with probability $p$.
That is,
\begin{equation}\label{eq:faultyo}
\mathcal N_{p,x}(\rho) = (1-p)O_x(\rho) + p \rho.
\end{equation}

Letting $Q^{\negq}_{p,\eps}(f)$ denote the number of queries needed to compute $f$ on any input with probability at least $1-\eps$ in the negligent oracle model, we obtain the following lower bound

\begin{theorem}[Negligent-oracle lower bound]
\label{thm:neg-anchor}
For the negligent oracle with  noise rate $p$,
\[
  Q^{\negq}_{p,\eps}(f)
  =\Omega_\eps\!\left(\frac{p}{1-p}\,\fbs(f,0^n)\right).
\]
\end{theorem}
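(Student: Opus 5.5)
The plan is to generalize the Regev--Schiff argument for search from single positions to a fractional family of blocks. Fix an optimal solution $\{w_B\}$ of the linear program for $\fbs(f,0^n)$. Without loss of generality every $B$ with $w_B>0$ satisfies $f(0^n)\neq f(0^{B})$, where $0^B$ denotes the input with ones exactly on $B$; all other blocks can be zeroed out without affecting feasibility or the objective. The inputs to compare are then $0^n$ versus $0^B$. Because any $\eps$-correct algorithm must distinguish $\rho_T(0^n)$ from $\rho_T(0^B)$ with constant bias for every such $B$, we obtain
\[
\sum_B w_B\,\bigl\|\rho_T(0^n)-\rho_T(0^B)\bigr\|_1 \ge c_\eps\,\fbs(f,0^n).
\]
It then suffices to upper bound the left-hand side by $O\bigl(T(1-p)/p\bigr)$ for a $T$-query negligent algorithm.

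The key structural fact is that on input $0^n$ the oracle is the identity, so the negligent channel is exactly the identity map. On input $0^B$ it differs by $(1-p)(O_{B}\rho O_B-\rho)$, where $O_B$ is the phase oracle applying $(-1)^{b}$ on indices in $B$. I would follow Regev--Schiff and purify the negligence: replace each query by a controlled query with an ancilla $\sqrt{1-p}\ket{1}+\sqrt{p}\ket{0}$ that is then traced out. Equivalently, I would track how much the state differs after each step via a hybrid argument,
\[
\bigl\|\rho_T(0^n)-\rho_T(0^B)\bigr\|_1\le\sum_{t}\bigl\|(\mathcal N_{p,0^B}-\mathrm{id})(\sigma_t)\bigr\|_1,
\]
where $\sigma_t$ is the state on input $0^n$ (algorithm-only evolution, independent of $B$). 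A naive bound gives $(1-p)\cdot 2\sqrt{m_B(\sigma_t)}$ with $m_B=\mathrm{tr}(\Pi_B\sigma_t)$ the query weight on $B$. Summing with weights and using $\sum_B w_B m_B\le 1$ via the packing constraints only yields a Grover-type $\sqrt{\fbs}$ bound. To reach the linear bound, I must exploit the fact that the negligence decoheres: the accumulated perturbation is not coherent across steps.

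The main obstacle is getting the $p/(1-p)$ linear dependence. My approach is to carry out the Regev--Schiff step, which shows that the negligent channel with rate $p$ is a convex mixture that can be simulated by a procedure which, with probability proportional to $p$, reveals classically ("measures") whether the query register lies in $B$. Concretely, $\mathcal N_{p,x}$ can be written as a mixture of $O_x$ and a dephasing of the phase in the $\{$index in $B\}$ basis. The reasoning is that $O_B\rho O_B$ and $\rho$ average, with weight $p$, into the pinched state $\tfrac12(\rho+O_B\rho O_B)$. Each query thus carries only $O((1-p)/p)$ "classical-like" information. I would formalize this by bounding the Bures/fidelity progress $\sum_B w_B(1-F(\rho_t(0^n),\rho_t(0^B)))$ by $O((1-p)/p)$ per query. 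The proof uses the identity that $\mathrm{tr}(\Pi_B\sigma)$ is charged linearly rather than under a square root once dephasing is present, and then sums with $\sum_B w_B\Pi_B\preceq I$. Combining this bound with the distinguishability lower bound of $\Omega_\eps(\fbs(f,0^n))$ gives $T=\Omega_\eps\bigl(\tfrac{p}{1-p}\fbs(f,0^n)\bigr)$. I expect the per-query linear charge, i.e. proving that the coherent $\sqrt{m_B}$ terms cannot add up across steps, to be the hardest and most delicate step.
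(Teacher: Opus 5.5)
\textbf{The reduction you start from aims at a false inequality.} Take $f=\OR_n$ with the singleton packing ($W=n$), and the one-query algorithm that queries $\ket{\phi}=n^{-1/2}\sum_i\ket{i}\ket{-}$.

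On input $e_j$ the output is $(1-p)\ket{\phi_j}\bra{\phi_j}+p\ket{\phi}\bra{\phi}$, with $\langle\phi|\phi_j\rangle=1-2/n$. Hence $\|\rho_1(0^n)-\rho_1(e_j)\|_1=2(1-p)\sqrt{1-(1-2/n)^2}\approx 4(1-p)/\sqrt n$. The weighted sum is then about $4(1-p)\sqrt n$, which is not $O((1-p)/p)$ once $n\gg p^{-2}$. Running $T$ such queries in parallel shows the failure persists for all $T\ll p^2 n$.

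Trace distance scales like the square root of a squared-distance potential, so it cannot be charged linearly per query. You need a squared quantity from the outset, as you implicitly switch to later with $1-F$.

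\textbf{The main gap is the per-query linear charge itself.} You flag it as the hardest step but do not supply it. The pinching picture and the ``classical-like information'' heuristic do not give it: the coherent component still lets separation add at amplitude level over runs of length about $1/p$. So any valid charge must be amortized through a potential that contains a damping term.

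For the true fidelity no such term appears. The center state is pure and untouched, so with $\psi=\Gamma_B\phi_t^0$ you get $F_{t+1}^2=F_t^2-4(1-p)\,\mathrm{Re}\langle\psi|\rho_t^B|\phi_t^0\rangle+4(1-p)\langle\psi|\rho_t^B|\psi\rangle$. Here $p$ never acts as damping, and Cauchy--Schwarz only bounds the per-step decrease by $4(1-p)\|\Gamma_B\phi_t^0\|$. After weighting this is $O((1-p)\sqrt W)$, the Grover-type bound again.

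\textbf{The missing idea is the paper's common-branch surrogate.} Dilate the negligent channel and project each leaf's environment onto the center's environment vector $\sqrt{1-p}\ket{s}+\sqrt p\ket{f}$. This vector is input-independent exactly because $O_{0^n}=I$. The projection gives subnormalized vectors $\ket{\phi_t^B}$ evolving by the contraction $I-2(1-p)\Gamma_B$, with $\rho_t^B\succeq\ket{\phi_t^B}\bra{\phi_t^B}$. For $H_t^B=\|\phi_t^0-\phi_t^B\|^2$,
\[
H_{t+1}^B-H_t^B=-4p(1-p)\|\Gamma_B\phi_t^B\|^2+4(1-p)\,\mathrm{Re}\langle\Gamma_B\phi_t^0,\Gamma_B\phi_t^B\rangle\le\frac{1-p}{p}\,\|\Gamma_B\phi_t^0\|^2 .
\]
The norm lost by the common branch is precisely the damping you need. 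The packing constraint then gives $\sum_B\pi_B(H_{t+1}^B-H_t^B)\le\frac{1-p}{pW}$. The bound $\rho_T^B\succeq\ket{\phi_T^B}\bra{\phi_T^B}$ yields the terminal condition $H_T\ge 1-4\sqrt\eps$.

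Your mixture decomposition can be completed the same way; its correct weights are $1-2p$ and $2p$ when $p\le 1/2$. Write $a=\|\Gamma_B\phi_t^B\|$ and $b=\|\Gamma_B\phi_t^0\|$, and condition on the component. The pinched branch has common branch $I-\Gamma_B$ and increment at most $2ab-a^2$. The coherent branch has increment at most $4ab$. Averaging gives $4(1-p)ab-2pa^2$, the same square completion.
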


\paragraph*{The hybrid model}In the hybrid model of Rosmanis, one allows a certain number, $Q$, of quantum queries and a certain number, $C$, of classical queries.
Similarly to Rosmanis, we will upper bound the power of classical queries by so-called pseudo-classical queries, a type of query introduced in~\cite{AHU19}.
We let such a query to input $x$ be defined as
\begin{equation}\label{eq:pco}
\mathcal{P}_x(\rho) = \sum_{b = 0, 1} \left(\sum_{i: x_i = b} \ket{i}\bra{i} \otimes \ket{b}\right) \rho \left(\sum_{i: x_i = b} \ket{i}\bra{i} \otimes \bra{b}\right).
\end{equation}
We obtain the following lower bound.

\begin{theorem}[Hybrid-oracle lower bound]\label{thm:hybrid}
Any bounded-error algorithm computing $f$ with probability at least $1-\epsilon$ using $C$ pseudo-classical queries and $Q$ coherent quantum
queries satisfies
\[
  C+Q^2=\Omega_\eps(\fbs(f)).
\]
\end{theorem}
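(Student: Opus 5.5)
We argue by a hybrid argument in the spirit of BBBV and Rosmanis, adapted to fractional blocks. We fix an input $x$ with $\fbs(f,x)=\fbs(f)$ and an optimal LP solution $\{w_B\}$ supported on sensitive blocks, with $\sum_{B\ni i} w_B\le 1$ and $W:=\sum_B w_B=\fbs(f)$. This defines a probability distribution $\mu(B)=w_B/W$ over sensitive blocks. Because the algorithm must distinguish $x$ from each $x^B$ with constant advantage, the final states satisfy $\|\rho_x^T-\rho_{x^B}^T\|_1=\Omega_\eps(1)$ for every $B$ in the support. We therefore aim to show
\[
\E_{B\sim\mu}\bigl\|\rho_x^T-\rho_{x^B}^T\bigr\|_1 \;=\; O\!\left(\sqrt{\tfrac{C+Q^2}{W}}\right),
\]
or an analogous bound in squared distances. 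This forces $C+Q^2=\Omega_\eps(W)$.

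To track the divergence we run the algorithm on $x$ and let $q_t(i)$ be the weight on index $i$ of the query register at the $t$-th query, so that $\sum_i q_t(i)=1$. The key per-query estimate is that replacing the oracle for $x$ by the one for $x^B$ perturbs the state at step $t$ by an amount controlled by $m_t(B):=\sum_{i\in B}q_t(i)$.

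\textbf{Coherent query.} The operator difference $O_x-O_{x^B}$ acts only on indices in $B$, so the state perturbation in norm is at most $2\sqrt{m_t(B)}$.

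\textbf{Pseudo-classical query.} We claim the perturbation of the density matrix in trace distance is $O(m_t(B))$, not $O(\sqrt{m_t(B)})$. The reason is that $\mathcal P_x$ measures the index, so $\mathcal P_x$ and $\mathcal P_{x^B}$ agree on the branch outside $B$ and differ only on a branch of probability $m_t(B)$. Formally, $\mathcal P_x(\rho)-\mathcal P_{x^B}(\rho)$ is a difference of two pinched operators supported on $B$, each with trace $m_t(B)$. One subtlety is that the state at step $t$ depends on the true input. We handle this in the standard way: compare against the run on $x$ and use the triangle inequality (the telescoping hybrid), so that all the $q_t$ are those of the $x$-run and are independent of $B$.

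Summing the telescoping bound gives
\[
\|\rho_x^T-\rho_{x^B}^T\|_1\le \sum_{t\in\mathcal C} O\bigl(m_t(B)\bigr)+\sum_{t\in\mathcal Q}O\bigl(\sqrt{m_t(B)}\bigr),
\]
where $\mathcal C$ indexes the pseudo-classical queries and $\mathcal Q$ the coherent ones. We then average over $B\sim\mu$. The packing constraint gives $\E_\mu[m_t(B)]=\frac1W\sum_i q_t(i)\sum_{B\ni i}w_B\le 1/W$. Hence the classical part contributes at most $O(C/W)$, and by Cauchy--Schwarz and Jensen the quantum part contributes at most $O(\sqrt{Q\cdot Q/W})=O(Q/\sqrt W)$. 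The left side averages to $\Omega_\eps(1)$, so either $C=\Omega(W)$ or $Q^2=\Omega(W)$, which yields $C+Q^2=\Omega_\eps(\fbs(f))$.

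The main obstacle is the pseudo-classical step. Getting a linear rather than square-root dependence on $m_t(B)$ requires working with density matrices, but the telescoping must interleave coherent steps, where pure-state norms are natural, with measuring channels. To handle this we work throughout with purifications, treating the measurement record of $\mathcal P$ as written into an environment register. We then bound the distance between purified states. In that picture a classical query on $B$-indices moves the purification by $\sqrt{2m_t(B)}$ in norm, not by $m_t(B)$, so the naive bound would only give $\sqrt C$.

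To recover the linear behavior we would instead use the fidelity/Bures-angle form of the hybrid argument, as in \cite{HLS24}. There the classical queries multiply the fidelity by a factor of at least $1-m_t(B)$, while coherent queries add to the Bures angle, roughly $\sqrt{m_t(B)}$ each. Combining the two, and then averaging over $\mu$ with the same packing bound, is the step we expect to need the most care.
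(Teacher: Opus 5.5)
\textbf{There is a genuine gap, and it is in the pseudo-classical step, which is the substance of the theorem.}

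Your one-step claim $\|\mathcal P_x(\rho)-\mathcal P_{x^B}(\rho)\|_1=O(m_t(B))$ is false. $\mathcal P_x$ does not measure the index. It only records the answer bit and keeps coherence among indices with the same answer. So the difference contains cross terms $P_B\rho(I-P_B)$, where $P_B$ projects onto the indices in $B$, and these have trace norm $\Theta(\sqrt{m_t(B)})$.

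Concretely, take $x=0^n$, $B=\{j\}$, and $\rho=\ket{u}\bra{u}$ with $\ket{u}$ uniform over indices. Then $\mathcal P_{0^n}$ acts as the identity (recording answer $0$), while $\mathcal P_{x^B}$ splits off the $\ket{j}$ component. The two outputs are at trace distance about $2/\sqrt{n}$, not $O(1/n)$. For $\OR_n$, where $W=n$, this already contradicts your telescoped inequality at $C=1$, $Q=0$, whose average would be $O(C/W)=O(1/n)$. Honest trace-distance telescoping therefore gives only $C=\Omega(\sqrt{W})$.

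Your argument does work for genuinely classical queries that measure the index, where the one-step difference really is $2m_t(B)$. That is the weaker statement, though.

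Your last paragraph notices the $\sqrt{m_t(B)}$ behaviour, but the repair is not carried out. Moreover, its stated form fails when the two runs differ. For $x=0^n$ and pure states, the overlap becomes $\langle\phi|\chi\rangle-\langle\phi|P_B|\chi\rangle$. This is an additive loss that involves the leaf run's own mass on $B$, which the packing constraint does not control. For example, if $\chi$ is concentrated on $B$, the overlap can halve while $m_t(B)$ is tiny. So a pseudo-classical query does not in general multiply the fidelity by at least $1-m_t(B)$.

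\textbf{The missing idea} is the paper's potential: a squared distance to a subnormalized common branch, in which the norm the leaf loses pays for the leaf's own $B$-mass.

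1. Translate the maximizer to $0^n$; pseudo-classical queries are XOR-covariant, so this is free. The center's pseudo-classical query then always records $0$.
2. For each leaf, keep only the branch $\ket{\phi^B_t}$ in which every pseudo-classical query recorded $0$. Then $\rho^B_t=\ket{\phi^B_t}\bra{\phi^B_t}+\sigma^B_t$ with $\sigma^B_t\succeq 0$. On this branch a pseudo-classical query acts as $\ket{\phi^B_t}\mapsto\ket{\phi^B_t}-\ket{\psi^B_t}$, where $\ket{\psi^B_t}$ is the $B$-part.
3. Under a pseudo-classical query, $H^B_t=\|\ket{\phi^0_t}-\ket{\phi^B_t}\|^2$ changes by $-\|\psi^B_t\|^2+2\,\mathrm{Re}\langle\psi^0_{t,B}|\psi^B_t\rangle\le\|\psi^0_{t,B}\|^2$. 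This depends only on the center's mass on $B$, so by the packing constraint it averages to at most $1/W$.
4. A coherent query changes $H^B_t$ by $4\,\mathrm{Re}\langle\psi^0_{t,B}|\psi^B_t\rangle$, which averages to at most $4/\sqrt{W}$ by Cauchy--Schwarz.
5. At the end, the positive-semidefinite decomposition gives $H^B_T\ge 1-4\sqrt{\eps}$.

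Together these give $C/W+4Q/\sqrt{W}\ge 1-4\sqrt{\eps}$, i.e.\ $C+Q^2=\Omega_\eps(W)$. Your packing average and your final case analysis carry over unchanged once the pseudo-classical step is accounted for this way.
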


We note again that as classical queries are weaker than pseudo-classical queries, the same lower bound also holds for an algorithm that makes both classical and quantum queries.

\paragraph*{IID dephasing model}

In the IID dephasing model of Rosmanis, each query is dephased with probability $p$.
In particular, each query evolves the state as
    \begin{equation} \label{eq:iid-dephasing-o}
    \mathcal{D}_{p,x}
  (\rho)\longmapsto
    (1-p)O_x\rho O_x^\dagger +  
    p\Delta_I(O_x\rho O_x^\dagger)
    \end{equation}
    where
    \[
  \Delta_I(\rho)=\sum_iJ_i\rho J_i\ \text{ and } J_i = |i\rangle \langle i | \otimes I \, .
  \]

Letting $Q^{\deph}_{p,\eps}(f)$  denote the number of queries needed to compute $f$ on any input with probability at least $1-\eps$ in the iid dephasing model, we obtain the following lower bound.

\begin{theorem}[IID query-index dephasing oracle lower bound]
\label{thm:iid-intro}
Suppose each query is independently fully dephased in the query-index register with probability $p$.
Then every bounded-error algorithm for $f$ requires

\[
  Q^{\deph}_{p,\eps}(f)
  =\Omega_\eps\!\left(p\,\fbs(f)\right).
\]
\end{theorem}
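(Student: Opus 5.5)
Fix an input $x$ attaining $\fbs(f)=\fbs(f,x)$ and an optimal fractional solution $\{w_B\}$ of the LP~\eqref{ref:lp} at $x$, restricted to sensitive blocks ($f(x^B)\ne f(x)$). We compare the run on $x$ with the runs on the perturbed inputs $x^B$, weighted by $w_B$. Correctness with error $\eps$ forces, for each sensitive $B$, $\frac12\|\rho_x^T-\rho_{x^B}^T\|_1\ge 1-2\eps$. Summing against $w_B$ gives $\sum_B w_B\,\|\rho^T_x-\rho^T_{x^B}\|_1=\Omega_\eps(\fbs(f))$. So it suffices to show that the weighted total distinguishability grows only by $O(1/p)$ per query, in a suitable squared or variance sense. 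Since $\sum_B w_B=\fbs$ with each $w_B\le 1$, we need per-$B$ progress that is not concentrated. The structure will be a hybrid argument combined with a triangle/telescoping bound, with the dephasing converting the quantum $\sqrt{\cdot}$ behaviour into a linear one.

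\textbf{Step 1 (purification and per-query progress).} Write the dephasing channel as $(1-p)\,\mathrm{id}+p\Delta_I$ applied after $O_x$. Purify $\Delta_I$ by an environment register that, with probability $p$, records the index $i$; that is, with amplitude $\sqrt p$ the environment copies $\ket i$. Consider the states $\ket{\psi_x^t}$ and $\ket{\psi_{x^B}^t}$ of algorithm plus environment, using the same unitaries. Let $q_i^t$ be the query weight on index $i$ at step $t$ on input $x$, with $\sum_i q_i^t=1$. The oracles $O_x$ and $O_{x^B}$ differ only on $B$, so in the standard hybrid argument the error at step $t$ is bounded by the component supported on $B$. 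The key gain from dephasing is as follows. With probability $p$ the environment holds a copy of $i\in B$, and this part is orthogonal to everything produced on the non-recorded branches. Consequently the \emph{fidelity} decays multiplicatively rather than additively. I expect a bound of the form
\[
F(\rho^{t+1}_x,\rho^{t+1}_{x^B})\ \ge\ F(\rho^{t}_x,\rho^{t}_{x^B})-O\!\left(\tfrac{1}{p}\right)\sum_{i\in B}q_i^t ,
\]
or more naturally an inequality in which the Bures/infidelity increment is at most $c\,\sum_{i\in B}q_i^t/p$. Heuristically, a mass-$\delta$ overlap on $B$ yields quantum amplitude deviation $\sqrt\delta$. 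Dephasing then splits this into a coherent part, whose squared deviation accumulates only while there is no dephasing, and an incoherent part. A cleaner route is to follow \cite{HLS24}: the dephased query equals a mixture, over the dephasing pattern, of coherent and measured queries. Between consecutive dephasings the algorithm behaves like a coherent block of geometric length, averaging about $1/p$. A coherent block of length $L$ with query mass $m_B$ on $B$ contributes at most $L\cdot m_B$ to the squared distance (by Cauchy--Schwarz, exactly as in the hybrid bound $C+Q^2$), and a measured query contributes at most $m_B$.

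\textbf{Step 2 (summing).} Summing over the blocks with weights $w_B$ and using $\sum_{B\ni i}w_B\le1$, the total is at most $\sum_t\sum_i q_i^t\cdot E[L]\approx T/p$. This must be $\Omega(\fbs)$, which gives $T=\Omega(p\,\fbs(f))$. Concretely, I would invoke the hybrid-theorem machinery (Theorem~\ref{thm:hybrid}'s proof) conditional on each dephasing pattern. There, $C+\sum_j L_j^2$ is the relevant resource, and with a geometric segmentation $\E\sum_j L_j^2\approx T\cdot 2/p$. A convexity/Jensen step is then needed to pass from the pattern-conditioned bound to the averaged success probability.

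\textbf{Main obstacle.} Success is only guaranteed on average over dephasing patterns. Moreover the query distributions $q^t$ depend on the pattern, because the recorded index persists in the environment. I would handle this by working directly with the purified joint state and proving the weighted-sum progress bound in expectation: define the potential $\Phi_t=\sum_B w_B(1-F_B^t)$, show that $\E\Phi_T\le O(T/p)$ via the segment decomposition, and lower bound $\Phi_T$ using the $1-2\eps$ distinguishability together with monotonicity of fidelity under discarding the environment.
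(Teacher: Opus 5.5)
\textbf{Verdict: there is a genuine gap.} Your geometric-run heuristic lands on the right scale $T/p$ for IID noise, but the step meant to make it rigorous fails.

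\textbf{The per-pattern resource is false.} In Step~2 you claim that, conditional on a dephasing pattern, the hybrid machinery gives the resource $C+\sum_j L_j^2$, so that squared separations from different coherent blocks simply add. The hybrid argument (Lemmas~\ref{lem:hybridp} and~\ref{lem:hybridq}) gives only $C+Q^2$, with $Q$ the \emph{total} number of coherent queries, and that bound is tight. On a dephased slot an algorithm can query with the index register in a fixed basis state $\ket{i_0}$, keeping its real state in the workspace. Then $\Delta_I$ acts trivially, so coherent runs concatenate across dephased slots.

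Concretely, for $\OR_N$ take the algorithm that wastes every even slot this way and runs Grover on the odd slots. On the pattern that dephases exactly the even slots, it reaches weighted squared separation $\Theta(T^2)$ for $T\lesssim\sqrt N$, while $C+\sum_jL_j^2=T$ there. So no per-pattern bound in terms of realized run lengths exists. The correct per-pattern bound $C+Q^2$, averaged over the IID law, gives nothing beyond $T=\Omega(\sqrt{\fbs(f)})$ for constant $p<1$.

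The $p\,\fbs(f)$ bound comes entirely from the algorithm's inability to predict the \emph{current} dephasing event. Conditioning on the whole pattern throws exactly this away, and no Jensen step afterwards can recover it. This is the ``hindsight'' resource the paper explicitly warns against.

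\textbf{The fallback potential does not repair this.} Take your potential $\Phi_t=\sum_Bw_B(1-F_B^t)$ on purified states. There the dephasing is an input-independent isometry, and the record $i$ is identical for $x$ and $x^B$. So the purified overlap obeys the same per-query recursion as in the noiseless model, and the conjectured $O(1/p)$ fidelity recursion of Step~1 does not follow. The picture of a dephasing event ``resetting'' a whole run is also wrong: it only affects the part of the separation sitting in the query-index register at that moment.

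\textbf{What is missing.} You need a one-step \emph{expected} drift inequality, conditional on the past, in which the coherent gain and the dephasing loss are governed by the same quantity. That quantity is $a_t$, the magnitude of the separation component exposed to the current query. For $\Psi=c+3q$ the paper proves $\E_t[\Psi_t-\Psi_{t-1}]\le 12/W+12a_t/\sqrt W-2pa_t^2\le O(1/(pW))$. It gets this from a coherent weighted-star truth register, a purified hidden index record, and the record-dependent $A/Q/C$ decomposition; a dephased query moves active $Q$-mass into $C$, which carries lower weight. A terminal lemma then bounds the purified weighted distance by $12\sqrt{\bar q_T+\bar c_T}$, so success forces $\E\Psi_T=\Omega_\eps(1)$. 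Translation covariance finally lifts $\fbs(f,0^n)$ to $\fbs(f)$.
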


It is not too hard to see that the fractional block sensitivity of unstructured search on $n$ items is $n$ on the input $0^n$.
Thus, Theorem~\ref{thm:neg-anchor}, Theorem~\ref{thm:hybrid}, and Theorem~\ref{thm:iid-intro} can be seen as generalizations of the results in~\cite{RS08},~\cite{Ansis22} and~\cite{R23} respectively.

One interpretation of these results is that when the number of quantum queries is limited to at most $O(\sqrt{\mathsf{fbs}(f)})$ in the hybrid model, or when the dephasing probability is constant in the dephasing model, one obtains a lower bound of $\Omega(\mathsf{fbs}(f))$ to successfully compute the function $f$.
As stated previously, this is the same lower bound as one obtains from the many different classical adversary methods.
Thus, for many problems besides that of unstructured search, it is no longer possible to achieve a speed-up in these models.

In the case of the negligent oracle, one can ask if it is possible to obtain a lower bound of $\fbs(f)$, as opposed to $\fbs(f, 0^n)$.
The answer to this is no if one allows $f$ to be a partial function.
And in fact for every function $f$, one can construct a partial function $f'$ that has the same randomized query complexity, but whose query complexity when using the negligent oracle is the same as the quantum query complexity of $f$.
Thus, the same quantum speed-ups that can occur in the standard quantum query model can also occur in the negligent model as long as we allow for partial functions in the latter.

\begin{theorem}[No global-\(\fbs\) bound for partial functions] \label{thm:neg-partial-lifting}
For every function $f: \{0, 1\}^{n} \rightarrow \{0, 1\}$, there exists a partial function $f': \{0, 1\}^{2n} \rightarrow \{0, 1\}$ such that $R(f) = R(f')$ and
\[
Q^{\negq}_{p,\eps}(f')
  =\Omega_\eps\!\left(\frac{1}{1-p}\,Q_{\eps+1/10}(f)\right).
\]
\end{theorem}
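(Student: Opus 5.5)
The plan is to use a \emph{dual-rail} encoding. Define the partial function $f':\{0,1\}^{2n}\to\{0,1\}$ on the promise set $\{(x,\bar x): x\in\{0,1\}^n\}$, where $\bar x$ is the bitwise complement of $x$, by $f'(x,\bar x)=f(x)$. The intent is that the all-zeros string, which is what a failed negligent query effectively "sees", lies far from every valid input. This is what breaks the $\fbs(f',0^{2n})$-type obstruction of Theorem~\ref{thm:neg-anchor}, since $0^{2n}$ is not in the domain.

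First I will check that $R(f')=R(f)$. A classical query to either coordinate $y_i=x_i$ or $y_{n+i}=1-x_i$ reveals $x_i$ exactly. Hence a randomized algorithm for $f$ yields one for $f'$ by querying $y_i$. Conversely, a randomized algorithm for $f'$ yields one for $f$: each query to $y_i$ or $y_{n+i}$ is answered by querying $x_i$ and complementing if needed. The two algorithms use the same number of queries.

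The main content is the displayed lower bound, which I will prove by simulation. Suppose an algorithm $\mathcal A$ computes $f'$ with error $\eps$ using $T$ negligent queries $\mathcal N_{p,y}$. I will simulate $\mathcal A$ in the standard model for $f$, as follows.
\begin{itemize}[label={}, leftmargin=1em]
\item \emph{Simulating one query.} Each oracle call $O_y$ on a valid $y=(x,\bar x)$ is realized with one call to $O_x$. Indices $i\le n$ are routed directly. Indices $n+i$ are routed to $i$ with an extra $X$ gate on the target, which is controlled on the index lying in the second half.
\item \emph{Simulating the noise.} The negligence is simulated by a fresh classical coin, equal to $1$ with probability $1-p$, drawn independently for each query. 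The query is applied only when the coin is $1$. Since the coin is classical (it can be measured immediately), this exactly reproduces the channel $(1-p)O_y(\rho)+p\rho$ of~\eqref{eq:faultyo}, and only coin outcomes $1$ consume a real query to $x$.
\end{itemize}
The number of real queries is therefore $\mathrm{Bin}(T,1-p)$, with mean $(1-p)T$. By Markov's inequality, it exceeds $10(1-p)T$ with probability at most $1/10$. The simulation aborts (outputting arbitrarily) once this budget would be exceeded. This yields an algorithm for $f$ with worst-case query count $10(1-p)T$ and error at most $\eps+1/10$. Hence $Q_{\eps+1/10}(f)\le 10(1-p)T$, i.e.\ $T=\Omega\bigl(Q_{\eps+1/10}(f)/(1-p)\bigr)$. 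This is the claimed bound; the constant can be tuned by adjusting the Markov threshold.

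The step needing the most care is the abort/truncation. The point is that the coins are classical and independent of $x$, so conditioning on the event "budget exceeded" changes the success probability by at most its probability $1/10$. No quantum coherence across the coins is needed, because the negligent channel is an incoherent mixture.

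To justify the remark that $f'$ essentially preserves the quantum speedup, I would also sketch a matching upper bound $Q^{\negq}_{p,\eps}(f')=O(Q(f)/(1-p))$ via repeat-until-success. Each ideal query to $x_i$ is emulated by querying both rails into two target qubits. The algorithm then coherently checks that the two answers are complementary, which holds exactly when both queries fired, and retries on failure after uncomputing. The expected cost per ideal query is $O(1/(1-p))$. I expect this to be the main obstacle, since a failure midway leaves $i$-dependent garbage that must be uncomputed without disturbing the superposition. It is not needed for the stated inequality, however.
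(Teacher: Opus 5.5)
Your simulation argument correctly proves the inequality exactly as displayed. Each negligent query becomes a classical $(1-p)$-coin plus at most one call to $O_x$, and Markov truncation gives $Q_{\eps+1/10}(f)\le 10(1-p)\,Q^{\negq}_{p,\eps}(f')$. Your encoding and your argument for $R(f)=R(f')$ match the paper's up to a permutation of coordinates.

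However, this is the reverse of what the paper proves. The paper's proof turns a standard algorithm for $f$ into a negligent-oracle algorithm for $f'$, i.e.\ it shows $Q^{\negq}(f')=O\bigl(Q(f)/(1-p)\bigr)$. That upper bound is the real content of the theorem: the title, the surviving speedups, and the refutation of a global-$\fbs$ lower bound all rest on it. Your direction is the easy half of the claimed equivalence. The paper never writes that half out, so your argument is a useful complement. But it holds for any encoding that is one-query equivalent to $f$, and it is fully consistent with a global bound such as $\Omega\bigl(\tfrac{p}{1-p}\fbs(f')\bigr)$ (note $\fbs(f')=\fbs(f)$). On its own, it cannot show that such a bound fails.

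The decisive step is therefore the upper bound you only sketched, and that sketch fails as stated. With two separate negligent queries to rails $i$ and $n+i$, the targets become $(x_i,1-x_i)$ if both fire. If only one fires they become $(x_i,0)$ or $(0,1-x_i)$, which are complementary exactly when $x_i=1$, resp.\ $x_i=0$. This causes three problems:
\begin{itemize}
\item the check does not certify that both queries fired;
\item its outcome depends on $x_i$, so measuring it partially measures the logical index register;
\item uncomputing needs further negligent queries, which may fail again.
\end{itemize}

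The missing idea, which is the paper's, is to put the rail choice into the index of a \emph{single} query. In phase form $O_{x'}$ acts as $O_x\otimes Z$ on (logical index)$\otimes$(rail qubit). So a rail qubit prepared in $\ket{+}$ is mapped to $\ket{-}$ exactly when the query fires, independently of $x$, while a failed query is the identity. Measuring the rail qubit in the $\{\ket{+},\ket{-}\}$ basis then reveals success without touching the logical state. No uncomputation is needed, and repeat-until-success with Markov truncation gives the $1/(1-p)$ overhead.

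One point the paper's proof does not address: this factorization holds only on the $b=1$ branch of the controlled phase oracle, and the uncontrolled oracle alone cannot distinguish $x$ from $\bar x$. This can be repaired by running the logical algorithm in uncontrolled form with a reference index routed to the rail pair of $x_1$. The value of $x_1$ is learned first by classical queries to both rails until a $1$ appears, at expected cost $O(1/(1-p))$.
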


\paragraph{Extended dephasing models.} The IID dephasing model assumes the same dephasing rate at every query.  More generally, we can allow the dephasing pattern to be drawn from an arbitrary distribution \(\mu\) that dictates whether dephasing occurs at each query and may be may be correlated across time. Let \( (Z_1, \ldots, Z_T) \sim \mu \) be indicator random variables and 
\[p_t=\Pr_\mu[Z_t=1\mid Z_1,\ldots,Z_{t-1}] \]
be the dephasing probability at time \(t\), conditioned on the past schedule. We show that an opportunity resource function with respect to \(\mu\) on a set of variables \(\theta=(\theta_t)_{t=1}^T\) necessarily exceeds \(\fbs(f)\), in order for a \(T\)-query algorithm to correctly compute \(f\).    

\begin{theorem}[Computational-Opportunity Theorem]
\label{thm:opportunity} If a $T$-query algorithm computes
$f$ with worst-case error $\eps<1/2$ under the dephasing schedule $\mu$, then
\[
  \cR(\mu): = \inf_\theta \E_\mu\!\left[
      \left(\sum_{t=1}^T\theta_t\right)^2
      +\sum_{t=1}^T\frac{(1-\theta_t)^2}{p_t}
    \right] \ge c_\eps\fbs(f).
\]
\end{theorem}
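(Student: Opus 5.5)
We prove the theorem with a hybrid/adversary argument driven by the fractional block sensitivity LP. Fix an input \(x\) with \(\fbs(f,x)=\fbs(f)\) and an optimal LP solution \((w_B)\), normalized to \(W=\sum_B w_B=\fbs(f)\). Consider the weighted family of sensitive inputs \(x^B\), and compare the run on \(x\) with the run on \(x^B\) for \(B\) drawn with probability \(w_B/W\). Correctness with error \(\eps<1/2\) forces the final states \(\rho_x\) and \(\rho_{x^B}\) to have trace distance at least \(1-2\eps\) for every sensitive \(B\). So the averaged squared-fidelity deficit must be \(\Omega_\eps(1)\), i.e.
\[
\sum_B \frac{w_B}{W}\bigl(1-F(\rho_x,\rho_{x^B})\bigr)\ge c'_\eps .
\]

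First we purify the noise. Condition on the schedule \(Z=(Z_1,\dots,Z_T)\sim\mu\). A dephased query at time \(t\) is simulated by copying the query index \(i\) into a fresh environment register. Given \(Z\), the algorithm is then a pure-state algorithm with extra "classical records" at the dephased steps. Let \(q_{t,i}\) be the query mass on index \(i\) at step \(t\) for input \(x\), and \(m_{t}(B)=\sum_{i\in B}q_{t,i}\). The per-step damage to distinguishing \(x\) from \(x^B\) has two sources.

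\emph{Coherent steps} (\(Z_t=0\)) contribute amplitude divergence \(\lesssim\sqrt{m_t(B)}\) in the usual BBBV hybrid way. These amplitudes add linearly over time, so their squared total is \((\sum_t \sqrt{m_t(B)})^2\).

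\emph{Dephased steps} (\(Z_t=1\)) act like pseudo-classical queries. The environment records whether \(i\in B\), which contributes at most \(m_t(B)\) additively to \(1-F\), with no coherent accumulation.

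The key step is to interpolate between these two bounds with a free weight \(\theta_t\in[0,1]\), chosen adaptively as a function of the past schedule. Each step's contribution is split into a portion \(\theta_t\) treated coherently and a portion \(1-\theta_t\) treated as "wasted" by dephasing. The latter portion only helps if dephasing fails to occur, and this happens with conditional probability \(1-p_t\). Applying Cauchy–Schwarz to the coherent sum, and a conditional-variance (martingale) bound for the dephasing portion, we aim to show
\[
1-F(\rho_x,\rho_{x^B})\;\lesssim\;\E_\mu\Bigl[\Bigl(\sum_t\theta_t\Bigr)\sum_t\theta_t m_t(B)+\sum_t\frac{(1-\theta_t)^2}{p_t}\,m_t(B)\Bigr]^{\!*},
\]
where \(^*\) indicates that the \(m_t\) are normalized per-step masses. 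The second term arises because a random dephasing at rate \(p_t\) measures the index with probability \(p_t\). Any coherent phase the algorithm tried to build in the un-measured direction must then be "paid for" with inverse weight \(1/p_t\), which is the standard variance cost of importance weighting.

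Now average over \(B\) with weights \(w_B/W\). Since \(\sum_{B\ni i}w_B\le1\) and \(\sum_i q_{t,i}=1\), we get \(\sum_B w_B m_t(B)\le 1\). Hence the averaged right-hand side is at most
\[
\frac1W\,\E_\mu\Bigl[\bigl(\textstyle\sum_t\theta_t\bigr)^2+\sum_t(1-\theta_t)^2/p_t\Bigr].
\]
Combining this with the lower bound \(c'_\eps\) and taking the infimum over \(\theta\) gives \(\cR(\mu)\ge c_\eps W=c_\eps\fbs(f)\).

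The main obstacle is the middle step: proving the per-\(B\) interpolated progress bound rigorously when the \(\theta_t\) and \(p_t\) are adapted to a correlated schedule. My first attempt would decompose the state difference as \(\rho_x-\rho_{x^B}=\sum_t(\text{error injected at }t)\). The injected error at \(t\) splits as \(\theta_t\cdot(\text{coherent})+(1-\theta_t)\cdot(\text{error conditioned on no dephasing})\), and I would bound the conditional part by writing the dephasing channel as \(\mathcal D=(1-p_t)\mathrm{id}+p_t\Delta_I\) and inverting on the dephased branch. As sanity checks, taking \(p_t\in\{0,1\}\) should recover Theorem~\ref{thm:hybrid}, and constant \(p_t=p\) with \(\theta_t\equiv\theta\) should recover Theorem~\ref{thm:iid-intro}.
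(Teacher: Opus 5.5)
You set up the same outer scaffolding as the paper: the weighted star from the LP, terminal distinguishability, averaging via $\sum_{B\ni i}w_B\le 1$, and the infimum over $\theta$ taken last. However, the interpolated per-$B$ inequality you ``aim to show'' is the entire content of the theorem, and the route you sketch cannot produce it.

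Conditioning on the realized schedule and charging coherent steps $\sqrt{m_t(B)}$ in amplitude and dephased steps $m_t(B)$ additively is a hindsight analysis. For a fixed $z$, an algorithm designed for that $z$ attains the hybrid rate, so no per-schedule bound beats $Q(z)^2+C(z)$. But $\E_\mu[Q^2+C]=\Theta(T^2)$ for IID constant $p$, far above $T^2/(1+pT)$.

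A step-by-step decomposition of $\rho_x-\rho_{x^B}$ in trace or Bures distance fails for a similar reason. It yields only $D_{t+1}\le D_t+2\sqrt{m_t(B)}$ at every step, coherent or dephased. Dephasing need not reduce the distinguishability already accumulated: interrupt a Grover run by index dephasing and the center--leaf fidelity is essentially unchanged. So $p_t$ never enters, and neither ``inverting'' $(1-p_t)\mathrm{id}+p_t\Delta_I$ nor an importance-weighting variance argument supplies a $1/p_t$ saving. Your reading of $\theta_t$ as splitting the physical error into a coherent and a ``wasted'' portion has no counterpart in the dynamics.

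The missing idea is that dephasing does not erase separation. It converts the coherent part, which adds at amplitude level, into a recorded part, which adds only in probability, so later coherent queries cannot build on it. The paper makes this quantitative with:
\begin{itemize}
\item a coherent truth register $\frac1{\sqrt2}(|0\rangle+\sum_B\sqrt{\pi_B}|B\rangle)$;
\item a purified hidden index record $R$;
\item the split into the unresolved direction $A_R$, coherent separation $Q_R$, and record-certified leaves $C_R$;
\item the potential $\Psi=c+3q$.
\end{itemize}
Because the hit direction for index $i$ has overlap $\alpha_{R,i}\le 1/\sqrt W$ with $A_R$, averaging over the fresh $Z_t$ given $\cF_{t-1}$ gives
$\E[\Psi_t-\Psi_{t-1}\mid\cF_{t-1}]\le 12/W+12a_t/\sqrt W-2p_ta_t^2$.
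Here the same active amplitude $a_t$ drives growth and is penalized by dephasing, since active $Q$-mass moves to $C$ with weight going from $3$ to $1$.

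Only afterwards is $\theta_t$ introduced, purely algebraically. Write $a_t=\theta_ta_t+(1-\theta_t)a_t$, use $a_t\le1$ on the first piece, and complete the square $12(1-\theta_t)a_t/\sqrt W-2p_ta_t^2\le 18(1-\theta_t)^2/(p_tW)$ on the second. Predictability of $\theta_t$ is what allows this inside the conditional expectation and then summing by the tower property. Whether you work per $B$ or with the coherent star, you need a drift inequality of this form, and your sketch has no term playing the role of $-p_ta_t^2$.

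Two further points. The record stores $i$ itself, not merely $[i\in B]$, which creates record-dependent signs on hit leaves. This is why the common-branch distance potential that works for negligent and pseudo-classical queries breaks down here, and a separate $C_R$ sector is needed. Working at a general maximizer $x$ also requires the translation-covariance reduction to $0^n$, though that is minor.
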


The significance of this theorem is that it separates the quantum part of the argument from the noise schedule.  Once the theorem is proved, analyzing a new dephasing model no longer requires a new quantum lower-bound proof.  One only needs to estimate the scalar optimization problem \(\cR(\mu)\).  For example, we can easily recover the lower bounds in hybrid and IID depahsing models.  To further demonstrate it, we consider a scenario that dephasing rate increases over time reflecting the challenge of increasing difficulty of preserving coherence over long computation on realistic quantum device. Formally, let \[
  p_t=\min\{1,\lambda t^\gamma\},
  \qquad 0<\gamma<1,
  \qquad \lambda>0.
  \label{eq:pht-powerlaw-profile}
\]
where \(\lambda\) controls the overall noise scale and
\(\gamma\) controls the rate at which the noise worsens with time. The \(\mu\) then samples dephasing error at each query with rate \(p_t\). Our computational opportunity theorem readily gives a tight bounds for computing OR. 

\begin{theorem}[Increasing power-law dephasing for search]
\label{thm:powerlaw}
For every fixed $\eps<1/2$ and $0<\gamma<1$, in the flagged model,
\[
Q^{(p_t)}_\eps(\OR_N)
=\Theta_{\eps,\gamma}
\begin{cases}
\sqrt N,
&\lambda\lesssim N^{-(1+\gamma)/2},\\[1mm]
(\lambda N)^{1/(1-\gamma)},
&N^{-(1+\gamma)/2}\lesssim\lambda\lesssim N^{-\gamma},\\[1mm]
N,
&\lambda\gtrsim N^{-\gamma}.
\end{cases}
\]
\end{theorem}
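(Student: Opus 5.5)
Both directions will come from \Cref{thm:opportunity} together with $\fbs(\OR_N)=N$ (attained at $0^N$): the lower bound by lower-bounding $\cR(\mu)$, the upper bound by explicit algorithms. The schedule $\mu$ here is deterministic (independent flags with rate $p_t$), so $\cR(\mu)=\inf_\theta\bigl[(\sum_t\theta_t)^2+\sum_t(1-\theta_t)^2/p_t\bigr]$, and the theorem says any $T$-query algorithm must have $\cR\ge c_\eps N$.

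\textbf{Evaluating $\cR$.} For fixed $S=\sum_t\theta_t$, minimizing $\sum_t(1-\theta_t)^2/p_t$ is a weighted least-squares problem. With $P=\sum_t p_t$ its minimum is $(T-S)^2/P$ when $S\le T$. Optimizing over $S$ gives
\[
\cR=\frac{T^2}{1+P}=\Theta\!\left(\min\{T^2,\,T^2/P\}\right).
\]
Truncation at $p_t=1$ only increases $\cR$ relative to this formula, and the formula stays valid with the truncated $p_t$. For $T\le T_0:=\lambda^{-1/\gamma}$ we have $P\asymp\lambda T^{1+\gamma}$, so $\cR\asymp\min\{T^2,\,T^{1-\gamma}/\lambda\}$. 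For $T\ge T_0$ the extra queries have $p_t=1$, giving $P\asymp T$ and $\cR\asymp T$.

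\textbf{Solving $\cR\gtrsim N$ for $T$.}
\begin{itemize}
\item If $T^2\le N$ then $\cR<N$, so $T\gtrsim\sqrt N$ always.
\item In the regime $\lambda T^{1+\gamma}\gtrsim1$, we need $T^{1-\gamma}/\lambda\gtrsim N$, i.e.\ $T\gtrsim(\lambda N)^{1/(1-\gamma)}$. This bound is relevant exactly when $(\lambda N)^{1/(1-\gamma)}\ge\sqrt N$, i.e.\ $\lambda\ge N^{-(1+\gamma)/2}$.
\item It is consistent with $T\le T_0$ exactly when $(\lambda N)^{1/(1-\gamma)}\le\lambda^{-1/\gamma}$, i.e.\ $\lambda\le N^{-\gamma}$.
\item Beyond that, $T>T_0$ and $\cR\asymp T$ (up to the contribution of the first $T_0$ queries, which is at most $T_0^{1-\gamma}/\lambda\lesssim T_0$), giving $T\gtrsim N$.
\end{itemize}
This yields the three lower-bound cases.

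\textbf{Upper bounds.}
\begin{itemize}
\item The $N$ case is classical search, which is unaffected by dephasing.
\item In the first regime, Grover with $O(\sqrt N)$ queries sees total dephasing probability $\sum_t p_t\lesssim\lambda N^{(1+\gamma)/2}\le$ a small constant. With constant probability no dephasing occurs, and repetition or a union bound handles the rest. The flagged model lets us detect a dephasing event, so we can treat it as a failure and restart, or simply accept constant success.
\item In the middle regime, split $[N]$ into blocks of size $m$ and run Grover on each block sequentially, using $\sqrt m$ queries per block, for a total of $T=(N/m)\sqrt m=N/\sqrt m$ queries. We need each block's run to be dephasing-free with good probability: $\sqrt m\cdot p_T\lesssim1$ with $p_T=\lambda T^\gamma$. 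Choose $m$ with $\sqrt m\asymp T^{1-\gamma}/(\lambda N)\cdot$const. Consistency with $T=N/\sqrt m$ gives $T^{2-\gamma}\asymp\lambda N\,T\cdot N/N$, which yields $T\asymp(\lambda N)^{1/(1-\gamma)}$.
\item If a block's run is flagged as dephased, rerun that block. The expected overhead is constant, and Markov's inequality converts this into a worst-case bound.
\end{itemize}

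The main obstacle is this last point: making the upper bound rigorous. One must verify that flagged reruns keep the total at $O(T)$ despite $p_t$ growing with the reruns, and that per-block errors do not accumulate over $N/m$ blocks. For the error accumulation I would use Grover variants with one-sided, amplified error on each block, at a $\log$ cost, or else argue with exact-amplitude (fixed-point) search.
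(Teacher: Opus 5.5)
Your lower-bound half is the paper's argument. Take the singleton packing at $0^N$ ($W=N$), and the Computational-Opportunity Theorem forces $\cR(\mu)\ge c_\eps N$. One correction: you need an \emph{upper} bound on $\cR(\mu)$ as a function of $T$, and $T^2/(1+P)$ does not supply it. That expression is the minimum with the constraint $0\le\theta_t\le1$ dropped, so it is only a lower bound on $\cR(\mu)$. It is also not attained for this schedule: the unconstrained minimizer $\theta_t=1-Tp_t/(1+P)$ is negative for late $t$, since $Tp_T/(1+P)\to1+\gamma$ once $P$ is large. Use instead the paper's two feasible rules $\theta\equiv1$ and $\theta\equiv0$. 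They give $\cR(\mu)\le\min\{T^2,\sum_{t\le T}1/p_t\}$, and $\sum_{t\le T}1/p_t=\Theta_\gamma(T^{1-\gamma}/\lambda)$ for $T\le\lambda^{-1/\gamma}$ and $\Theta_\gamma(T)$ beyond. For this power law that sum is $\asymp_\gamma T^2/P$, so your three case conclusions are unaffected.

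The genuine gap is the upper bound. You leave it open exactly where you flag it, and your suggested repair would not give a $\Theta$ bound. Amplifying each of the $N/m$ block searches enough for a union bound over blocks costs a $\log(N/m)$ factor, which is polynomially large in the middle regime. The missing observation is that errors do not accumulate at all:
\begin{itemize}
\item After each Grover call, make one more query on the returned index $i$. Its index register is in a basis state, so index dephasing cannot affect it, and you output $1$ only on a verified $x_i=1$.
\item Hence there are no false positives, and failures on blocks without marked items are harmless. The algorithm errs only if the single block containing a fixed marked item fails all its attempts, so $R=O_\eps(1)$ independent attempts per block suffice.
\end{itemize}

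Flagged reruns and Markov are also unnecessary, and they create the growing-$p_t$ issue you worry about. Instead:
\begin{itemize}
\item Fix the budget $H=\kappa S$ in advance. All $O(RN/\sqrt m)=O(R\max\{\sqrt N,Np_H\})$ queries then occur before time $H$, so each has rate at most $p_H$.
\item Take $m=\Theta(\min\{N,p_H^{-2}\})$, so every attempt is dephasing-free with constant probability.
\item Check that $Np_H\le N\lambda(\kappa S)^\gamma\lesssim\kappa^\gamma S$ in the first two regimes. This fits inside $\kappa S$ for large $\kappa$ because $\gamma<1$; your classical scan covers the third regime.
\end{itemize}
Your middle-regime consistency computation is garbled as written. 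The intended relations are $\sqrt m\asymp1/p_T$ and $T=N/\sqrt m$, i.e.\ $T\asymp\lambda NT^\gamma$. This is the paper's block-Grover algorithm, and it does not even use the flags.
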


\subsection{Technical Overview}
\label{sec:technical-overview}

Our lower bounds start from the same simple object: a weighted star of inputs arising from fractional block sensitivity.  Fix an input \(x^\star\), and let \(\{x^B\}_{B\in\mathcal B}\) be sensitive inputs obtained by flipping blocks \(B\), with feasible weights \(w_B\) satisfying
\[
    W:=\sum_{B}w_B,
    \qquad
    \sum_{B\ni i}w_B\le 1
    \quad\text{for every }i.
\]

\begin{figure}[ht]
\centering
\resizebox{0.35\linewidth}{!}{%
\begin{tikzpicture}[
    every node/.style={font=\small},
    center/.style={circle, draw, thick, minimum size=6mm, inner sep=1pt},
    leaf/.style={circle, draw, thick, minimum size=6mm, inner sep=1pt},
    wlabel/.style={fill=white, inner sep=1pt, font=\scriptsize},
    annot/.style={align=left, font=\small}
]

\node[center] (c) at (0,0) {$0^n$};

\node[leaf] (b1) at (90:2) {$x^{B_1}$};
\node[leaf] (b2) at (25:2.4) {$x^{B_2}$};
\node[leaf] (b3) at (-25:2.4) {$x^{B_3}$};
\node[leaf] (b4) at (-90:2) {$x^{B_4}$};
\node[leaf] (bm) at (155:2.4) {$x^{B_m}$};

\node at (125:2) {$\cdots$};
\node at (-155:2) {$\cdots$};

\draw (c) -- (b1) node[midway, right, wlabel] {$w_{B_1}$};
\draw (c) -- (b2) node[midway, above, wlabel] {$w_{B_2}$};
\draw (c) -- (b3) node[midway, below, wlabel] {$w_{B_3}$};
\draw (c) -- (b4) node[midway, right, wlabel] {$w_{B_4}$};
\draw (c) -- (bm) node[midway, above left, wlabel] {$w_{B_m}$};
\end{tikzpicture}
}
\caption{Weighted-star input structure induced by a fractional block-sensitivity packing at the center $0^n$. Sensitive blocks \(B_1,\dots,B_m\) at \(0^n\); \(\sum_{B\ni i} w_B \le 1 \quad (\forall i\in[n]) \); \(W=\sum_B w_B=\fbs(f,0^n)\). }
\label{fig:weighted-star}
\end{figure}

At optimal choice of \(w_B\), \(W=\fbs(f,x^\star)\).  Writing \(\pi_B=w_B/W\), the constraint becomes \(    \sum_{B\ni i}\pi_B\le \frac{1}{W}\). Thus, no single query index can impact more than a \(1/W\) fraction of the weighted leaves. This elementary inequality is the common source of all our bounds.  At the amplitude level it gives a scale \(1/\sqrt W\): a coherent query can create at most \(O(1/\sqrt W)\) new separation between the computation on the center and the the leaves.  The different oracle models differ in what subsequently happens to this separation.  For negligent and pseudo-classical queries, it is convenient to compare a common branch of the center and leaf computations directly.  For query dephasing, one must additionally keep track of the hidden index records created by dephasing.  We describe the two arguments below. 

\paragraph{Negligent queries.}
We first consider a weighted star centered at \(x^\star=0^n\), with sensitive leaves \(x^B\) and weights \(w_B\) as in Figure~\ref{fig:weighted-star}.
On input \(x\) the negligent query channel\(\mathcal N_x(\rho)
    =
    (1-p)\,O_x\rho O_x^\dagger
    +
    p\,\rho\) has a standard Stinespring dilation 
\[
    V_x
    =
    \sqrt{1-p}\,O_x\otimes|s\rangle
    +
    \sqrt p\,I\otimes|f\rangle,
\]

where \(|s\rangle\) and \(|f\rangle\) are orthogonal environment states recording whether the oracle was applied.

For \(x^\star = 0^n\), \(O_{x^\star}=I\), and hence 
    \[V_{x^\star}
    =
    I\otimes
    \bigl(
        \sqrt{1-p}\,|s\rangle+\sqrt p\,|f\rangle
    \bigr). \]

We therefore rotate the environment basis so that \(
    |0\rangle_E
    :=
    \sqrt{1-p}\,|s\rangle+\sqrt p\,|f\rangle \)
is precisely the environment state produced by the  computation on \(0^n\).  Choosing an orthogonal vector \(|1\rangle_E\), the dilation for a leaf \(x^B\) can be written as
\[
    V_{x^B}
    =
    K_{B,0}\otimes|0\rangle_E
    +
    K_{B,1}\otimes|1\rangle_E, \quad \text{with} \quad K_{B,0}=I-2(1-p) \Gamma_B,
    \qquad \Gamma_B = \sum_{i\in B} \ket{i}\bra{i} \otimes I.
\]

This reveals the \emph{common branch} we will track, in which every environment register is in the common state \(|0\rangle_E\).  Let \(
    |\phi_t^\star\rangle \text{ and }
    |\phi_t^B\rangle
\) denote the corresponding states immediately before the \(t\)-th oracle call for the center and leaf \(x^B\), respectively.  The leaf state \(|\phi_t^B\rangle\) is generally subnormalized, since amplitude may have leaked into orthogonal environment branches.  The actual leaf density operator contains this branch as a positive component.  On the common branch, the \(t\)-th oracle call acts as
    
\[ |\phi_t^\star\rangle
    \longmapsto
    |\phi_t^\star\rangle, \quad \text{and} \quad 
    |\phi_t^B\rangle
    \longmapsto
    \left(I-2(1-p)\Gamma_B\right)|\phi_t^B\rangle.
\]

Afterwards the algorithm applies the same input-independent unitary \(U_{t+1}\) which preserves Euclidean distance.  Therefore we introduce a progress that track the normalized weighted separation

\[
    H_t
    :=
    \sum_B \pi_B
    \bigl\|
        |\phi_t^\star\rangle-|\phi_t^B\rangle
    \bigr\|^2
    =
    \frac1W
    \sum_Bw_B
    \bigl\|
        |\phi_t^\star\rangle-|\phi_t^B\rangle
    \bigr\|^2 \, .
\]

Let \( a_t^2  := \sum_B \pi_B \|\Gamma_B\phi_t^B\|^2\) represent the weighted mass of the part of the leaf computations currently querying their sensitive blocks. We can then bound the single-step growth by 
\[ H_{t+1}- H_t    \le \frac{4(1-p)}{\sqrt W}a_t - 4p(1-p)\,a_t^2 \, ,\]
Intuitively, the first term captures the separation created by the oracle action on the currently sensitive part of the query state.  The second reflects leakage from the common environment branch into the orthogonal branch \(K_{B,1}\). 

Maximizing over \(1 \ge a_t >0\) hence gives \(H_{t+1}- H_t \le \frac{1-p}{pW}\). Since \( H_0=0\) and correctness requires constant terminal separation \( H_T=\Omega_\eps(1)\), we obtain that 
    \[ Q^{\negq}_{p,\eps}(f)
    =
    \Omega_\eps\!\left(
        \frac{p}{1-p}\,
        \fbs(f,0^n)
    \right).
\]

The restriction to the \(0^n\) is essential to this argument.  The useful common environment branch exists because the center oracle is the identity. 

\paragraph{Hybrid coherent and pseudo-classical queries.}
A closely related argument applies when an algorithm can mix coherent quantum queries with pseudo-classical queries.  A pseudo-classical query records the oracle answer while preserving coherence among query indices that give the same answer. This is stronger than classical queries which are equivalent to measuring the query index completely.

Consider center \(x^* = 0^n \) again, and the pseudo-classical oracle always records answer \(0\). We therefore track the environment branch corresponding to recorded answer \(0\). For a leaf \(x^B\), this branch acts as \(I-\Gamma_B\), while the complementary branch \(\Gamma_B\) records that the query has hit the sensitive block \(B\). The same potential \(H_t
    =   \frac1W
    \sum_Bw_B
    \bigl\|
        |\phi_t^\star\rangle-|\phi_t^B\rangle
    \bigr\|^2 \) 
can be bounded universally for every query step: 
\begin{align*}
   \text{pseudo-classical: } & \quad   \Delta H_{\rm pc}
    \le
    \frac{2a}{\sqrt W}-a^2
    \le
    \frac1W, \\
    \text{quantum: } & \quad \Delta H_{\rm q}
    \le
    \frac{4}{\sqrt W}.
\end{align*}

Thus an algorithm using \(Q\) coherent queries and \(C\) pseudo-classical queries must satisfy \(
    \frac{Q}{\sqrt W}+\frac{C}{W}
    =
    \Omega_\eps(1)\), and hence
\[
    C+Q^2=\Omega_\eps(W) \, .
\]

This matches the expected interpolation: coherent queries can accumulate progress at the amplitude scale \(1/\sqrt W\), whereas de-cohered queries make only \(O(1/W)\) progress per use.

\paragraph{Index dephasing.} 

The preceding common-branch argument is no longer sufficient for query-index dephasing.  We therefore switch from viewing the weighted star merely as a classical prior distribution to working with its coherent purification in a truth register,

\[     \frac1{\sqrt2}
    \left(
        |0^n\rangle
        +
        \sum_B\sqrt{\pi_B}\,|x^B\rangle
    \right).
\] 

Query-index dephasing reveals which query index was used, which brings an additional complication.  To retain this information in the analysis, we purify the dephasing channel as well, storing the dephased query index in an inaccessible environment register.  The resulting joint truth--algorithm--record state allows us to decompose the truth space according to the accumulated hidden index record and to track separately the unresolved direction, coherent distinguishing directions, and leaves already hit by the record.

Let \(R\) denote the set of indices recorded so far and define \(\cB_R := \{B:B\cap R=\varnothing\}\), which include the leaves not yet hit by the record.  Inside the truth register, we split the state into three parts:
\begin{align*}
A_R &: \text{ the common direction of the center and unresolved leaves},\\
Q_R &: \text{ coherent separation among the unresolved leaves},\\
C_R &: \text{ leaves already certified as hit by the hidden record}.
\end{align*}
Let \(q_t\) and \(c_t\) denote the squared magnitudes in the \(Q\)- and \(C\)-parts, and we will track the progress by a potential function \(\Psi_t:=c_t+3q_t\).

For a query to address \(i\), there is one particular direction inside \(Q_R\) that interacts directly with \(A_R\).  The overlap between \(A_R\) and this direction is \(O(1/\sqrt W)\). Let \(a_t\) be the magnitude of the component in this active direction.  We can analyze the effect of each type of query on the potential.

Roughly, a coherent query can bring in a fresh distinguishing component of magnitude \(O(1/\sqrt W)\) from \(A_R\), contributing \(O(1/W)\) to the squared-norm potential; it can also interfere with an already present active component of magnitude \(a_t\), contributing \(O(a_t/\sqrt W)\). This hence would give

\[
    \Delta_{q}\Psi_t  \lesssim \frac{1}{W} + \frac{a_t}{\sqrt W}.
\]

A dephased query has an additional effect.  Once the queried index is recorded, the old active \(Q\)-direction disappears: its unhit part is absorbed into the new common unsolved direction, while its hit part moves to \(C\).  This decreases the potential by \(O(a_t^2)\), and the growth in this case is roughly 

\[
    \Delta_{c}\Psi_t
    \lesssim
    \frac1 W
    +
    \frac{a_t}{\sqrt W}
    - a_t^2.
\]

Since each query is independently dephased with probability \(p\), averaging the two cases yields

\[
    \Delta\Psi_t
    \lesssim
    \frac{1}{W}
    +
    \frac{a_t}{\sqrt W}
    -
    p\,a_t^2 \lesssim O(\frac{1}{p W}).
\]
This then implies the \( p\fbs(f,0^n) \) lower bound.

\paragraph{Lifting hybrid and dephasing lower bounds to general \(\fbs\).} 

We can observe that dephasing on index commutes with XOR translation of the input oracle. This means that in both hybrid (with classical, or quantum queries) and dephasing models, given oracle \(O_x\) we can simulate \(O_{x \oplus x^*}
\) without any overhead. As a result, the \(\fbs(f,0^n)\) lower bounds also holds against \(\fbs(f,x^*)\), and the optimal \(x^*\) hence gives \(\fbs\) the lower bound. 

\paragraph{Failure of global \(\fbs\) lower bound or lifting quantum speedup into negligent model.} However, the XOR translation fails for negligent queries, because the oracle becomes identity in the event of negligence but XOR still gets applied and hence deviates from the desired course of action.  This turns out to be more than merely a technical defect of the proof. For partial functions, a universal lower bound in terms of global \(\fbs(f)\) is \emph{false}.

Starting from a function \(f\) on \(n\) bits, we define a partial function \(f'\) based on it that can be computed in the negligent model using no more than a small overhead on the number of queries to compute \(f\) in the standard noiseless model.  The trick is to intentionally embed a flag bit to signal whether negligence occurred without disturbing coherent computation. Consequently, every ideal query can be simulated by repeating negligent queries until success which incurs only a \(\frac1{1-p}\) overhead in expectation. 

To achieve this,  we encode every bit into a complementary pair
\[
    0\longmapsto 01,
    \qquad
    1\longmapsto 10.
\]

Let \(f'\) be defined only on such strings \( \{(01)\cup (10)\}^*\) and set \(f'(x')=f(x)\) for \(x'=(x,b)\). Then the phase oracle factors as \(O_{x'}=O_x\otimes Z\), and we can view the two coordinates of \(x'\) as a logical query index together with a one-qubit flag.  Hence preparing the flag in \(|+\rangle\), and it will be flipped to \(\ket{-} = Z\ket{+}\) only if the negligent oracle succeeds. Measuring the flag therefore reveals \emph{exactly} whether the query succeeded, without revealing anything about the logical computation.

As a result, starting with a function \(f\) whose \(\fbs\) exceeds its quantum query complexity, we obtain a partial function \(f'\) computable with fewer queries than the fractional block sensitivity, which refutes a universal \(\fbs\) lower bound in the negligent model.  
On the positive side, however, this construction also shows that any partial function \(f\) having a large separation between quantum and randomized query complexity, \(f'\) retains essentially the same quantum advantage even in the negligent model, offering abundant examples of quantum speedup. 

In retrospect, the effect of negligence depends not only on how many sensitive directions the function has, but also on whether the oracle action can itself reveal that a query succeeded. This is precisely the feature that is absent at the center \(0^n\) and hence enables the lower-bound proof. 

\paragraph{General dephasing schedules.} The potential analysis by the subspace partition in IID dephasing can be extended to depict a similar basic one-step picture: a query can create coherent separation at scale \(1/\sqrt W\), while dephasing suppresses the currently active part of that separation at a rate proportional to \(p_t\).  The difficulty now is that the strength of this suppression varies with time and with the realized past schedule. 

To capture this tradeoff, the predictable parameters  \(\theta_t\in[0,1] \) are hence introduced. They are chosen as a function of the previous dephasing events.  Intuitively, it determines how the opportunity at time \(t\) is accounted for.  A large \(\theta_t\) treats the query as an opportunity for coherent accumulation, while a small \(\theta_t\) takes advantage of the available dephasing to control that progress more strongly. This leads to formulating the opportunity resource

\[    \cR(\mu)
    :=
    \inf_\theta
    \E_\mu\!\left[
        \left(\sum_{t=1}^T\theta_t\right)^2
        +
        \sum_{t=1}^T
        \frac{(1-\theta_t)^2}{p_t}
    \right] \, .
\]

The first term reflects coherent accumulation which adds at the amplitude level, leading to a square of their total weight.  The second term reflects the extent to which dephasing can be used to control the remaining progress where stronger dephasing makes this accounting cheaper.  Relating it to the potential function growth will establish our Computational-Opportunity Theorem states that any bounded-error algorithm must satisfy \(\cR(\mu)\ge c_\eps\,\fbs(f)\). 

It might be instructive to mention why another intuitive characterization fails. When a query algorithm is run with a dephasing oracle, we may partition the execution into phases of coherent runs separated by dephased queries. This suggests a natural quantity \(C+\sum_j \ell_j^2\), where \(C\) is the number of dephased queries and \(\ell_j\) is the length of each realized coherent-run phase. This is adequate for deterministic schedules but fails for random ones. It evaluates the schedule with hindsight, where a rare long coherent run receives quadratic credit even though its length is not known when the queries in that run are made. The correct resource must instead be predictable, depending only on the past schedule through the conditional dephasing probabilities \(p_t\). This alludes to working with the predictable process \(\theta_t\). 

\subsection{Discussion and Open Questions}
\label{sec:open}
Our results leave several natural directions open.  In the negligent model, our lower bound holds at \(\fbs(f,0^n)\). This reflects an intrinsic asymmetry of the model. The identity failure branch coincides with \(0^n\). If failed queries are replaced by a fixed reference oracle \(O_a\), the same argument yields a bound in terms of \(\fbs(f,a)\). Thus the theorem can be interpreted as a reference-point lower bound. For some canonical detection problems this entails no loss. For example, \(\fbs(f,0^n)=\fbs(f)\) for OR, parity, and Tribes in a suitable parameter regime. However, the partial-function construction shows that a universal lower bound in terms of global \(\fbs(f)\) is false in general.  It remains open whether \emph{total} Boolean functions can nevertheless admit such a \(\fbs\) bound.  More broadly, \(\fbs\) is not always expected to give the tight bound.  It would be interesting to determine the negligent query complexity of functions such as \(\AND\circ\OR\), where the standard quantum complexity is much larger than \(\sqrt{\fbs}\), and to identify a sharper complexity measure for this model.

In the dephasing model, a first concrete problem is again to determine the dephasing complexity of \(\AND\circ\OR\).  This would test whether the \(\fbs\)-based lower bound is close to optimal or whether a stronger adversary-type quantity is needed.  A second direction is robust simulation.  Recent work~\cite{LMP24} shows that arbitrary quantum query algorithms can be simulated with polynomial overhead in the negligent model, preserving sufficiently large quantum speedups. It would be interesting to obtain an analogous general simulation theorem for query dephasing, ideally with overhead matching or approaching our lower bounds.

Finally, the Computational-Opportunity Theorem applies to general predictable dephasing schedules, but the explicit examples we analyze are still generated by independent dephasing events at each time step.  An important direction is to apply the theorem to more physically motivated correlated schedules, such as burst noise, recovery processes, or environments with temporal memory.  Understanding which such models lead to qualitatively new opportunity resources may help connect query-complexity bounds more directly with realistic noise processes.

\section{Preliminaries}
\label{sec:prelim}

\subsection{Translation covariance in hybrid and dephasing models}
\label{sec:translation}
\begin{lemma}[Translation covariance of quantum and classical queries]
\label{lem:hybrid-translation-cq}
For every known bit string $a$,
\[
  O_{x\oplus a}=O_xO_a = O_aO_x,
\]
Consequently the quantum is invariant under XOR translation of the input, allowing an
arbitrary maximizer of fractional block sensitivity to be moved to $0^n$. The same holds true for a classical query.

\end{lemma}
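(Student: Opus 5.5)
The plan is to verify the identity directly on the computational basis, working with the phase oracle \eqref{eq:phase-qo}. Both sides are linear, so it suffices to check their action on each basis vector $\ket{i}\ket{b}$. Applying $O_a$ and then $O_x$ gives
\[
  O_xO_a\ket{i}\ket{b}=(-1)^{b a_i}(-1)^{b x_i}\ket{i}\ket{b}=(-1)^{b(x_i\oplus a_i)}\ket{i}\ket{b}=O_{x\oplus a}\ket{i}\ket{b}.
\]
The middle equality holds because $b x_i + b a_i \equiv b(x_i\oplus a_i) \pmod 2$. The operators $O_x$ and $O_a$ are both diagonal in this basis, so they commute, which gives $O_aO_x$ as well.

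For the bit-flip form \eqref{eq:standard-qo}, the same check works: $\ket{i}\ket{b}\mapsto\ket{i}\ket{b\oplus a_i}\mapsto\ket{i}\ket{b\oplus a_i\oplus x_i}$. This is symmetric in $x$ and $a$ and equals $O_{x\oplus a}\ket{i}\ket{b}$.

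From this I will draw the algorithmic consequence. The string $a$ is known and input-independent, so $O_a$ is a fixed unitary that the algorithm can apply itself at no query cost. Concretely, suppose $\mathcal A$ computes $g(y):=f(y\oplus a)$ using oracle $O_y$. We build $\mathcal A'$ for $f$ by replacing each call $O_y$ with $O_aO_x$. With $y=x\oplus a$, this equals $O_{x\oplus a}$, and the number of queries is unchanged. The converse direction is symmetric, so $f$ and $g$ have the same query complexity. Now choose $a=x^\star$, where $x^\star$ maximizes $\fbs(f,\cdot)$. Flipping a block commutes with XOR, since $(x^\star)^B\oplus x^\star = (0^n)^B$. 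Hence the LP \eqref{ref:lp} at $x^\star$ for $f$ is literally the LP at $0^n$ for $g$, and $\fbs(g,0^n)=\fbs(f,x^\star)=\fbs(f)$.

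For classical (and pseudo-classical \eqref{eq:pco}) queries, the point I need to handle carefully is that a classical query is the quantum query followed by measuring, or dephasing, the index register. For pseudo-classical queries it is the projection onto answer value $b$. I will show that $O_a$, being diagonal in the index basis, commutes with the index projectors $J_i$. So translating by $a$ before or after the decoherence gives the same channel. For the pseudo-classical channel, the answer register records $x_i$. Obtaining the record $x_i\oplus a_i$ only requires a known, index-controlled relabeling of the recorded bit (a controlled-$X$ by $a_i$ conditioned on $\ket{i}$), again a free unitary.

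I expect the main obstacle to be only this last bookkeeping point: checking that the relabeling is implementable without knowledge of $x$ and preserves the channel exactly. The algebraic identity itself is routine.
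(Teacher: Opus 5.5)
Your quantum-query and classical-query arguments are correct and follow essentially the paper's route. The paper gives no separate proof of this lemma. Its proof of Lemma~\ref{lem:deph-translation} is the same basis computation showing that $O_a$ commutes past the index dephasing $\Delta_I$. The identification $\fbs(g,0^n)=\fbs(f,x^\star)$ for $g(y)=f(x^\star\oplus y)$ is the one used at the end of the proof of the Computational-Opportunity Theorem.

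\textbf{The pseudo-classical extension fails.} The controlled relabeling does not turn $\mathcal P_x$ into $\mathcal P_{x\oplus a}$ when $a$ is non-constant. The two channels dephase the index register across different partitions: $\mathcal P_x$ kills coherence between $i,j$ with $x_i\neq x_j$, whereas $\mathcal P_{x\oplus a}$ kills coherence between $i,j$ with $x_i\oplus a_i\neq x_j\oplus a_j$.

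Concretely, take $n=2$, $a=(0,1)$, $x=(0,1)$, and input $\ket{+}=(\ket{1}+\ket{2})/\sqrt2$.
\begin{itemize}
\item The target $\mathcal P_{x\oplus a}=\mathcal P_{00}$ outputs the pure state $\ket{+}\bra{+}\otimes\ket{0}\bra{0}$.
\item $\mathcal P_{01}$ followed by your relabeling outputs $\tfrac12\left(\ket{1}\bra{1}+\ket{2}\bra{2}\right)\otimes\ket{0}\bra{0}$.
\end{itemize}
No input-independent unitary post-processing can repair this, since it would have to map a mixed state to a pure one. The relabeling trick works only for $a\in\{0^n,1^n\}$, where it is a global flip of the record.

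This does not affect the lemma as stated, which claims covariance only for genuinely classical queries, and your classical argument covers that. For the hybrid application, translate the classical-query algorithm first. Only afterwards relax its classical queries to pseudo-classical ones at the center $0^n$, where the helper bound applies. You should drop the pseudo-classical claim rather than try to verify it: translation covariance cannot be used to move a pseudo-classical lower bound from $0^n$ to a non-constant center.
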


\begin{lemma}[Translation covariance of query-index dephasing]
\label{lem:deph-translation}
For every known bit string $a$,
\[
  O_{x\oplus a}=O_xO_a,
  \qquad
  O_a \Delta_I O_x (\rho) = \Delta_I O_{x\oplus a}  (\rho).
\]
Consequently the dephasing query complexity is invariant under XOR translation of the input, allowing an
arbitrary maximizer of fractional block sensitivity to be moved to $0^n$.
\end{lemma}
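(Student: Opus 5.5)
The plan is to verify both identities entrywise on the computational basis and then derive the reduction from them. First, the algebraic identity $O_{x\oplus a}=O_xO_a$. Every phase oracle is diagonal in the basis $\ket{i}\ket{b}$, with
\[
O_xO_a\ket{i}\ket{b}=(-1)^{b x_i}(-1)^{b a_i}\ket{i}\ket{b}=(-1)^{b(x_i\oplus a_i)}\ket{i}\ket{b}=O_{x\oplus a}\ket{i}\ket{b},
\]
so the identity holds, and since diagonal operators commute, $O_xO_a=O_aO_x$. (If one prefers the bit-flip form~\eqref{eq:standard-qo}, the same computation shows $b\oplus x_i\oplus a_i$.)

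Second, the dephasing identity $O_a\,\Delta_I(O_x\rho O_x^\dagger)\,O_a^\dagger=\Delta_I(O_{x\oplus a}\rho O_{x\oplus a}^\dagger)$, which I read as the meaning of the displayed equation. The key observation is that $O_a$ commutes with each $J_i=\ket{i}\bra{i}\otimes I$, because $O_a$ is block diagonal in the index register: $O_a=\sum_i \ket{i}\bra{i}\otimes Z^{a_i}$. Hence
\[
O_a J_i O_x\rho O_x^\dagger J_i O_a^\dagger=J_i O_aO_x\rho O_x^\dagger O_a^\dagger J_i .
\]
Summing over $i$ and applying the first identity gives the claim. By linearity, the full IID channel then satisfies $O_a\circ\mathcal D_{p,x}=\mathcal D_{p,x\oplus a}$. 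The same holds for any time-varying schedule, since the channel at each step is either $O_x(\cdot)O_x^\dagger$ or $\Delta_I\circ O_x$.

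Third, the consequence. Given an algorithm $\mathcal A$ computing $f$ with $T$ dephased queries, I would build an algorithm for $g(y):=f(y\oplus a)$ with the same number of queries. Each time $\mathcal A$ calls the oracle, the new algorithm calls its own oracle on $y$ and then applies the known unitary $O_a$, which is input-independent and can be absorbed into the next $U_t$. By the identity above, this step acts exactly as the dephased oracle for $y\oplus a$, so the simulated algorithm's output distribution on $y$ matches that of $\mathcal A$ on $y\oplus a$, and its error is unchanged. Thus $f$ and $g$ have the same dephasing query complexity (apply the argument in both directions). Moreover $\fbs(g,0^n)=\fbs(f,a)$, since the map $B\mapsto B$ preserves the LP: $g((0^n)^B)=f(a^B)$ and $g(0^n)=f(a)$. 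Choosing $a$ to be a maximizer of $\fbs(f,\cdot)$ moves it to $0^n$.

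There is no real obstacle here; the only point needing care is that $O_a$ commutes with the index projectors $J_i$, which is exactly what fails for the negligent channel. In that channel the identity branch does not get translated, so $O_a\circ\mathcal N_{p,x}\neq\mathcal N_{p,x\oplus a}$.
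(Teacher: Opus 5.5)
Your proof is correct and follows essentially the same route as the paper. Both arguments establish $O_aO_x=O_{x\oplus a}$ and then show that $O_a$ passes through $\Delta_I$ because it is block-diagonal in the index register. The paper does this by expanding both sides in the bit-flip form, whereas you state the commutation $[O_a,J_i]=0$ directly. You also write out the simulation reduction and the check $\fbs(g,0^n)=\fbs(f,a)$ for $g(y)=f(y\oplus a)$; the paper only sketches these here and uses them later, in the proof of the Computational-Opportunity Theorem.
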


\begin{proof}
    Given oracle \(O_x\) we have
    \[ 
    O_a :=  
    \sum_{i\in[n]}
    \ketbra{i}\sum_{b\in\bit}
    \ket{b\oplus a_i}\bra{b} ,
    \qquad
    O_{x \oplus a} := 
    \sum_{i\in[n]}
    \ketbra{i}\sum_{b\in\bit}
    \ket{b\oplus x_i\oplus a_i}\bra{b} .
    \]
    The index-query dephasing oracle
    \[
        \Delta_I (\rho) \mapsto 
        \sum_{i} 
        \left( J_i 
         \rho 
         J_i \right),
         \qquad
         J_i =\ketbra{i}\otimes I.
    \]
    To show $O_a \Delta_I O_x (\rho) = \Delta_I O_{x\oplus a}  (\rho)$, 
    \begin{figure}[ht]
\centering
\resizebox{0.7\linewidth}{!}{%
\begin{quantikz}[row sep=0.5cm, column sep=0.6cm]
\lstick{}
    & \gate[2]{O_x}
    & \gate{\Delta_I}
    & \gate[2]{O_a}
    & \qw
\\
\lstick{}
    &
    & \qw
    &
    & \qw
\end{quantikz}
\qquad $=$ \qquad
\begin{quantikz}[row sep=0.5cm, column sep=0.6cm]
\lstick{}
    & \gate[2]{O_x}
    & \gate[2]{O_a}
    & \gate{\Delta_I}
    & \qw
\\
\lstick{}
    &
    &
    & \qw
    & \qw
\end{quantikz}%
}
\end{figure}
\\
The left hand side,
\begin{align*}
\rho
&\xrightarrow{\,O_x\,}
\left(
\sum_{i\in[n]} \ketbra{i}
\otimes
\sum_{b\in\bit}\ket{b\oplus x_i}\bra{b}
\right)
\rho\, O_x^\dagger
\\
&\xrightarrow{\,\Delta_I\,}
\sum_{i\in[n]}
\left(
\ketbra{i}\otimes
\sum_{b\in\bit}\ket{b\oplus x_i}\bra{b}
\right)
\rho\,
(O_x J_i)^\dagger
\\
&\xrightarrow{\,O_a\,}
\sum_{i\in[n]}
\left(
\ketbra{i}\otimes
\sum_{b\in\bit}\ket{b\oplus x_i\oplus a_i}\bra{b}
\right)
\rho\,
(O_x J_i O_a)^\dagger .
\end{align*}
For the right hand side
\begin{align*}
\rho
&\xrightarrow{\,O_{x\oplus a}\,}
\left(
\sum_{i\in[n]} \ketbra{i}
\otimes
\sum_{b\in\bit}\ket{b\oplus x_i\oplus a_i}\bra{b}
\right)
\rho\, O_{x\oplus a}^\dagger
\\
&\xrightarrow{\,\Delta_I\,}
\sum_{i\in[n]}
\left(
\ketbra{i}\otimes
\sum_{b\in\bit}\ket{b\oplus x_i\oplus a_i}\bra{b}
\right)
\rho\,
(O_{x\oplus a}J_i)^\dagger .
\end{align*}
    Thus the XOR operation commutes with the dephasing oracle. Thus a query to rotated oracle can be made without extra overhead when given access to a query-index dephasing oracle.
\end{proof}

\subsection{Definitions and preliminary lemmas}
 
We start by defining quantities and proving lemmas that we will use to prove both Theorem~\ref{thm:neg-anchor} and Theorem~\ref{thm:hybrid}.
Because the proofs of the two are similar, we collect them together in order to streamline the presentation.

We define $\rho_t^B$ to be the state of the density operator on input $(0^n)^B$ (that is, the input whose $1$-indices are exactly the set $B$) after applying $t$ unitaries and $t$ query operators, which can be either quantum, negligent, or pseduo-classical.
Rather than analyzing $\rho_t^B$ directly, we will instead analyze the vectors $\ket{\phi_t^B}$ and $\ket{\widetilde{\phi}_t^B}$ which are subnormalized pure states that we define below.

 \begin{definition}
    For $B\subseteq [n]$ and $t\in \{0,\ldots, T-1\}$ we define the following vectors 
    \[
    \ket{\phi_{0}^B} \text{ is the initial state, }
    \qquad 
    \ket{\phi_t^B} = U_t\ket{\widetilde{\phi}_t^B}.
    \]
        \item Let $\ket{\phi_t^B} = \sum_{i=1}^n \ket{i, \alpha_{t,i}^B}$. Then, 
        \[
        \ket{\widetilde{\phi}_{t+1}^B} := 
        \begin{cases}
        \sum_{i=1}^n (-1)^{\mathbf{1}_{i \in B}}\ket{i, \alpha_{t,i}^B}, & \text{if the } (t+1) \text{st query is quantum,} \\
        \sum_{i=1}^n \ket{i, \alpha_{t,i}^B} - 2(1-p) \sum_{i \in B} \ket{i, \alpha_{t,i}^B},  & \text{if the } (t+1) \text{st query is negligent,} \\
        \sum_{i=1}^n \ket{i, \alpha_{t,i}^B} - \sum_{i\in B} \ket{i, \alpha_{t,i}^B}, & \text{if the } (t+1) \text{st query is pseudo-classical}.
        \end{cases}
        \]
    \end{definition}

We have that for every $t$ and $B$, $\rho_t^B = \ket{\phi_t^B}\bra{\phi_t^B}+\sigma_t^B$ for some positive semi-definite matrix $\sigma_t^B$, using the same reasoning as Lemma 1 in~\cite{RS08} and as Section 4 in~\cite{Ansis22}.
Thus, instead of working with the density matrices $\rho_B^t$, we will work with the pure states $\ket{\phi_B^t}$ and $\ket{\widetilde{\phi}_{t+1}^B}$.

We also define the following two helper vectors to make the analysis cleaner. They will help us capture the differences in an algorithm's execution when working on the input $0$ and when working on the input $(0^n)^B$.

\begin{definition}
    For $B \subseteq [n]$ and $t\in \{0,\ldots, T\}$, consider $\ket{\phi_t^B} = \sum \ket{i, \alpha_{t,i}^B}$. We define
      \begin{align*}
          \ket{\psi_t^B} = \sum_{i \in B} \ket{i, \alpha_{t,i}^B} \, , \qquad \ket{\psi_{t,B}^0} = \sum_{i \in B} \ket{i, \alpha_{t,i}^0} \, .
      \end{align*}
    \end{definition}

To prove our lower bounds (Theorem~\ref{thm:neg-anchor} and Theorem~\ref{thm:hybrid}), we define a progress measure below. It involves variables characterizing the fractional block sensitivity of our function $f$, which enables proving our query lower bounds based on the fractional block sensitivity.

\begin{definition}
    Given a function $f: \{0, 1\}^n \rightarrow \{0, 1\}$ such that $f(0) = 0$, and a solution to the linear program when $x = 0^n$~\eqref{ref:lp}, we define the progress measure
    \[
    H_t^B = \left\|\ket{\phi_t^0}-\ket{\phi_t^B}\right\|^2 ,
    \qquad
    H_t = \sum_{B: f((0^n)^B) =1}\pi_B H^{B}_t ,
    \qquad 
    H_0 = 0 ,
    \]
    where 
    \[
    \pi_B = \frac{w_B}{W},
    \qquad
    W = \sum_{B: f((0^n)^B) =1}w_B.
    \]
\end{definition}
The following lemma argues that in order to compute $f$ successfully, the progress measure $H_T$ needs to be at least $(1-4\sqrt{\epsilon})$.

\begin{lemma}[Terminal progress bound]\label{lem:progresslb}
    Suppose that for all $y \in \{0, 1\}^n$ the algorithm returns the correct answer with probability at least $1 - \epsilon$. Then $H_T \geq (1-4\sqrt{\epsilon})$. 
    \end{lemma}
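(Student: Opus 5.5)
We show that each leaf state $\ket{\phi_T^B}$ with $f((0^n)^B)=1$ must be far from the center state $\ket{\phi_T^0}$, and then average with the weights $\pi_B$, which sum to $1$.

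\textbf{Step 1 (center fidelity).} On input $0^n$ every query of every type acts as the identity on the common branch, so $\ket{\phi_t^0}$ is a genuine normalized pure state and $\rho_T^0=\ket{\phi_T^0}\bra{\phi_T^0}$. Let $\Pi_1$ be the projector onto output $1$. Correctness on $0^n$ gives $\|\Pi_1\ket{\phi_T^0}\|^2\le\epsilon$.

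\textbf{Step 2 (leaf fidelity).} For a leaf $B$ with $f((0^n)^B)=1$, use the decomposition $\rho_T^B=\ket{\phi_T^B}\bra{\phi_T^B}+\sigma_T^B$ with $\sigma_T^B\succeq 0$. Correctness gives $\operatorname{tr}(\Pi_0\rho_T^B)\le\epsilon$, where $\Pi_0=I-\Pi_1$. Since $\sigma_T^B$ contributes nonnegatively, this yields $\|\Pi_0\ket{\phi_T^B}\|^2\le\epsilon$. The vector $\ket{\phi_T^B}$ is only subnormalized, so we cannot assume it has large norm; the argument must not rely on that.

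\textbf{Step 3 (separation).} Split the difference into its two orthogonal components:
\[
\bigl\|\ket{\phi_T^0}-\ket{\phi_T^B}\bigr\|^2=\bigl\|\Pi_0\ket{\phi_T^0}-\Pi_0\ket{\phi_T^B}\bigr\|^2+\bigl\|\Pi_1\ket{\phi_T^0}-\Pi_1\ket{\phi_T^B}\bigr\|^2 .
\]
Drop the second, nonnegative term. By the reverse triangle inequality and Steps 1 and 2,
\[
\bigl\|\Pi_0(\ket{\phi_T^0}-\ket{\phi_T^B})\bigr\|\ge\|\Pi_0\ket{\phi_T^0}\|-\|\Pi_0\ket{\phi_T^B}\|\ge\sqrt{1-\epsilon}-\sqrt{\epsilon}.
\]
Squaring (valid since $\epsilon<1/2$ makes the right side positive) gives
\[
H_T^B\ge 1-2\sqrt{\epsilon(1-\epsilon)}\ge 1-2\sqrt\epsilon\ge 1-4\sqrt\epsilon .
\]
This bound uses no lower bound on $\|\phi_T^B\|$, which is exactly why Step 2 only needs an upper bound on $\|\Pi_0\ket{\phi_T^B}\|$.

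\textbf{Step 4 (average).} Since $\sum_B\pi_B=1$, averaging gives $H_T=\sum_B\pi_B H_T^B\ge 1-4\sqrt\epsilon$.

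The main point needing care is the subnormalization in Step 2. It is handled by projecting onto $\Pi_0$ and using the center's near-full weight there, so the leaf's possibly small norm never enters the bound.
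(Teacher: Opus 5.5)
Your proof is correct, but it takes a different route from the paper's.

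The paper never uses the output projectors. It first writes $H_T^B=\|\phi_T^0\|^2+\|\phi_T^B\|^2-2\operatorname{Re}\langle\phi_T^0|\phi_T^B\rangle\ge 1-2|\langle\phi_T^0|\phi_T^B\rangle|$, discarding $\|\phi_T^B\|^2\ge0$. It then bounds the overlap through the chain $1-2\epsilon\le\|\rho_T^B-\ket{\phi_T^0}\bra{\phi_T^0}\|_{\mathrm{tr}}\le\sqrt{1-\bra{\phi_T^0}\rho_T^B\ket{\phi_T^0}}\le\sqrt{1-|\langle\phi_T^0|\phi_T^B\rangle|^2}$. This is a Fuchs--van de Graaf type inequality, and positivity of $\sigma_T^B$ enters only in the last step. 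The result is $|\langle\phi_T^0|\phi_T^B\rangle|\le2\sqrt{\epsilon}$.

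You replace this trace-distance/fidelity machinery with an elementary projection argument. You only need $\|\Pi_0\phi_T^0\|^2\ge1-\epsilon$ and $\|\Pi_0\phi_T^B\|^2\le\epsilon$ (this is where $\sigma_T^B\succeq0$ enters), followed by the reverse triangle inequality.

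What your version buys:
\begin{itemize}
\item It is simpler.
\item It gives a sharper constant, $1-2\sqrt{\epsilon(1-\epsilon)}$ instead of $1-4\sqrt{\epsilon}$. This is the same quantity that reappears in the paper's IID terminal lemma.
\end{itemize}

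What the paper's version buys is that it uses only the total-variation separation $1-2\epsilon$ between the output distributions, so it applies verbatim to any final measurement. Your argument implicitly takes the final measurement to be projective. That is without loss of generality, and it also works unchanged for a POVM $\{M_0,M_1\}$ if you replace $\Pi_0$ by $\sqrt{M_0}$, since $\|\sqrt{M_0}\,v\|\le\|v\|$.
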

    \begin{proof}
        For all $B\text{ such that } f((0^n)^B)=1$ we have 
        \begin{align*}
            H^B_T &= \left\|\ket{\phi_T^0}-\ket{\phi_T^B}\right\|^2 \\
            &= \left\|\ket{\phi_T^0}\right\|^2 + \left\|\ket{\phi_T^B}\right\|^2 - 2\text{Re}(\langle\phi_T^0|\phi_T^B\rangle) \\
            &\geq 1-2|\langle\phi_T^0|\phi_T^B\rangle|.
        \end{align*}
        Let the density operator $\rho^B_T$ be the final state of the algorithm when run on input $(0^n)^B$. We have $\rho_T^0 = \ket{\phi_T^0}\bra{\phi_T^0}$ and $\rho_T^B = \ket{\phi_T^B}\bra{\phi_T^B} + \sigma_{T}^B$ for some positive semi-definite matrix $\sigma_t^B$.
        Assuming the correctness of the algorithm that it distinguishes between the yes and no instances of $f$ with error probability at most $\epsilon$ we have 
        \begin{align*}
             1-2\epsilon &\leq \left\|\rho_T^B - \rho_T^0\right\|_{\text{tr}}\\
             &= \left\|\rho_T^B - \ket{\phi_T^0}\bra{\phi_T^0}\right\|_{\text{tr}}\\
             &\leq \sqrt{1-\langle \phi_T^0|\rho_T^B|\phi_T^0\rangle}\\
             &=\sqrt{1-\ket{\phi_T^0}(\ket{\phi_T^B}\bra{\phi_T^B}+\sigma_T^B)\ket{\phi_T^0}} \\
             &\leq \sqrt{1-|\langle \phi_T^0|\phi_T^B\rangle|^2},
        \end{align*}
        where the second inequality follows from the fact that for a mixed state $\rho$, one has $\|\rho - \ket{\phi}\bra{\phi}\|_{\text{tr}} \leq \sqrt{1- \langle\phi | \rho | \phi\rangle}$ (see~\cite[Chapter~9]{NC10}).
        Thus, $|\langle\phi_T^0 |\phi_T^B\rangle|\leq 2\sqrt{\epsilon}$ and $H_T^B \geq 1-4\sqrt{\epsilon}$. \,\,Finally, by the definition $H_T$, 
        \begin{align*}
            H_T &= \sum_{B: f((0^n)^B)=1}\pi_BH_T^B \\
            &\geq \sum_{B: f((0^n)^B)=1}\pi_B(1-4\sqrt{\epsilon}) \\
            &= (1-4\sqrt{\epsilon}).
        \end{align*}
        The final equality follows from the definition of fractional block sensitivity.
    \end{proof}

\section{Negligent Oracles}
\label{sec:negligent}

\subsection{A Regev--Schiff-style weighted-star proof}

In this section, we prove Theorem~\ref{thm:neg-anchor}.
We start by bounding the increase in progress measure $H_{t+1}^B-H_t^B$.
We note that this lemma closely follows that of Lemma 3 in~\cite{RS08}.
However, there are slight differences due to the fact that $B$ might have more than $1$ element, and so for completeness, we include the proof here.

\begin{lemma}\label{lem:faultinc}
For all $B\text{ such that } f((0^n)^B)=1$ and all $t \in \{0,\ldots,T-1\}$, \[H_{t+1}^B - H_t^B \leq \frac{1-p}{p}\left\|\ket{\psi_{t, B}^0}\right\|^2.\]
\end{lemma}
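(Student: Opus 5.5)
The plan is a direct one-step computation on the subnormalized pure states, followed by a one-variable quadratic maximization; no earlier lemma beyond the definitions is needed. The first step is to record how the center evolves. The center input $0^n$ corresponds to $B=\varnothing$, so the negligent-query case of the definition gives $\ket{\widetilde{\phi}_{t+1}^0}=\ket{\phi_t^0}$: the query acts as the identity on the common branch. The leaf evolves as
\[
\ket{\widetilde{\phi}_{t+1}^B}=\ket{\phi_t^B}-2(1-p)\ket{\psi_t^B}.
\]
Both states are then hit by the same unitary $U_{t+1}$, which preserves Euclidean distance. Hence
\[
H_{t+1}^B=\bigl\|\ket{\phi_t^0}-\ket{\phi_t^B}+2(1-p)\ket{\psi_t^B}\bigr\|^2 .
\]

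The second step is to expand this square:
\[
H_{t+1}^B-H_t^B = 4(1-p)\,\mathrm{Re}\bigl\langle \phi_t^0-\phi_t^B \,\big|\, \psi_t^B\bigr\rangle + 4(1-p)^2\|\psi_t^B\|^2 .
\]
The key observation is that $\ket{\psi_t^B}$ is supported only on query indices $i\in B$. So in the inner product only the $B$-components of $\ket{\phi_t^0}-\ket{\phi_t^B}$ survive, and these are exactly $\ket{\psi_{t,B}^0}-\ket{\psi_t^B}$. This gives
\[
\mathrm{Re}\langle \phi_t^0-\phi_t^B|\psi_t^B\rangle=\mathrm{Re}\langle\psi_{t,B}^0|\psi_t^B\rangle-\|\psi_t^B\|^2 .
\]
Substituting, the two $\|\psi_t^B\|^2$ terms combine into $-4(1-p)\bigl(1-(1-p)\bigr)\|\psi_t^B\|^2$. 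The increment therefore becomes
\[
H_{t+1}^B-H_t^B=4(1-p)\Bigl(\mathrm{Re}\langle\psi_{t,B}^0|\psi_t^B\rangle-p\,\|\psi_t^B\|^2\Bigr).
\]
The negative term is the leakage out of the common environment branch; it is what makes a per-step bound of order $1/p$ possible.

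The last step applies Cauchy--Schwarz with $a=\|\psi_{t,B}^0\|$ and $b=\|\psi_t^B\|$, bounding the increment by $4(1-p)(ab-pb^2)$. Maximizing the quadratic over $b\ge 0$ gives the maximum $a^2/(4p)$, attained at $b=a/(2p)$. Hence $H_{t+1}^B-H_t^B\le \frac{1-p}{p}\|\psi_{t,B}^0\|^2$, as claimed.

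The only delicate point, and the step to double-check, is the support argument in the second step. One must verify that the cross term involves only the $B$-block of the center state and not the whole of $\ket{\phi_t^0}$. One must also confirm that the center's negligent query is exactly the identity on the tracked branch, which is where the choice of rotated environment basis with $\ket{0}_E=\sqrt{1-p}\ket{s}+\sqrt p\ket{f}$ is used. The remainder is routine algebra.
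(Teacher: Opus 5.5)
Your proposal is correct and follows essentially the same route as the paper: expand $\bigl\|\ket{\phi_t^0}-\ket{\phi_t^B}+2(1-p)\ket{\psi_t^B}\bigr\|^2$, use that $\ket{\psi_t^B}$ is supported on indices in $B$ to reduce the cross term to $\langle\psi_{t,B}^0|\psi_t^B\rangle$ plus a $-4p(1-p)\|\psi_t^B\|^2$ term, then apply Cauchy--Schwarz and maximize the quadratic in $\|\psi_t^B\|$. Your explicit support argument for the cross term is the step the paper leaves implicit, and your algebra for it checks out.
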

\begin{proof} By the definition of $H_{t+1}^y$ we have
\begin{align*}
H_{t+1}^B &= \left\|\ket{\phi_{t+1}^0}-\ket{\phi_{t+1}^B}\right\|^2 \\
&= \left\|U_{t+1}\ket{\widetilde{\phi}_{t+1}^0}-U_{t+1}\ket{\widetilde{\phi}_{t+1}^B}\right\|^2 \\ 
&= \left\|\ket{\phi_t^0}-\ket{\phi_t^B} + 2(1-p) \ket{\psi_t^B}\right\|^2 \\
&= H_t^B-4(1-p)\left\|\ket{\psi_{t}^B}\right\|^2+4(1-p)^2\left\|\ket{\psi_{t}^B}\right\|^2 +2(1-p)\langle{\psi_t^B}|{\psi_{t, B}^0}\rangle+2(1-p)\langle{\psi_{t,B}^0}|{\psi_{t}^B}\rangle \\
&\leq H_t^B-4p(1-p)\left\|\ket{\psi_t^B}\right\|^2+4(1-p)\left\|\ket{\psi_t^B}\right\|\left\|\ket{\psi_{t, B}^0}\right\| \\
&\leq H_t^B+\frac{1-p}{p} \left\|\ket{\psi_{t, B}^0}\right\|^2.
\end{align*}
The last inequality holds in maximizing the quadratic expression over $\left\|\ket{\psi_{t}^B}\right\|$.
\end{proof}

Now we are ready to obtain an upper-bound on the increase of the progress measure $H_{t+1}-H_t$.
In particular, the progress measure can not increase too much because of the constraints set on the weights $w_B$ in the linear program defined by Eq.~\eqref{ref:lp}.

\begin{lemma}[One-step growth: negligent]\label{lem:faultHt} For all $t\in\{0,\ldots T-1\}$,
     \[H_{t+1}-H_{t} \leq \frac{1-p}{pW},\]
     where $W = \sum_{B: f((0^n)^B) =1}w_B$.
 \end{lemma}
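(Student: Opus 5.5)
The plan is to sum Lemma~\ref{lem:faultinc} over the sensitive blocks with weights $\pi_B$, and then use the LP constraint to control the total weighted mass of the center's query state on the sensitive blocks. By the definition of $H_t$ and Lemma~\ref{lem:faultinc},
\[
H_{t+1}-H_t=\sum_{B:f((0^n)^B)=1}\pi_B\bigl(H^B_{t+1}-H^B_t\bigr)\le \frac{1-p}{p}\sum_{B:f((0^n)^B)=1}\pi_B\left\|\ket{\psi^0_{t,B}}\right\|^2 .
\]
The key point is that $\ket{\psi^0_{t,B}}$ depends only on the center state $\ket{\phi^0_t}$, which is a single normalized vector independent of $B$. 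So the remaining sum is an average of projections of one fixed vector, and no per-leaf information is needed.

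Next I expand the norm. Writing $\ket{\phi_t^0}=\sum_i\ket{i,\alpha^0_{t,i}}$, the vectors $\ket{i,\alpha^0_{t,i}}$ are mutually orthogonal for distinct $i$, so $\|\ket{\psi^0_{t,B}}\|^2=\sum_{i\in B}\|\alpha^0_{t,i}\|^2$. Exchanging the order of summation gives
\[
\sum_B\pi_B\sum_{i\in B}\|\alpha^0_{t,i}\|^2=\frac1W\sum_{i=1}^n\|\alpha^0_{t,i}\|^2\sum_{B\ni i}w_B\le\frac1W\sum_{i=1}^n\|\alpha^0_{t,i}\|^2 ,
\]
where the inequality uses the LP constraint $\sum_{B\ni i}w_B\le1$. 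Restricting to sensitive $B$ only decreases the inner sum, so this is harmless.

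Finally, $\sum_i\|\alpha^0_{t,i}\|^2=\|\ket{\phi^0_t}\|^2=1$. On the center the negligent step is $I-2(1-p)\Gamma_\varnothing=I$, so the common branch carries all the amplitude and the center state stays normalized; in any case $\|\ket{\phi^0_t}\|\le1$ suffices. Combining the three displays gives $H_{t+1}-H_t\le\frac{1-p}{pW}$. There is no real obstacle here: the only point needing care is that the bound from Lemma~\ref{lem:faultinc} involves the center's state rather than the leaf's, and this is exactly what makes the weighted sum collapse via the LP constraint.
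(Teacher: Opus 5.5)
Your proposal is correct and follows essentially the same route as the paper. Both weight Lemma~\ref{lem:faultinc} by $\pi_B$, expand $\|\ket{\psi^0_{t,B}}\|^2=\sum_{i\in B}\|\alpha^0_{t,i}\|^2$, exchange the order of summation, and conclude with the LP constraint $\sum_{B\ni i}w_B\le 1$ and $\sum_i\|\alpha^0_{t,i}\|^2\le 1$.
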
 
 \begin{proof} Using $\pi_B = \frac{w_B}{W}$ and by the definition of $H_t$,
    \begin{align*}
        H_{t+1}-H_{t} &= \sum_{B: f((0^n)^B)=1} \pi_B (H^{B}_{t+1}-H^B_t) \\
        &\leq  \sum_{B: f((0^n)^B)=1} \pi_B\left(\frac{1-p}{p}\right) \left\|\ket{\psi_{t, B}^0}\right\|^2 \\        &=  \sum_{B: f((0^n)^B)=1} \pi_B \left(\frac{1-p}{p}\right) \left(\sum_{i\,:\,i \in B}\left\|\alpha_{t, i}^0\right\|^2\right) \\
        &= \frac{1-p}{pW} \sum_{i=1}^n \left\|\alpha_{t, i}^0\right\|^2\left(\sum_{B:i \in B}w_B\right) \\  
        &\leq \frac{1-p}{pW}.
    \end{align*}
 The first inequality follows from Lemma~\ref{lem:faultinc}. The second inequality follows from the fact that for a fixed $t$, $\sum_i \|\alpha_{t, i}^0\|^2 \leq 1$ and that $\sum_{B:i \in B}w_B \leq 1$. 
 \end{proof}

Finally, we combine Lemma~\ref{lem:progresslb} and~\ref{lem:faultHt} to prove Theorem~\ref{thm:neg-anchor}.

 \begin{proof}[Proof of Theorem~\ref{thm:neg-anchor}]
 By Lemma~\ref{lem:faultHt} we have \[H_T\leq \sum_{t=0}^{T-1}\frac{1-p}{pW} \,=\, \frac{1-p}{pW}T.\]
 Then that by Lemma~\ref{lem:progresslb}, and that $W=\fbs(f,0^n)$ we have
 \[T\geq \frac{p(1-4\sqrt{\epsilon})}{1-p}\,\mathsf{fbs}(f, 0^n)\]
 as desired.
 \end{proof}

\paragraph{Failure of translation covariance.}
The negligent-oracle model is not invariant under the standard XOR translation used to move an arbitrary
fractional-block-sensitivity maximizer to $0^n$: after correcting by the known translation oracle, the
fault branch is generally a nontrivial known oracle rather than the identity channel.

\subsection{Partial-function counterexample}

Finally, we prove Theorem~\ref{thm:neg-partial-lifting}.

\begin{proof}[Proof of Theorem~\ref{thm:neg-partial-lifting}]
For each input $x \in  \{0, 1\}^n$ to $f$, we identify with an input $x' \in \{0, 1\}^{2n}$ to $f'$ as follows.
For each $i$, we set $x'_{2i-1}=0 x'_{2i} = 1$ if $x_i = 0$ and $x'_{2i-1}=1, x'_{2i} = 0$ if $x_i = 1$.
Finally, we set $f'(x')$ to be $f(x)$, if $f(x)$ is defined.
For all other inputs, we leave the value of $f'$ to be undefined.

It is easy to see that any randomized algorithm for $f'$ can be turned into a randomized algorithm for $f$ with the same query complexity, and vice versa.
In particular, every query for $f'$ can be simulated by a query for $f$, and every query for $f$ can be simulated by a query for $f'$.

Let $Z = \begin{bmatrix} 1 & 0 \\ 0 & -1 \end{bmatrix}$.
Then it follows that $O_{x'} = O_x \otimes Z$ for every $x$.
It turns out that this implies that on the inputs for which $f$ is defined, the result of the negligent oracle can tell us whether or not the query was applied.
It will be helpful to divide the space into the logical register, which $O_x$ operates on, and the flag register, which $Z$ operates on.

We can obtain an algorithm for $f'$ using negligent oracles from a quantum query algorithm for $f$ as follows.
Each unitary $U$ applied between oracle calls gets applied again to the logical register.

Each time we make a query to the input in the algorithm to compute $f$, we do the following.
We prepare the flag register in the state $\ket{+}=\frac{\ket{0}+\ket{1}}{2}$, and apply the negligent oracle.
We then measure the flag register, and measure the output in the basis $\{\ket{+}, \ket{-}\}$ (where $\ket{-}=\frac{\ket{0}-\ket{1}}{2}$).
If the measurement yields $\ket{-}$, the oracle succeeded and the algorithm continues.
Otherwise, if measurement yields $\ket{+}$, the oracle failed and we repeat until the oracle succeeds.

The algorithm to compute $f'$ succeeds with probability $1-(\eps+1/10)$ as long as the corresponding algorithm to compute $f$ succeeds with probability $1-(\eps+1/10)$, using an expected $\frac{1}{1-p}Q_{\eps+1/10}(f)$ queries.
This does not exceed $\frac{10}{1-p}Q_{\eps+1/10}(f)/(1-p)$ with probability at most $9/10$ by Markov's inequality.
Thus, stopping this process after $\frac{10}{1-p}Q_{\eps+1/10}(f)/(1-p)$ steps yields an algorithm that succeeds with probability $\eps$ as desired.
\end{proof}

Theorem~\ref{thm:neg-partial-lifting} demonstrates that there are some functions for which the introduction of the negligent oracle model does not increase the number of queries required to compute the function.
Strikingly, this means that even in the negligent oracle model, there are partial functions that can be computed using $O(1)$ queries, but require $\Omega(N^{1-\eps})$ queries in the randomized query model~\cite{BS21,SSW23}.

\section{Hybrid Quantum-Classical Queries}
\label{sec:hybrid}

In this section, we prove Theorem~\ref{thm:hybrid}, our lower bound for the hybrid query model.

Recall that the hybrid query model, the hybrid algorithm is split into pseudo-classical and quantum queries. We let $C$ be the number of pseudo-classical queries and $Q$ be the number of quantum queries, such that $C + Q = T$ where $T$ is the total number of queries made by the algorithm. 
We also let $T_c \subseteq \{1,\ldots,T\}$ be the set of queries that are pseudo-classical and let $T_q := \{1,\ldots,T\} \backslash T_c$ be the set of queries that are quantum. 
Note that $|T_c| = C$ and $|T_q| = Q$.

For each $B$, define
\[
P^B_0 = \sum_{i: i \not\in B} \ket{i}\bra{i} \text{ and } P^B_1 = \sum_{i: i \in B} \ket{i}\bra{i}.
\]

For the pseudo-classical queries, on input $(0^n)^B$ the oracle acts as
\[P^B = P^B_0 \otimes \ket{0}+P^B_1 \otimes \ket{1}.\]  

We let $\ket{\phi_{t}^B}$ evolve as 
\begin{align*}
    \ket{\widetilde{\phi}^B_{t+1}} = P^B_0\ket{\phi_t^y} =\ket{\phi_t^B} - \ket{\psi_t^B}.
\end{align*}

For quantum queries, the oracle acts as usual, and in particular
\[O_B = \sum_{i} (-1)^{\mathbf{1}_{i \in B}}\ket{i}\bra{i},\]  
and the vector $\ket{\phi_{t}^B}$ evolves as 
\begin{align*}
    \ket{\widetilde{\phi}^B_{t+1}} = O_B\ket{\phi_t^y} = \ket{\phi_t^B} - 2\ket{\psi_t^B}.
\end{align*}

Below we bound the progress with respect to a pseudo-classical query (Lemma~\ref{lem:hybridp}) and a quantum query (Lemma~\ref{lem:hybridq}) respectively. 
Part of our proof closely follows Claim 5 in \cite{Ansis22}.


\begin{lemma}[One-step growth: pseudo-classical]
\label{lem:hybridp}
    For all $B \subseteq [n]$ and $t\in T_c$ we have \[H_{t+1}^B - H_{t}^B \leq \left\|\ket{\psi_{t,B}^0}\right\|^2 \,\,\,\, \text{ and } \,\,\,\, H_{t+1} - H_{t} \leq \frac{1}{W}\]
    where $W = \sum_{B: f((0^n)^B) =1}w_B$.
\end{lemma}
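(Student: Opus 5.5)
The plan mirrors the proof of Lemma~\ref{lem:faultinc}, specialized to the pseudo-classical update. First I will note the evolution of the center. On input $0^n$ the pseudo-classical oracle always records answer $0$, so the tracked branch satisfies $\ket{\widetilde{\phi}_{t+1}^0}=\ket{\phi_t^0}$. For the leaf, the update is $\ket{\widetilde{\phi}_{t+1}^B}=\ket{\phi_t^B}-\ket{\psi_t^B}$. The next unitary $U_{t+1}$ is input-independent and preserves Euclidean distance, so
\[
H_{t+1}^B=\bigl\|\ket{\phi_t^0}-\ket{\phi_t^B}+\ket{\psi_t^B}\bigr\|^2 .
\]

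Expanding this square gives
\[
H_t^B+\|\psi_t^B\|^2+2\,\mathrm{Re}\langle \phi_t^0-\phi_t^B|\psi_t^B\rangle .
\]
The key observation is that $\ket{\psi_t^B}$ is supported on the indices in $B$. Hence $\langle\phi_t^B|\psi_t^B\rangle=\|\psi_t^B\|^2$ and $\langle\phi_t^0|\psi_t^B\rangle=\langle\psi_{t,B}^0|\psi_t^B\rangle$. Substituting these, the increment becomes
\[
H_{t+1}^B-H_t^B=-\|\psi_t^B\|^2+2\,\mathrm{Re}\langle\psi_{t,B}^0|\psi_t^B\rangle .
\]
By Cauchy--Schwarz this is at most $-a^2+2ab$ with $a=\|\psi_t^B\|$ and $b=\|\psi_{t,B}^0\|$. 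Maximizing over $a$ gives at most $b^2$, which is the first claim. (Equivalently, the increment equals $\|\psi_{t,B}^0\|^2-\|\psi_{t,B}^0-\psi_t^B\|^2$.)

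For the averaged bound I will follow Lemma~\ref{lem:faultHt} verbatim. Write $\|\psi_{t,B}^0\|^2=\sum_{i\in B}\|\alpha_{t,i}^0\|^2$ and sum against $\pi_B=w_B/W$. Swapping the order of summation gives
\[
\frac1W\sum_i\|\alpha_{t,i}^0\|^2\sum_{B\ni i}w_B\le \frac1W ,
\]
using the LP constraint $\sum_{B\ni i}w_B\le1$ and the normalization $\sum_i\|\alpha_{t,i}^0\|^2\le1$.

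The only delicate point is the justification that the center's tracked vector is unchanged, i.e.\ that it stays normalized and equals the actual center state. This holds because $P_0^{\varnothing}=I$, so the center never leaks into the answer-$1$ branch. The subnormalized-leaf convention $\rho_t^B=\ket{\phi_t^B}\bra{\phi_t^B}+\sigma_t^B$ is already granted in the preliminaries, so no further obstacle is expected.
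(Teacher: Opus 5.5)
Your proposal is correct and follows the paper's proof essentially step for step. You strip off $U_{t+1}$, expand $\bigl\|\ket{\phi_t^0}-\ket{\phi_t^B}+\ket{\psi_t^B}\bigr\|^2$, maximize the resulting quadratic in $\|\ket{\psi_t^B}\|$, and then swap the order of summation using $\sum_{B\ni i}w_B\le 1$. Your cross-term bookkeeping, $-a^2+2ab$ (equivalently the exact identity $\|\psi_{t,B}^0\|^2-\|\psi_{t,B}^0-\psi_t^B\|^2$), is the clean version of the paper's displayed expansion: the paper has coefficient slips in the cross terms but reaches the same bound $\|\ket{\psi_{t,B}^0}\|^2$.
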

\begin{proof} By the definition of progress measure,
\begin{align*}
H_{t+1}^B &= \left\|\ket{\phi_{t+1}^0}-\ket{\phi_{t+1}^B}\right\|^2 \\
&= \left\|U_{t+1}\ket{\widetilde{\phi}_{t+1}^0}-U_{t+1}\ket{\widetilde{\phi}_{t+1}^B}\right\|^2 \\ 
&= \left\|P^0_0 \ket{\widetilde{\phi}_{t+1}^0}-P^B_0 \ket{\widetilde{\phi}_{t+1}^B}\right\|^2  \\
&= \left\|\ket{\phi_t^0}-\ket{\phi_t^B} +  \ket{\psi_t^B}\right\|^2 \\
&= H_t^B-2\|\ket{\psi_{t}^B}\|^2+\|\ket{\psi_{t}^B}\|^2+2\langle{\psi_t^B}|{\psi_{t, B}^0}\rangle+\langle{\psi_{t,B}^0}|{\psi_{t}^B}\rangle \\
&\leq H_t^B-\|\ket{\psi_t^B}\|^2+\|\ket{\psi_t^B}\|\|\ket{\psi_{t, B}^0}\| \\
&\leq H_t^B+\|\ket{\psi_{t, B}^0}\|^2.
\end{align*}
The last inequality holds from maximizing the quadratic expression over $\|\ket{\psi_{t}^B}\|$.

Then, by the definition of $H_t$,
\begin{align*}
    H_{t+1}-H_{t} &= \sum_{B: f((0^n)^B)=1} \pi_B (H^{B}_{t+1}-H^B_t) \\
    &\leq  \sum_{B: f((0^n)^B)=1} \pi_B\|\ket{\psi_{t, B}^0}\|^2 \\
    &=  \sum_{B: f((0^n)^B)=1} \pi_B \bigg(\sum_{i\,:\,i \in B}\|\alpha_{t, i}^0\|^2\bigg) \\
    &= \frac{1}{W}\sum_i \|\alpha_{t, i}^0\|^2\sum_{B:i \in B}w_B \\  
    &\leq \frac{1}{W}.
    \end{align*}
 The final inequality follows from the fact that $\sum_i \|\alpha_{t, i}^0\|^2 \leq 1$ and that $\sum_{B:i \in B}w_B \leq 1$. 
\end{proof}


\begin{lemma}[One-step growth: quantum]
\label{lem:hybridq}
    For all $B\text{ such that } f((0^n)^B)=1$ and $t\in T_Q$ we have \[H_{t+1}^B - H_{t}^B \leq 4\|\ket{\psi_{t}^B}\|\|\ket{\psi_{t,B}^0}\| \,\,\,\,\text{ and }\,\,\,\, H_{t+1} - H_{t} \leq \frac{4}{\sqrt{W}}.\]
\end{lemma}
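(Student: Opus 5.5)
The proof mirrors Lemma~\ref{lem:hybridp}: expand the new distance after the query, then average over blocks using the packing constraint. The difference from the pseudo-classical case is that a coherent query produces no negative quadratic term, so the per-step gain is linear in amplitude rather than quadratic.

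\emph{Step 1: the per-block bound.} Since $U_{t+1}$ is unitary, it preserves distances, so $H_{t+1}^B=\|\ket{\widetilde\phi_{t+1}^0}-\ket{\widetilde\phi_{t+1}^B}\|^2$. On the center $0^n$ the quantum oracle is the identity, so $\ket{\widetilde\phi^0_{t+1}}=\ket{\phi_t^0}$. On the leaf, $\ket{\widetilde\phi^B_{t+1}}=\ket{\phi_t^B}-2\ket{\psi_t^B}$. Hence
\[
H_{t+1}^B=\bigl\|\ket{\phi_t^0}-\ket{\phi_t^B}+2\ket{\psi_t^B}\bigr\|^2 .
\]
Expanding, we get $H_t^B+4\|\psi_t^B\|^2+4\,\mathrm{Re}\langle\phi_t^0-\phi_t^B|\psi_t^B\rangle$. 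The vector $\ket{\psi_t^B}$ is supported on indices in $B$, and on those indices $\ket{\phi_t^B}$ restricts to $\ket{\psi_t^B}$ while $\ket{\phi_t^0}$ restricts to $\ket{\psi^0_{t,B}}$. So the cross term equals $4\,\mathrm{Re}\langle\psi^0_{t,B}|\psi_t^B\rangle-4\|\psi_t^B\|^2$. The $\pm 4\|\psi_t^B\|^2$ terms cancel, and Cauchy--Schwarz gives
\[
H_{t+1}^B-H_t^B\le 4\|\psi_t^B\|\,\|\psi^0_{t,B}\|,
\]
which is the first claim.

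\emph{Step 2: averaging over blocks.} Sum with weights $\pi_B$ and apply Cauchy--Schwarz over $B$:
\[
H_{t+1}-H_t\le 4\Bigl(\sum_B\pi_B\|\psi_t^B\|^2\Bigr)^{1/2}\Bigl(\sum_B\pi_B\|\psi^0_{t,B}\|^2\Bigr)^{1/2}.
\]
The first factor is at most $1$, since each $\|\psi_t^B\|\le\|\phi_t^B\|\le 1$ and $\sum_B\pi_B=1$. For the second factor, exchange the order of summation exactly as in Lemma~\ref{lem:hybridp}:
\[
\sum_B\pi_B\|\psi^0_{t,B}\|^2=\frac1W\sum_i\|\alpha^0_{t,i}\|^2\sum_{B\ni i}w_B\le\frac1W .
\]
This yields $H_{t+1}-H_t\le 4/\sqrt W$.

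\emph{Main obstacle.} The only delicate point is Step 1: one must check that the cross terms cancel exactly. This relies on $\ket{\psi_t^B}$ being the restriction of $\ket{\phi_t^B}$ to $B$, which makes $\langle\phi_t^B|\psi_t^B\rangle=\|\psi_t^B\|^2$. The hybrid bookkeeping also needs care. After an earlier pseudo-classical query, $\ket{\phi_t^B}$ is subnormalized and lives on the common branch, but the identities used above continue to hold for subnormalized vectors. Likewise, on the center the pseudo-classical branch $P_0^0$ is the identity. So the same computation applies verbatim at every step $t\in T_q$.
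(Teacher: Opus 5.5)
Your proof is correct and follows the paper's argument: expand the post-query distance using $O_{0^n}=I$ and $\ket{\widetilde\phi^B_{t+1}}=\ket{\phi_t^B}-2\ket{\psi_t^B}$, bound the cross term by Cauchy--Schwarz, then apply Cauchy--Schwarz over $B$ together with the packing bound $\sum_B\pi_B\|\psi^0_{t,B}\|^2\le 1/W$ from Lemma~\ref{lem:hybridp}. Your version is cleaner than the paper's displayed chain in two places: you make the cancellation of the $\pm4\|\psi_t^B\|^2$ terms explicit, which the paper leaves implicit, and you attach the $1/\sqrt W$ factor correctly, whereas the paper momentarily writes $4\sqrt{\sum_B\pi_B}$ where $\tfrac{4}{\sqrt W}\sqrt{\sum_B\pi_B\|\psi_t^B\|^2}$ is meant.
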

\begin{proof} By the definition of $H_{t+1}^B$ we have,
\begin{align*}
H_{t+1}^B &= \left\|\ket{\phi_{t+1}^0}-\ket{\phi_{t+1}^B}\right\|^2 \\
&= \left\|U_{t+1}\ket{\widetilde{\phi}_{t+1}^0}-U_{t+1}\ket{\widetilde{\phi}_{t+1}^B}\right\|^2 \\ 
&= \left\|O_0\ket{\widetilde{\phi}_{t+1}^0}-O_B\ket{\widetilde{\phi}_{t+1}^B}\right\|^2 \\
&= \left\|\ket{\phi_t^0}-\ket{\phi_t^B} +  2\ket{\psi_t^B}\right\|^2 \\
&= H_t+4\|\ket{\psi_{t}^B}\|^2+2\langle{\phi_t^0-\phi_t^B}\,|\,{\psi_{t}^B}\rangle+2\langle{\psi_{t}^B}\,|\,{\phi_t^0-\phi_t^B}\rangle \\
&\leq H_t+4\|\ket{\psi_{t}^B}\|\|\ket{\psi_{t,B}^0}\|.
\end{align*}
By the definition of $H_t$ we have,
\begin{align*}
    H_{t+1}-H_{t} &= \sum_{B: f((0^n)^B)=1} \pi_B (H^{B}_{t+1}-H^B_t) \\
    &\leq  4\sum_{B: f((0^n)^B)=1} \pi_B\left\|\ket{\psi^{0}_{t,B}}\right\|\left\|\ket{\psi^{B}_{t}}\right\| \\
    &= 4\sum_{B: f((0^n)^B)=1} \sqrt{\pi_B^2\left\|\ket{\psi^{0}_{t,B}}\right\|^2\left\|\ket{\psi^{B}_{t}}\right\|^2} \\
    &\leq 4\sqrt{\sum_{B: f((0^n)^B)=1}\pi_B\left\|\ket{\psi^{0}_{t,B}}\right\|^2}\sqrt{\sum_{y: f(y(0^n)^B)=1}\pi_B\left\|\ket{\psi^{B}_{t}}\right\|^2} \\
    &\leq 4\sqrt{\sum_{B: f((0^n)^B)=1}\pi_B\left\|\ket{\psi^{B}_{t}}\right\|^2} \\
    &\leq 4\sqrt{\sum_{B: f((0^n)^B)=1}\pi_B} \\
    &= \frac{4}{\sqrt{W}}.
\end{align*}
The penultimate inequality follows from Lemma~\ref{lem:hybridp}.
The last inequality follows from the fact that $\left\|\ket{\psi^{B}_{t}}\right\|^2 \leq 1$.
The final equality follows from the linear program in Eq.~\eqref{ref:lp} and the definition of fractional block sensitivity.
\end{proof}

Next, we combine Lemmas~\ref{lem:progresslb},~\ref{lem:hybridp} and~\ref{lem:hybridq} to prove a version of Theorem~\ref{thm:hybrid} that considers the fractional block sensitivity at $0^n$.

\begin{theorem}
Any bounded-error algorithm computing $f$ with probability at least $1-\epsilon$ using $C$ pseudo-classical queries and $Q$ coherent quantum
queries satisfies
\[
  C+Q\sqrt{\fbs(f, 0^n)}=\Omega_\eps(\fbs(f, 0^n)).
\]
\label{thm:hybrid:helper}
\end{theorem}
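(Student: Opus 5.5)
The plan is a telescoping argument on the progress measure $H_t$, applying the appropriate one-step bound at each query. We fix an optimal solution $\{w_B\}$ of the linear program~\eqref{ref:lp} at $x=0^n$, so that $W=\fbs(f,0^n)$. We may assume $f(0^n)=0$; otherwise we replace $f$ by its negation, which changes neither the algorithm's cost nor $\fbs(f,0^n)$. The potential $H_t=\sum_B \pi_B\|\ket{\phi^0_t}-\ket{\phi^B_t}\|^2$ is then defined with the subnormalized branch vectors, where each query step uses the pseudo-classical update $\ket{\phi}\mapsto P^B_0\ket{\phi}$ or the quantum update $\ket{\phi}\mapsto O_B\ket{\phi}$ according to whether the step lies in $T_c$ or $T_q$. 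The preliminary fact $\rho^B_t=\ket{\phi^B_t}\bra{\phi^B_t}+\sigma^B_t$ with $\sigma^B_t\succeq 0$ holds for this mixed sequence as well, since both branch updates are Kraus components of the respective channels. This is the property Lemma~\ref{lem:progresslb} needs. At the center, both query types act as the identity on $\ket{\phi^0_t}$, so the center vector remains a normalized pure state.

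We write
\[
H_T = H_0 + \sum_{t\in T_c}(H_{t+1}-H_t) + \sum_{t\in T_q}(H_{t+1}-H_t),
\]
with $H_0=0$ because all branches start in the same initial state. Lemma~\ref{lem:hybridp} bounds each pseudo-classical increment by $1/W$. Lemma~\ref{lem:hybridq} bounds each quantum increment by $4/\sqrt W$. Together these give
\[
H_T\le \frac{C}{W}+\frac{4Q}{\sqrt W}.
\]
Lemma~\ref{lem:progresslb} gives $H_T\ge 1-4\sqrt\eps$. Multiplying through by $W$ yields
\[
C+4Q\sqrt{W}\ge (1-4\sqrt\eps)\,W,
\]
which is the claimed inequality $C+Q\sqrt{\fbs(f,0^n)}=\Omega_\eps(\fbs(f,0^n))$ for $\eps<1/16$. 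Smaller constant error can be reached from any $\eps<1/2$ by standard repetition, which costs only a constant factor in both $C$ and $Q$.

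The main point needing care is that the two lemmas are applied to a potential built from one consistent family of branch vectors across the mixed sequence of queries. The proof of Lemma~\ref{lem:hybridq} uses the bound $\|\ket{\phi^0_t}-\ket{\phi^B_t}\|$-type inner products only through $\ket{\psi^0_{t,B}}$ and $\ket{\psi^B_t}$, both of which have norm at most $1$ regardless of which query types came earlier. So the interleaving is harmless, and the local lemmas compose directly.

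Finally, to see that this implies $C+Q^2=\Omega_\eps(W)$, we split into two cases. If $C\ge (1-4\sqrt\eps)W/2$ we are done immediately. Otherwise $4Q\sqrt W\ge (1-4\sqrt\eps)W/2$, so $Q=\Omega_\eps(\sqrt W)$ and hence $Q^2=\Omega_\eps(W)$. Combining this with the translation covariance of Lemma~\ref{lem:hybrid-translation-cq} then lifts the bound from $\fbs(f,0^n)$ to $\fbs(f)$, giving Theorem~\ref{thm:hybrid}.
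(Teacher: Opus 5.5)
Your proof is correct and follows the paper's proof. Like the paper, you telescope $H_t$ over the interleaved queries, charge $1/W$ per pseudo-classical step (Lemma~\ref{lem:hybridp}) and $4/\sqrt{W}$ per quantum step (Lemma~\ref{lem:hybridq}), and compare $C/W+4Q/\sqrt{W}$ with the terminal bound $1-4\sqrt{\eps}$ of Lemma~\ref{lem:progresslb}; your additions (normalizing to $f(0^n)=0$, checking that $\rho^B_t=\ket{\phi^B_t}\bra{\phi^B_t}+\sigma^B_t$ survives mixing query types, and amplifying to $\eps<1/16$) only make explicit details the paper leaves implicit. One remark outside this statement: Lemma~\ref{lem:hybrid-translation-cq} covers only quantum and classical queries, and a pseudo-classical query to $x\oplus a$ decoheres a different index partition than one to $x$, so it cannot in general be simulated exactly from it; your closing lift to $\fbs(f)$ is therefore justified as written for classical-plus-quantum algorithms, not pseudo-classical ones.
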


\begin{proof}
 For all $t\in T_c$, by Lemma~\ref{lem:hybridp} we have
 \[H_{C} \leq \sum_{t\in T_c}\frac{1}{W} \, = \, \frac{C}{W}\,.\]
 Similarly, for all $t\in T_q$, by Lemma~\ref{lem:hybridq} we have
 \[H_{Q} \leq \sum_{t\in T_q}\frac{4}{\sqrt{W}} \, = \, \frac{4Q}{\sqrt{W}}\,.\]
 For the hybrid algorithm since $C + Q = T$, for all $t\in T$ we have
 \begin{align*}
     H_T &= \sum_{t\in T_c}H_t + \sum_{t\in T_q}H_t \\
     &\leq H_{C}+H_{Q} \\
     &\leq \frac{C}{W} + \frac{4Q}{\sqrt{W}}\,.
 \end{align*}
 Using Lemma~\ref{lem:progresslb} and multiplying $W = \fbs(f,0^n)$ we have,
 \[C + 4\sqrt{\mathsf{fbs}(f, 0^n)}\,Q \geq \mathsf{fbs}(f, 0^n)(1-4\sqrt{\epsilon}).\]
 \end{proof}

\begin{proof}[Proof of Theorem~\ref{thm:hybrid}]

First due to translation covariance (\cref{lem:hybrid-translation-cq}) for both quantum and classical queries, \cref{thm:hybrid:helper} hence implies that 

\[C+ Q\sqrt{\fbs(f)}=\Omega_\eps(\fbs(f)) \, .\]
We are just left to show that this is equivalent to the expression in~\cref{thm:hybrid} that 

\[ C+Q^2=\Omega_\varepsilon(\fbs(f)) \,. \]

Suppose \( C+Q\sqrt{\fbs(f)} \ge c_\varepsilon \fbs(f) \), then 

\[ \text{ either } \quad  C\ge \frac{c_\varepsilon}{2}\fbs(f); \quad \text{ or } \quad Q\sqrt{\fbs(f)}\ge \frac{c_\varepsilon}{2}\fbs(f)\Rightarrow Q^2\ge \frac{c_\varepsilon^2}{4}\fbs(f) \, . \]

Therefore
\[
C+Q^2
\ge
\min\left\{\frac{c_\varepsilon}{2},
\frac{c_\varepsilon^2}{4}\right\}\fbs(f)
=
\Omega_\varepsilon(\fbs(f)) \ , .\]

The other direction is similar.  Suppose \( C+Q^2\ge c_\varepsilon \fbs(f) \). 
\begin{itemize}

\item  If \(Q\le\sqrt{\fbs(f)}\), then \(C+Q\sqrt{\fbs(f)}
\ge C+Q^2
\ge c_\varepsilon \fbs(f)\). 
\item If \(Q>\sqrt W\), then \(Q\sqrt{\fbs(f)}>\fbs(f)\), and \( C+Q\sqrt{\fbs(f)}\ge \fbs(f)
\ge \min\{1,c_\varepsilon\} \fbs(f)\).
\end{itemize}
Hence \(
C + Q\sqrt{\fbs(f)}=\Omega_\varepsilon \fbs(f) \). 
\end{proof}

\section{IID Query-Index Dephasing}
\label{sec:iid}
In this section, we prove~\cref{thm:iid-intro}. Due to translation covariance~\cref{lem:deph-translation}, it suffices to prove it with respect to \(\fbs\) at \(0^n\). 

\begin{theorem}[IID dephasing bound]
\label{thm:iid-anchor}
For a function \(f\) and \(0<\eps <1/2\), iid query-index dephasing at rate $p$ obeys
\[
  Q^{\deph}_{p,\eps}(f) =\Omega_\eps\!\left(p\fbs(f,0^n)\right).
\]
\end{theorem}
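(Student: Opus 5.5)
We will follow the plan sketched in the technical overview. By \cref{lem:deph-translation} it suffices to treat the star centered at $0^n$, with sensitive blocks $B$ and weights $w_B$ attaining $W=\fbs(f,0^n)$, and $\pi_B=w_B/W$. The common-branch argument of \cref{sec:negligent} does not apply directly, because a dephasing event leaves a record of the queried index. So we purify everything.

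\textbf{Setup.} We introduce a truth register holding
\[
\tfrac{1}{\sqrt2}\Bigl(|0^n\rangle+\sum_B\sqrt{\pi_B}\,|x^B\rangle\Bigr),
\]
and let the oracle act controlled on it. We also purify each query, by coherently copying the query index into a fresh environment register on a dephasing event and writing a blank symbol otherwise. We then condition on the classical schedule $Z_1,\dots,Z_T$ of dephasing events, which is i.i.d.\ Bernoulli$(p)$.

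\textbf{Decomposition and potential.} For a record of hit indices $R$, let $\cB_R=\{B: B\cap R=\varnothing\}$. We decompose the truth space, on each record branch, into three orthogonal pieces:
\begin{itemize}[label={}]
\item $A_R$, spanned by $|0^n\rangle+\sum_{B\in\cB_R}\sqrt{\pi_B}|x^B\rangle$ suitably normalized;
\item $Q_R$, its orthogonal complement inside $\mathrm{span}\{|0^n\rangle,|x^B\rangle: B\in\cB_R\}$;
\item $C_R$, spanned by the leaves $B\notin\cB_R$.
\end{itemize}
Let $q_t$ and $c_t$ be the squared norms of the joint state projected onto $Q_R\otimes(\cdot)$ and $C_R\otimes(\cdot)$, summed over record branches, and set $\Psi_t=c_t+3q_t$.

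\textbf{Boundary values.} Initially $\Psi_0=0$. Correctness forces $\Psi_T=\Omega_\eps(1)$. To see this, note that if the state lay almost entirely in $A_R$ on every record branch, the center and the unresolved leaves would share essentially the same algorithm state. Then the reduced state on the truth register would have large coherence between $0^n$ and the $f=1$ leaves. The measurement argument of \cref{lem:progresslb}, i.e.\ trace distance versus fidelity, should convert an $\eps$-error algorithm into a constant lower bound on the mass outside $A_R$.

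\textbf{One-step bounds.} Consider a query at index $i$. The oracle acts on the truth register as a diagonal sign operator $\mathrm{diag}\bigl((-1)^{[i\in B]}\bigr)$. Its action maps $A_R$ to $A_R$ plus a component along the single direction
\[
v_i\propto\sum_{B\ni i,\,B\in\cB_R}\sqrt{\pi_B}|x^B\rangle .
\]
The overlap of this component with $A_R$ is at most $\sqrt{\sum_{B\ni i}\pi_B}\le 1/\sqrt W$, by the LP constraint. Let $a_t$ be the norm of the joint state's component along the active directions $v_i\otimes|i\rangle$, aggregated over $i$ by Cauchy--Schwarz exactly as in \cref{lem:hybridq}, where the constraint $\sum_{B\ni i}w_B\le1$ again contributes the factor $1/W$.

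For a coherent query we expect
\[
\Delta\Psi\le O(1/W)+O(a_t/\sqrt W):
\]
fresh leakage out of $A_R$ of squared size $1/W$, plus a cross term with the existing active component. For a dephased query, the record changes from $R$ to $R\cup\{i\}$. The unhit part of $v_i$ is then absorbed into the new $A_{R\cup\{i\}}$, and the hit part moves to $C_{R\cup\{i\}}$. The coefficient $3$ on $q$ versus $1$ on $c$ is chosen so that this transfer produces a net decrease of at least $a_t^2$. This gives
\[
\Delta\Psi\le O(1/W)+O(a_t/\sqrt W)-a_t^2 .
\]
Averaging over $Z_{t+1}$, which is independent of the past, yields
\[
\E\,\Delta\Psi_t\le O(1/W)+O(a_t/\sqrt W)-p\,a_t^2\le O\bigl(1/(pW)\bigr),
\]
after maximizing over $a_t$ (using $p\le1$). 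Summing over the $T$ steps gives $\Omega_\eps(1)\le\E\Psi_T\le O(T/(pW))$, hence $T=\Omega_\eps(pW)$.

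\textbf{Main obstacle.} The hard part is the dephased-step bookkeeping. One must verify that the redefinition of $A$, $Q$, $C$ under the record update really removes $\Theta(a_t^2)$ of potential, rather than merely moving it from $Q$ to $C$ at equal weight. One must also check that the non-active part of $Q_R$ is not inflated when the new common direction $A_{R\cup\{i\}}$ is re-normalized, since the weight of the unresolved leaves shrinks. I would first do this computation for OR with singleton blocks, where $A_R$, $v_i$ and the hit leaf are explicit. I would then tune the constant $3$, and the exact definition of $a_t$ as a sum over $i$ of branch masses, so that the general-block case reduces to that calculation with $\sum_{B\ni i}\pi_B\le1/W$ in place of $1/n$.
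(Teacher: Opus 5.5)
Your proposal follows the paper's proof essentially step for step. It uses the same purified weighted star with a hidden index record, the same $A_R/Q_R/C_R$ split with potential $\Psi=c+3q$, the same coherent and dephased one-step bounds averaged over the independent flag, and a terminal lower bound on $\E\Psi_T$. The bookkeeping you flag works out because the no-hit direction $\ell_{R,i}$ is exactly $u_{R\cup\{i\}}$ and the hit direction $h_{R,i}$ lies in $\cH_{C,R\cup\{i\}}$, so passive no-hit $Q$-mass already sits in $\cH_{Q,R\cup\{i\}}$ with no renormalization inflation, and passive hit mass only drops from weight $3$ to $1$. In the terminal step, the one ingredient beyond your sketch is a bound on the center's mass on records that hit a given leaf, which the paper obtains from the fact that the center and leaf conditional states both have unit norm.
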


\subsection{Coherent weighted-star potential analysis}
\label{sec:local-potential}

As before, fix a fractional sensitive-block packing $(w_B)$ at the center $0^n$, and denote
\[
  W=\sum_Bw_B,
  \qquad
  \pi_B=\frac{w_B}{W},
  \qquad
  \sum_B\pi_B=1.
\]

\begin{figure}[t]
\centering
\begin{tikzpicture}[
    x=1.05cm,
    y=0.68cm,
    >=latex,
    thick,
    every node/.style={font=\small}
]

\def\yR{5}
\def\yT{4}
\def\yN{3}
\def\yA{2}
\def\yY{1}
\def\yW{0}

\node[anchor=east] at (0,\yR) {$R$};
\node[anchor=east] at (0,\yT) {$T$};
\node[anchor=east] at (0,\yN) {$N_t$};
\node[anchor=east] at (0,\yA) {$I$};
\node[anchor=east] at (0,\yY) {$O$};
\node[anchor=east] at (0,\yW) {$W$};

\node[anchor=west] at (0.15,\yR+0.25) {\scriptsize $|r_{t-1}\rangle$};
\node[anchor=west] at (0.15,\yT+0.25) {\scriptsize $|u\rangle$};
\node[anchor=west] at (0.15,\yN+0.25)
    {\scriptsize $\sqrt{1-p}\,|0\rangle+\sqrt p\,|1\rangle$};

\foreach \y in {\yR,\yT,\yN,\yA,\yY,\yW}
    \draw (0.15,\y) -- (10.7,\y);

\draw[rounded corners=2pt, fill=white]
    (1.8,-0.35) rectangle (3.15,2.35);
\node at (2.475,1.0) {$U_t$};

\fill (4.45,\yT) circle (2.2pt);
\draw (4.45,\yT) -- (4.45,2.35);

\draw[rounded corners=2pt, fill=white]
    (3.85,0.65) rectangle (5.05,2.35);
\node at (4.45,1.50) {$O_T$};

\draw[dashed, rounded corners=3pt]
    (6.0,-0.55) rectangle (9.15,5.45);

\node at (7.575,5.75)
    {\scriptsize purified dephasing};

\fill (6.85,\yN) circle (2.2pt);

\draw (6.85,\yA) -- (6.85,\yR);
\fill (6.85,\yA) circle (2.2pt);

\draw[rounded corners=2pt, fill=white]
    (6.35,4.68) rectangle (7.35,5.32);
\node at (6.85,\yR) {\scriptsize rec};

\fill (8.25,\yN) circle (2.2pt);
\draw (8.25,\yN) -- (8.25,\yW);

\draw[rounded corners=2pt, fill=white]
    (7.75,-0.32) rectangle (8.75,0.32);
\node at (8.25,\yW) {\scriptsize flag};


\node[anchor=west] at (9.35,\yR+0.35)
    {\scriptsize hidden record};
\node[anchor=west] at (9.35,\yW + 0.35)
    {\scriptsize workspace + flags};

\end{tikzpicture}

\caption{
One step of the purified IID dephasing computation.
The truth register is initialized in the weighted-star state
\(
|u\rangle = \frac1{\sqrt2}
  \left(
    |0\rangle+\sum_B\sqrt{\pi_B}\,|B\rangle
  \right).
\)
After the algorithm unitary $U_t$ and the truth-controlled oracle $O_T$,
a fresh noise qubit coherently selects the undephased and dephased
branches.  On the dephased branch, the queried address is appended to
the inaccessible record $R$; the corresponding dephasing flag is stored
in the algorithm workspace $W$.
}
\label{fig:purified-iid-query}
\end{figure}

Figure~\ref{fig:purified-iid-query} illustrates one step of computation in our analysis. We initialize a truth table register in 
\[
  \ket{u} = \frac1{\sqrt2}
  \left(
    |0\rangle+\sum_B\sqrt{\pi_B}\,|B\rangle
  \right).
\]

Purify every dephasing by copying its query index into an inaccessible record.  For an exact record string $r$, let $R=R(r)$ be the set of indices that occur
in dephased slots and define
\[
  \cB_R=\{B:B\cap R=\varnothing\},
  \qquad
  b_R=\sum_{B\in\cB_R}\pi_B.
\]
In the truth register, set
\[
  |u_R\rangle
  =\frac{|0\rangle+\sum_{B\in\cB_R}\sqrt{\pi_B}\,|B\rangle}{\sqrt{1+b_R}},
  \qquad
  \cH_{A,R}=\operatorname{span}\{|u_R\rangle\},
\]
\[
  \cH_{C,R}=\operatorname{span}\{|B\rangle:B\notin \cB_R\},
  \qquad
  \cH_{Q,R}=(\cH_{A,R}\oplus\cH_{C,R})^\perp.
\]
The global $A/Q/C$ spaces are the orthogonal direct sums of these spaces over
the hidden records, tensored with all algorithm registers.  These subspaces are
defined by the original packing and the hidden record. Projectors onto these subspaces can be defined as 
\[
    \PiA{R} = \ketbra{u_R},
    \qquad
    \PiC{R} = \sum_{B\notin \cB_R}\ketbra{B},
    \qquad
    \PiQ{R} = I - (\PiA{R}+\PiC{R})\,.
\]

Let
\[
  q_t=\|\Pi_Q\Omega_t\|^2,
  \qquad
  c_t=\|\Pi_C\Omega_t\|^2,
  \qquad
  \Psi_t=c_t+3q_t,
\]
where $|\Omega_t\rangle$ denotes the normalized conditional truth--algorithm
state on a positive-probability accessible history\footnote{\(h_{t-1} = (Z_1, \ldots, Z_{t-1}, \text{ measurement outcomes, revealed coins/control data }, \ldots ) \), which is  a particular realized classical transcript up to that time.}. Initially the state lies
in $A$, so $\Psi_0=0$.

\paragraph{The active two-dimensional truth plane.}
Fix a record $R$ and a query index $i$, and define the normalized hit direction
\[
    \ket{h_{R,i}}
  :=
  \frac{1}{\sqrt{k_{R,i}}}
  \sum_{\substack{B\in\cB_R\\i\in B}}
       \sqrt{\pi_B}\,|B\rangle \, , \quad k_{R,i}=\sum_{\substack{B\in\cB_R\\i\in B}}\pi_B \, ,
\]
assuming $k_{R,i}>0$. Its overlap with the unresolved direction is
\[
  \langle u_R,h_{R,i}\rangle
  =
  \sqrt{\frac{k_{R,i}}{1+b_R}}
  =:\alpha_{R,i}.
\]
Put
\[
  \beta_{R,i}:=\sqrt{1-\alpha_{R,i}^2}.
\]
There is therefore a unit vector $\ket{e_{R,i}}\in\cH_{Q,R}$ such that
\begin{equation}
  \ket{h_{R,i}}
  =
  \alpha_{R,i}\ket{u_R}+\beta_{R,i}\ket{e_{R,i}}.
  \label{eq:pht-hit-decomposition}
\end{equation}
The vector $\ket{e_{R,i}}$ is the unique $Q$-direction that can exchange amplitude with the unresolved \(A\)-direction upon a query to index \(i\); we call it the \emph{active $Q$-direction}.
All other $Q$-directions are passive for this particular query. For a fixed $R$ and $i$, Projectors onto the active-Q-direction passive-Q-directions are
\[
\Pi^{Q,Act}_{R,i} = \ketbra{e_{R,i}}, 
\qquad
\Pi^{Q,Pas}_{R,i} = \Pi^Q_R - \Pi^{Q,Act}_{R,i}\,.
\]

The orthogonal no-hit direction is
\begin{equation}
  \ket{\ell_{R,i}}
  :=
  \beta_{R,i}\ket{u_R}-\alpha_{R,i}\ket{e_{R,i}}
  =
  \frac{
    |0\rangle+
    \sum_{\substack{B\in\cB_R\\i\notin B}}
      \sqrt{\pi_B}\,|B\rangle
  }{
    \sqrt{1+b_R-k_{R,i}}
  }.
  \label{eq:pht-nohit-direction}
\end{equation}
Hence
\[
  \operatorname{span}\{\ket{u_R},\ket{e_{R,i}}\}
  =
  \operatorname{span}\{\ket{\ell_{R,i}},\ket{h_{R,i}}\}.
\]

\begin{figure}[ht]
\centering
\begin{tikzpicture}[scale=2.1,>=Latex]
  \coordinate (O) at (0,0);

  \draw[->] (O) -- (2.15,0)
    node[above] {$\ket{u_R}\in {A_R}$};
  \draw[->] (O) -- (0,1.8)
    node[above] {$\ket{e_{R,i}}\in {Q_R}$};
  \def\hcoor{3.7}
  \def\ephi{18}

  \draw[->,thick]
    (O) -- ({1.75*cos(\ephi)},{-1.75*sin(\ephi)})
    node[right] {$\ket{\ell_{R,i}}$};

  \draw[->,thick]
    (O) -- ({1.55*sin(\ephi)},{1.55*cos(\ephi)})
    node[above right] {$\ket{h_{R,i}}$};

  \draw[dashed]
    (1.15,0) arc[start angle=0,end angle=-\ephi,radius=1.15];

  \node at (1.05,-0.18) {$\phi$};

  \node[align=left] at (\hcoor,1.35) {
    $\alpha=\sin\phi,\quad\beta=\cos\phi$\\[1mm]
    $h=\alpha u+\beta e$\\
    $\ell=\beta u-\alpha e$
  };

  \node[align=left] at (\hcoor,0.5) {
    coherent query:\\
    $D_i\ell=\ell,\quad D_i k=-k$
  };

  \node[align=left] at (\hcoor,-0.3) {
    after dephasing:\\
    $\ell\rightarrow A_{R\cup\{i\}}$\\
    $h\rightarrow C_{R\cup\{i\}}$
  };

\end{tikzpicture}
\caption{The active two-dimensional truth plane for record $R$ and queried
index $i$.  The basis $(\ket{u_R},\ket{e_{R,i}})$ is adapted to the old $A/Q$
decomposition, while $(\ket{\ell_{R,i}},\ket{h_{R,i}})$ is adapted to the no-hit/hit
decomposition of the query.}
\label{fig:pht-active-plane}
\end{figure}

Observe that a coherent quantum query
acts as a reflection in this two-dimensional plane about \(\ket{\ell_{R,i}}\), and we call it \(D_i\): 
\[
  D_i\ket{\ell_{R,i}}=\ket{\ell_{R,i}},
  \qquad
  D_i \ket{h_{R,i}}=-\ket{h_{R,i}}\, .
\]

On the other hand, if the query is subsequently dephased and the record is extended by $i$,
then
\[
  \ket{\ell_{R,i}}\in\cH_{A,R\cup\{i\}},
  \qquad
  \ket{h_{R,i}}\in\cH_{C,R\cup\{i\}}.
\]
As we will see, this is the source of the negative quadratic term in the
dephased-query potential bound. 

Let $a_t$ be the norm of the full active-$Q$ component, summed over all orthogonal record, query-index, answer, and workspace components. Namely, let 

\[a_t^2=\|\Pi^{\mathrm{Q, act}}|\Omega_t\rangle\|^2 \, \]
where 

\[
\Pi^{\mathrm{Q,act}}
=
\sum_{R,i}
|R\rangle\langle R|
\otimes
|i\rangle\langle i|_I
\otimes
|e_{R,i}\rangle\langle e_{R,i}|
\otimes I,
\]
i.e. \(a^2\) is the active \(B\)-mass, while \(a\) is its magnitude.  Then \(0\le a_t^2\le q_{t-1}\le1 \). 
  
\begin{lemma}[One-query growth bound]
\label{lem:local-potential}
Conditioned on the accessible history immediately before query~$t$,
\[
  \E_t[\Psi_t-\Psi_{t-1}]
  \le
  \frac{12}{W}
  +
  \frac{12a_t}{\sqrt W}
  -2p a_t^2.
\]
Here \(\E_t\) denotes expectation over the fresh independent dephasing choice at query \(t\), conditional on everything that has happened before that query.
\end{lemma}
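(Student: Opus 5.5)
We will bound the expected change of $\Psi$ over one step, conditioned on the accessible history: first the algorithm unitary $U_t$, then the truth-controlled oracle, then the fresh dephasing choice. Two preliminary observations come first.

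\emph{$U_t$ does nothing to the potential.} $U_t$ acts only on the algorithm registers. The projectors $\PiA{R},\PiQ{R},\PiC{R}$ act only on the truth register and are block-diagonal in the record $R$. Hence $U_t$ commutes with them and leaves $q_t$ and $c_t$ unchanged.

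\emph{Reduction to one plane.} Decompose $\ket{\Omega}$ into orthogonal components indexed by the record $R$, the query index $i$, and the answer/workspace basis. On each component the oracle acts on the truth register as $D_i=I-2\Pi_{\mathrm{hit},i}$, where $\Pi_{\mathrm{hit},i}$ projects onto $\operatorname{span}\{\ket{B}:i\in B\}$. We check three facts.
\begin{itemize}
\item $D_i$ fixes $\cH_{C,R}$ up to signs.
\item It maps the passive $Q$-space into itself isometrically. Indeed, for a passive $v$ one has $\langle u_R,\Pi_{\mathrm{hit},i}v\rangle\propto\langle h_{R,i},v\rangle=0$ and $\langle e_{R,i},\Pi_{\mathrm{hit},i}v\rangle\propto\langle h_{R,i},v\rangle=0$, using $\Pi_{\mathrm{hit},i}u_R=\alpha h$ and $\Pi_{\mathrm{hit},i}e_{R,i}=\beta h$.
\item It preserves the plane $\operatorname{span}\{u_R,e_{R,i}\}$.
\end{itemize}
So the whole change is confined to that plane. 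Write the plane component as $x\ket{u_R}+y\ket{e_{R,i}}$ with $|y|$ contributing to $a_t$. Using $D_iu=(1-2\alpha^2)u-2\alpha\beta e$ and $D_ie=-2\alpha\beta u+(1-2\beta^2)e$, the new $e$-coefficient is $-2\alpha\beta x+(1-2\beta^2)y$. Therefore the gain in $|e\text{-coef}|^2$ is at most $4\alpha^2|x|^2+4\alpha|x||y|$, since $|1-2\beta^2|\le1$.

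\emph{Coherent branch (probability $1-p$).} $c$ is unchanged and $3\Delta q\le 12\alpha^2|x|^2+12\alpha|x||y|$ per component. Summing over components, we use $\alpha_{R,i}^2=k_{R,i}/(1+b_R)\le\sum_{B\ni i}\pi_B\le 1/W$ together with $\sum|x|^2\le1$. This is the same packing inequality used in Lemma~\ref{lem:faultHt}. For the cross term we apply Cauchy--Schwarz over components, as in Lemma~\ref{lem:hybridq}. This gives $\Delta\Psi\le 12/W+12a_t/\sqrt W$.

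\emph{Dephased branch (probability $p$).} The record becomes $R\cup\{i\}$. By the observation after \eqref{eq:pht-nohit-direction}, the post-oracle plane vector splits as follows: the $\ket{\ell_{R,i}}$ part lands in $A_{R\cup\{i\}}$ (weight $0$) and the $\ket{h_{R,i}}$ part lands in $C_{R\cup\{i\}}$ (weight $1$). The $h$-coefficient after $D_i$ is $-(\alpha x+\beta y)$, of squared magnitude at most $\alpha^2|x|^2+2\alpha|x||y|+|y|^2$. Meanwhile the old active mass $|y|^2$, which carried weight $3$, leaves $Q$ entirely. Old $C$ stays in $C$, since blocks hit by $R$ remain hit. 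Old passive $Q$ vectors can only move partly into $C_{R\cup\{i\}}$, from weight $3$ to weight $1$, and never gain weight. Hence per component $\Delta\Psi\le\alpha^2|x|^2+2\alpha|x||y|-2|y|^2$. After summing this yields $1/W+2a_t/\sqrt W-2a_t^2$.

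\emph{Averaging and the main obstacle.} Averaging the two branches with weights $1-p$ and $p$ and bounding crudely gives the stated $12/W+12a_t/\sqrt W-2pa_t^2$. The step I expect to be most delicate is the dephased-branch bookkeeping. There I must verify that the passive part of $Q_R$ and the previously resolved parts never increase $\Psi$ when the subspace decomposition changes from $R$ to $R\cup\{i\}$. I also need orthogonality across records and query indices, so that per-component bounds add without cross terms. I would verify the passive-part claim by projecting an arbitrary passive $v$ onto $u_{R\cup\{i\}}$ and checking that the projection vanishes. Since $u_{R\cup\{i\}}\propto\ell_{R,i}$, this reduces to orthogonality with $u_R$ and $h_{R,i}$.
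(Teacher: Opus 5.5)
Your proposal is correct and follows essentially the same argument as the paper. You use the same per-(record, index, workspace) decomposition onto the active plane $\mathrm{span}\{u_R,e_{R,i}\}$, and the same reflection computation: it bounds the coherent-branch gain by $12\alpha^2|x|^2+12\alpha|x||y|$ and the dephased-branch change by $\alpha^2|x|^2+2\alpha|x||y|-2|y|^2$. You then sum using $\alpha^2\le 1/W$ and Cauchy--Schwarz, and average the two branches, exactly as the paper does. Your extra checks are correct and supply justification the paper only asserts in a line: that $U_t$ commutes with the $A/Q/C$ projectors, that $D_i$ preserves passive $Q$, and that passive mass can only drop from weight $3$ to weight $1$ under the record update because $u_{R\cup\{i\}}\propto\ell_{R,i}$.
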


\begin{proof}
Fix a positive-probability history and its normalized conditional state.  \paragraph{Coherent query.}
Consider one orthogonal record/index/workspace component and suppress the
subscripts $R,i$.  Write it as
\[
  \ket{u}\otimes \ket{x}+\ket{e}\otimes \ket{y},
\]
where $\ket{u}$ is the unresolved $A$-direction, $\ket{e}$ is the active $Q$-direction,
and $\ket{x},\ket{y}$ are vector-valued amplitudes in the remaining registers.

In the basis $(\ket{u},\ket{e})$, the phase form of the bit query is the reflection
\begin{equation}
  D_i
  =
  \begin{pmatrix}
    1-2\alpha^2 & -2\alpha\beta\\
    -2\alpha\beta & 2\alpha^2-1
  \end{pmatrix},
  \qquad
  \beta=\sqrt{1-\alpha^2}.
  \label{eq:pht-query-matrix}
\end{equation}

Thus the new active-$Q$ amplitude is
\(\ket{y'} = -2\alpha\beta \ket{x}+(2\alpha^2-1)\ket{y}\), which gives
\[
  \|y'\|^2 =  4\alpha^2\beta^2\|x\|^2
  +(2\alpha^2 -1)^2\|y\|^2 -
  4\alpha\beta(2\alpha^2-1)
       \operatorname{Re}\langle x,y\rangle.
\]
Using $\beta\le1$ and $|2\alpha^2-1|\le1$ therefore gives
\[
  \|y'\|^2-\|y\|^2
  \le
  4\alpha^2\|x\|^2
  +4\alpha\|x\|\,\|y\|.
\]
The $B$-subspace has weight three in $\Psi=c+3q$, and the coherent query
does not change the hidden record.  Hence this record–query component contributes at most
\[
  12\alpha^2\|x\|^2
  +12\alpha\|x\|\,\|y\|.
\]

Summing over the mutually orthogonal active record–query component, using
$\alpha^2\le1/W$, and noting
\(a_t^2=\sum_{r,i}\|y\|^2\), Cauchy--Schwarz gives
\[
\begin{aligned}
  \Delta\Psi_{\rm coh}
  &\le
  12\sum_{r,i}\alpha^2\|x\|^2
  +
  12\sum_{r,i}\alpha\|x\|\,\|y\|\\
  &\le
  \frac{12}{W}
  +
  \frac{12a_t}{\sqrt W}.
\end{aligned}
\]

\paragraph{Dephased query.}
For the same active plane,
\[
  \ket{u}=\beta\ket{\ell}+\alpha \ket{h},
  \qquad
  \ket{e}=-\alpha\ket{\ell}+\beta \ket{h},
\]
and therefore
\[
  \ket{u}\otimes \ket{x}+\ket{e}\otimes \ket{y}
  =
  \ket{\ell}\otimes(\beta \ket{x}-\alpha \ket{y})
  +
  \ket{h}\otimes(\alpha \ket{x}+\beta \ket{y}).
\]
The phase query fixes $\ket{\ell}$ and flips the sign of $\ket{h}$.  After the subsequent
query-index dephasing, the hidden record is extended by $i$.  Relative to the
new record,
\[
  \ket{\ell}\in\cH_{A,R\cup\{i\}},
  \qquad
  \ket{h}\in\cH_{C,R\cup\{i\}}.
\]
Thus the active plane contains no $Q$-component after the dephased query.
Its new contribution to the potential  from the $\ket{h}$ component,
\[
  \|\alpha \ket{x}+\beta \ket{y}\|^2,
\]
whereas before the query its active $Q$-component contributed
$3\|y\|^2$.  Hence
\[
\begin{aligned}
  \Delta\Psi_{\rm deph,active}
  &=
  \|\alpha \ket{x}+\beta \ket{y}\|^2-3\|y\|^2\\
  &=
  \alpha^2\|x\|^2
  -(2+\alpha^2)\|y\|^2
  +2\alpha\beta\operatorname{Re}\langle x,y\rangle\\
  &\le
  \alpha^2\|x\|^2
  +2\alpha\|x\|\,\|y\|
  -2\|y\|^2.
\end{aligned}
\]

Passive no-hit $Q$-mass remains in $Q$.  Passive hit $Q$-mass moves from
$Q$ to $C$, changing its potential coefficient from three to one, and
therefore contributes nonpositively.  Summing the active estimate over all
orthogonal blocks gives
\[
  \Delta\Psi_{\rm deph}
  \le
  \frac1W+\frac{2a_t}{\sqrt W}-2a_t^2.
\]

Hence the law of total expectation gives
\begin{align*}
  \E[\Psi_t-\Psi_{t-1}\mid\cF_{t-1}]
  =& 
  (1-p)\cdot
  \E[\Psi_t-\Psi_{t-1}\mid\cF_{t-1},Z_t=0]
  + 
  p \cdot
  \E[\Psi_t-\Psi_{t-1}\mid\cF_{t-1},Z_t=1]\\
  =& (1-p_t)
  \left(
    \frac{12}{W}+\frac{12a_t}{\sqrt W}
  \right) + p_t \left(
    \frac1W+\frac{2a_t}{\sqrt W}-2a_t^2
  \right)\\
  =&  \frac{12-11p_t}{W}
  +
  \frac{(12-10p_t)a_t}{\sqrt W}
  -
  2p_ta_t^2\\
  \le&
  \frac{12}{W}
  +
  \frac{12a_t}{\sqrt W}
  -
  2p_ta_t^2.
\end{align*}
\end{proof}

\begin{lemma}[Record-dependent terminal progress]
\label{lem:terminal}
If the algorithm has worst-case error at most $\eps<1/2$, then
\[
  \E\Psi_T\ge\gamma_\eps,
  \qquad
  \gamma_\eps
  :=\frac{\bigl(1-2\sqrt{\eps(1-\eps)}\bigr)^2}{36}>0.
\]
Expectation here is over the IID dephasing choices and any other classical transcript branches generated by the purified execution. 
\end{lemma}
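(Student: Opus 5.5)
The plan is to prove Lemma~\ref{lem:terminal} in the purified picture, working with the global unconditioned truth--algorithm--record state \(|\Omega_T\rangle\). I average over accessible transcripts at the end. Since \(\Psi\) is a quadratic form, \(\E\Psi_T\) is exactly \(\|\Pi_C\Omega_T\|^2+3\|\Pi_Q\Omega_T\|^2\) evaluated on the purified global state, where the classical transcript branches are orthogonal. So it suffices to show that the global state has squared mass at least \(\gamma_\eps\) outside the \(A\)-space, weighted as in \(\Psi\).

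\textbf{Step 1: centre and leaf vectors.} For each hidden record \(R\), I contract the truth register against two vectors. Let \(|w_R\rangle:=(\langle R|\otimes\langle 0|)|\Omega_T\rangle\) be the centre branch. Let \(|v_R\rangle:=\sum_{B\in\cB_R}\sqrt{\pi_B}\,(\langle R|\otimes\langle B|)|\Omega_T\rangle\) be the coherent unresolved-leaf branch. The leaves \(B\notin\cB_R\) are put aside, since they lie in \(C_R\) and already count fully towards \(\Psi\). Summed over \(R\), one has \(\sum_R\|w_R\|^2=1/2\) and \(\sum_R\big(\|v_R\|^2/b_R+\text{hit-leaf mass}\big)=1/2\); the first identity holds because the oracle never touches the centre's truth value.

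\textbf{Step 2: the \(A\)-part does not separate centre from leaves.} Fix \(R\). Write \(\Pi^A_R\Omega=|u_R\rangle\otimes|x_R\rangle\). Its contributions to \(w_R\) and \(v_R\) are then \(x_R/\sqrt{1+b_R}\) and \(b_R\,x_R/\sqrt{1+b_R}\), which are parallel. Any distinguishing power must therefore come from the components in \(Q_R\) and \(C_R\). Concretely, \(\big\|\,b_R w_R - v_R\,\big\|\) is bounded by a constant times \(\|\Pi^Q_R\Omega\|\) plus a term from the hit leaves, which is controlled by \(\|\Pi^C_R\Omega\|\) through the mass bookkeeping of Step 1.

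\textbf{Step 3: correctness forces separation.} Let \(P\) be the projector onto the algorithm's output \(1\). On the centre branch, \(\sum_R\|Pw_R\|^2\le \eps/2\). Each individual leaf \(|B\rangle\) branch has accept probability at least \(1-\eps\). Using Cauchy--Schwarz over \(B\) with weights \(\sqrt{\pi_B}\), I will show that \(\|(I-P)v_R\|\le\sqrt\eps\,\|v_R\|'\), where \(\|v_R\|'\) is the corresponding incoherent norm. The usual two-projector fidelity bound then yields a lower bound on \(\sum_R\|b_Rw_R-v_R\|^2\), or on the hit-leaf mass, of order \(\bigl(1-2\sqrt{\eps(1-\eps)}\bigr)^2\) times the centre mass \(1/2\). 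The intuition is that the states are nearly orthogonal, so their normalized difference has norm bounded below. Combining this with Step 2, and using \((a+b)^2\le 2a^2+2b^2\) to absorb constants, gives \(c_T+3q_T\ge\gamma_\eps\) with the stated \(1/36\).

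\textbf{Main obstacle.} The hardest step is Step 3 when \(b_R\) is small on some records. In that case the leaf vector \(v_R\) is tiny and most of the leaf mass sits in \(C_R\). I will handle this by splitting into two cases. If the hit-leaf mass is at least a constant fraction of \(1/2\), then \(c_T\) alone exceeds \(\gamma_\eps\). Otherwise, the unresolved leaves carry a constant share of the leaf mass, and the fidelity argument applies after normalizing by \(b_R\) record by record. A convexity argument (Cauchy--Schwarz over \(R\)) aggregates the per-record bounds. I expect the constant \(36\) to come out of this case analysis combined with the factor \(3\) in front of \(q_T\).
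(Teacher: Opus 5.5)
\textbf{Verdict: there is a genuine gap in Step~3,} exactly where you expect trouble.

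\textbf{Step~3 does not follow from correctness.} The quantity you want to bound below, $\sum_R\|b_Rw_R-v_R\|^2$, is the mass of $|\Omega_T\rangle$ along a single coherent direction of each $Q_R$. That direction is $b_R|0\rangle-\sum_{B\in\cB_R}\sqrt{\pi_B}|B\rangle$, with squared norm $b_R+b_R^2$. Correctness plus small hit-leaf mass does not force this quantity to be large. Consider a final state in which:
\begin{itemize}
\item the centre sits entirely on a record $r^\ast$ whose index set meets every block, in a rejecting state;
\item every leaf sits on a record $r_0$ whose index set meets no block, in mutually orthogonal accepting states $y_{B,r_0}$.
\end{itemize}
This state is error-free and has $c_T=0$. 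Yet $\sum_R\|b_Rw_R-v_R\|^2=\|v_{r_0}\|^2=\tfrac12\sum_B\pi_B^2\le 1/(2W)$. The $Q$-mass is $\approx 1/2$, but it lies in directions of $Q_{r_0}$ orthogonal to the one you track. Dividing each record's term by $b_R$ does not help, since $b_{r_0}=1$.

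Relatedly, your Step~1 identity $\sum_R(\|v_R\|^2/b_R+\text{hit-leaf mass})=1/2$ is only an inequality. Its Cauchy--Schwarz slack is precisely the incoherent $Q$-mass you discard.

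\textbf{The case split rests on a false implication.} Small $b_R$ does not mean the leaf mass on record $R$ lies in $C_R$. Suppose each leaf $B$ sits on a record under which only $B$ is unresolved. Then $b_R\le 1/W$ on every record carrying mass, while $c_T=0$. After your renormalization, Step~2 only gives $\|w_R-v_R/b_R\|^2\le\frac{1+b_R}{b_R}\,q_R$, which does not sum to $O(q_T)$.

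\textbf{Per-record correctness is not available.} Correctness constrains only sums over records. So the record-by-record bound $\|(I-P)v_R\|\le\sqrt\eps\,\|v_R\|'$ that you plan to prove does not hold in general.

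\textbf{What is missing,} and what the paper supplies, is a leaf-by-leaf comparison together with control of the centre's mass on records that resolve many blocks.
\begin{itemize}
\item Let $z_r$ be the $A$-coefficient on record $r$. The paper shows $d=\sum_r\sum_{B\in\cB_R}\pi_B\|x_r-y_{B,r}\|^2\le4\bar q_T$, which uses all of $Q_R$, not one direction.
\item For the hit leaves it has $Y=2\bar c_T$.
\item The term your approach cannot see is $X=\sum_r(1-b_r)\|x_r\|^2$; this is your small-$b_R$ problem viewed from the centre's side. It is controlled by the triangle inequality on $U=\bigoplus\sqrt{\pi_B}\,y_{B,r}$ and $V=\bigoplus\sqrt{\pi_B}\,x_r$, using $\|U\|^2=1-Y$, $\|V\|^2=1-X$ and $\|U-V\|^2=d$.
\end{itemize}
Together these give $H=\sum_B\pi_B\|\psi_0-\psi_B\|^2\le12\sqrt s$, where $s=\bar q_T+\bar c_T$. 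Fidelity applied to the \emph{normalized, record-summed} states $\psi_0,\psi_B$ gives $H\ge2\bigl(1-2\sqrt{\eps(1-\eps)}\bigr)$. The constant $1/36$ comes from combining these two bounds, not from a case analysis.
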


\begin{proof}
We work with the unconditional common purification, obtained by retaining all
measurement, randomness, and transcript registers.  Thus taking the expectation of
the history-conditioned subspace masses gives the final masses in each subspace. Write
\[
  \bar q_T:=\E q_T,
  \qquad
  \bar c_T:=\E c_T,
  \qquad
  s:=\bar q_T+\bar c_T.
\]
Write the final common-purification state as
\begin{equation}
  |\Omega_T\rangle
  =\frac1{\sqrt2}
  \left(
    |0\rangle\sum_r|r\rangle|x_r\rangle
    +\sum_B\sqrt{\pi_B}|B\rangle
       \sum_r|r\rangle|y_{B,r}\rangle
  \right).
  \label{eq:pht-terminal-state}
\end{equation}
Here $r$ ranges over the inaccessible dephasing records, while all remaining
purification, accessible-transcript, and algorithm registers are included in
$x_r$ and $y_{B,r}$.  Denote 
\[ |\psi_0\rangle =\sum_r\ket{r}\ket{x_r}, \quad |\psi_B\rangle = \sum_r \ket{r}\ket{y_{B,r}} \, , \]
and define 
\[
  H: =\sum_B\pi_B\|\psi_0-\psi_B\|^2 \, .
\]

Assuming the algorithm succeeds with error \(\eps\), then the final measurement distinguishes the center from every leaf with
error at most $\eps$.  Monotonicity of fidelity under this measurement gives
\[
  |\langle\psi_0|\psi_B\rangle|
  \le2\sqrt{\eps(1-\eps)} \text{ for all } B\, .
\]

Consequently
\[
  H\ge2\bigl(1-2\sqrt{\eps(1-\eps)}\bigr)\, .
\]

Next we show that 
\begin{equation}    
  H\le12\sqrt s \, ,
  \label{eq:pht-terminal-D-s}
\end{equation}

and the lemma follows immediately
\[
  \E\Psi_T
  =\bar c_T+3\bar q_T \ge s\ge
  \frac{\bigl(1-2\sqrt{\eps(1-\eps)}\bigr)^2}{36} \, .
\]

To show~\cref{eq:pht-terminal-D-s}, since different \(r\) components are orthogonal, we have that
\begin{align*}
    H & = \sum _B \pi_B \sum_r  \| x_r - y_{B,r} \|^2 \\
      & = \sum_r \sum _{B\in\cB_R} \pi_B  \| x_r - y_{B,r} \|^2 + \sum_r \sum _{B\notin\cB_R} \pi_B  \| x_r - y_{B,r} \|^2 \\
       & \le \sum_r \sum _{B\in\cB_R} \pi_B  \| x_r - y_{B,r} \|^2  + \sum_r \sum _{B\notin\cB_R} 2\pi_B  (\| x_r\|^2 + \| y_{B,r} \|^2) \\
      & \le d+2X+2Y \, ,
\end{align*}  
by denoting 
\[
  H=\sum_r\sum_{B\in\cB_R}
       \pi_B\|x_r-y_{B,r}\|^2,
  \qquad
  Y =\sum_r\sum_{B\notin\cB_R}
       \pi_B\|y_{B,r}\|^2,
  \label{eq:pht-terminal-dy}
\]
\[
  X=\sum_r(1-b_r)\|x_r\|^2,
  \qquad
  b_r=\sum_{B\in\cB_R}\pi_B.
  \label{eq:pht-terminal-X}
\]

Next we bound \(d,y,X\) respectively. For a fixed \(r\), let \(z_r\) be the  vector on the algorithm registers that describes the effective projection onto the \(A\)-direction. Then the subspace \(Q\)-component is

\[
\frac1{\sqrt2}
\left[
|0\rangle(x_r-z_r)
+
\sum_{B\in\cB_R}
\sqrt{\pi_B}|B\rangle(y_{B,r}-z_r)
\right].
\]

and denote its squared norm 
\[
q_r
:=
\frac12\left(
\|x_r-z_r\|^2+
\sum_{B\in\cB_R}
\pi_B\|y_{B,r}-z_r\|^2
\right).
\]

Now for each unresolved \(B\),
\[ 
\|x_r-y_{B,r}\|^2
\le
2\|x_r-z_r\|^2
+
2\|y_{B,r}-z_r\|^2.
\]

Multiply by \(\pi_B\) and sum:
\begin{align*}
d_r
&:=
\sum_{B\in\cB_R}
\pi_B\|x_r-y_{B,r}\|^2\\
&\le
2b_R\|x_r-z_r\|^2
+
2\sum_{B\in\cB_R}
\pi_B\|y_{B,r}-z_r\|^2 \\
& \le 2\left(
\|x_r-z_r\|^2+
\sum_{B\in\cB_R}
\pi_B\|y_{B,r}-z_r\|^2
\right) \\
& = 4q_r \, ,
\end{align*}
since \(b_R\le1\). Finally sum over records \(r\):
\[
d=\sum_r d_r\le4\sum_r q_r=4\bar q_T \le 4 s \, . \]

To bound $Y$, note that the terms all belong to the \(C\)-subspace, hence 
\[ Y \le 2 \bar c_T \le 2s \, . \]

Finally moving to \(X\), consider 
\[
U :=\bigoplus_{\substack{r,B\\ B\cap R(r)}}
\sqrt{\pi_B}\,y_{B,r}\, , \quad
V := \bigoplus_{\substack{r,B\\ B\cap R(r)}} \sqrt{\pi_B}\,x_r \, .
\]

Then
\[
  \|U\|^2=1-Y,
  \qquad
  \|V\|^2=1-X,
  \qquad
  \|U-V\|^2=d.
\]
The triangle inequality gives
\[
  X = 1 - \|U - (U-V)\|^2 \le1-\bigl(\sqrt{1-Y}-\sqrt d\bigr)^2 = Y- d +  2\sqrt{(1-Y)d} \, .
\]
Hence 
\[
  H \le d+2X+2Y \le-d+4Y+4\sqrt{d(1-Y)}
  \le8s+8\sqrt s<12\sqrt s \, ,
\]
whenever \(s<1/9\); and if $s\ge1/9$, \(H\le 12 \sqrt s \) follows from $H\le4$. 
\end{proof}

\subsection{Putting It Together}
\label{sec:iid-fbs-0}
\begin{proof}[Proof of Theorem~\ref{thm:iid-anchor}]
By Lemma~\ref{lem:local-potential}, for every positive-probability history,
\[
    \E_t\!\left[
        \Psi_t-\Psi_{t-1}
    \right]
    \le
    \frac{12}{W}
    +\frac{12a_t}{\sqrt W}
    -2p_ta_t^2.
\]    

Taking expectations and summing over \(t=1,\ldots,T\) gives

\[  \E[\Psi_T]
    \le
    \sum_{t=1}^T
    \E\!\left[
        \frac{12}{W}
        +\frac{12a_t}{\sqrt W}
        -2pa_t^2
    \right].
\]

Note that
\[
    \frac{12a_t}{\sqrt W}-2pa_t^2
    =
    \frac{18}{pW}
    -
    \left(
        \sqrt{2p}\,a_t
        -
        \frac{6}{\sqrt{2pW}}
    \right)^2
    \le
    \frac{18}{pW}.
\]

Hence

\[
    \E[\Psi_T]
    \le
    T\left(
        \frac{12}{W}
        +\frac{18}{pW}
    \right)
    =
    \frac{18+12p}{pW}\,T.
\]

Lemma~\ref{lem:terminal} asserts the success terminal condition \(
    \E[\Psi_T]\ge \gamma_\eps \), and thus

\[
    T
    \ge
    \frac{\gamma_\eps\,pW}{18+12p} = \Omega_\eps(p \fbs(f,0^n)) \, .
\]

\end{proof}

\section{An Extended Family of Stochastic Dephasing}
\label{sec:ext-dephase}
\subsection{Algorithm-independent stochastic dephasing}
\label{subsec:alg-indep-dephasing}

Fix a query number $T$.  A dephasing schedule is a binary vector
\[
  Z=(Z_1,\ldots,Z_T)\in\{0,1\}^T,
\]
where $Z_t=1$ means that $t$th query is dephased.  We allow an arbitrary joint
law for this vector, but require it to be sampled independently of the hidden
input, the algorithm, its private randomness, and all computational registers.
Equivalently, the entire schedule may be sampled before the computation begins
and then revealed one coordinate at a time.

\begin{definition}[Algorithm-independent dephasing schedule]
\label{def:dephasing-schedule}
For a fixed query count $T$, define the set of all distributions on \(\bit^T\)
\[
  \mathcal D(T)
  :=
  \left\{
    \mu:\{0,1\}^T\to[0,1]
    \;\middle|\;
    \sum_{z\in\{0,1\}^T}\mu(z)=1
  \right\}.
\]
An algorithm-independent dephasing schedule is a random vector
\[
  Z=(Z_1,\ldots,Z_T)\sim\mu,
  \qquad
  \mu\in\mathcal D(T) \, .
\]
\end{definition}

Note that \(Z\) is sampled independently of the hidden input, the algorithm, its private
randomness, and all computational registers. Otherwise, no independence or Markov assumption is imposed on the coordinates of $Z$.
Let
\begin{equation}
  \cG_t:=\sigma(Z_1,\ldots,Z_t),
  \qquad
  p_t:=\Pr_\mu[Z_t=1\mid\cG_{t-1}].
  \label{eq:conditional-dephasing-rate}
\end{equation}
Here \(\cG_t\) represents the information generated by the revealed dephasing flags up to time $t$. Similarly let $\cF_t$ denote the full accessible classical information available to the
algorithm after query $t$, which includes the flags $Z_1,\ldots,Z_t$, all measurement outcomes,
revealed random coins, and classical control variables generated so far. It is clear that \(\cG_t\subseteq\cF_t \). We hence say that the conditional rate $p_t$ is
\emph{predictable}\footnote{In measure-theory language, each \(\cG_t\) or \(\cF_t\) is a \(\sigma\)-algebra representing the information available by time \(t\), and Then \(p_t\) is \(\cG_{t-1}\)-measurable and hence also \(\cF_{t-1}\)-measurable. The sequences
\(
(\cG_t)_{t=0}^T, (\cF_t)_{t=0}^T
\)
are filtrations because they are increasing. We may call \((\cG_t)\) the noise filtration and \((\cF_t)\) the full accessible filtration. .}.

Conditional on the
realized value of $Z_t$, the query channel is
\[
  \rho\longmapsto
  \begin{cases}
    O_x\rho O_x^\dagger, & Z_t=0,\\[1mm]
    \Delta_I(O_x\rho O_x^\dagger), & Z_t=1,
  \end{cases}
  \qquad
  \Delta_I(\rho)=\sum_iP_i\rho P_i\, , \quad P_i = |i\rangle \langle i | \otimes I \, .
\]
The flag $Z_t$ is revealed only after the channel has acted.

This model includes deterministic, IID, Markov, and
arbitrarily long-range correlated schedules.  It excludes feedback from the
algorithm to the noise: future dephasing cannot depend on the hidden input,
measurement outcomes, control choices, current query amplitudes, or hidden
query-index records created by previous dephasings.

\subsection{Computational-opportunity resource and fbs lower bound}
\label{subsec:computational-opportunity}

We consider a \emph{predictable opportunity-allocation process} 
$\theta=(\theta_t)_{t=1}^T$ such that 
\[
  \theta_t=\vartheta_t(Z_1,\ldots,Z_{t-1})\, ,
\]
for some fixed function $\vartheta_t$.  It is
called a process because its value is evaluated on the random past schedule,
and predictable because it cannot inspect $Z_t$ or future flags\footnote{In other words, $\theta_t\in[0,1]$ and
$\cG_{t-1}$-measurable.}.

\begin{definition}[Opportunity resource]
\label{def:resource}
For $\mu\in\mathcal{D}(T)$ and a predictable
opportunity-allocation process $\theta$, define
\[
  r_\theta(\mu)
  :=\E_\mu\!\left[
      \left(\sum_{t=1}^T\theta_t\right)^2
      +\sum_{t=1}^T\frac{(1-\theta_t)^2}{p_t}
    \right],
\qquad
  \cR(\mu):=\inf_\theta (r_\theta(\mu)) \, .
\]
We use the extended-real convention
\[
  \frac{(1-\theta_t)^2}{p_t}
  =
  \begin{cases}
    0, & p_t=0\text{ and }\theta_t=1,\\
    +\infty, & p_t=0\text{ and }\theta_t\ne1.
  \end{cases}
\]
\end{definition}
Note that finite cost forces $\theta_t=1$ where $p_t=0$. We show that \(\cR(\mu)\) needs to be sufficiently large in order to compute \(f\) under dephasing schedule \(\mu\). 

\begin{theorem}[Computational-Opportunity Theorem]
\label{thm:opportunity}
Let $\mu\in\mathcal{D}(T)$.  If a $T$-query algorithm computes
$f$ with worst-case error $\eps<1/2$ under the dephasing schedule $\mu$, then
\[
  \cR(\mu)\ge c_\eps\fbs(f),
\]
where $c_\eps>0$ depends only on $\eps$ and universal constants.
\end{theorem}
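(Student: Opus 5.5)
We reduce to the center $0^n$ by translation covariance (Lemma~\ref{lem:deph-translation}); this applies because the schedule $\mu$ is independent of the input, so the XOR translation by a known string commutes with every query, dephased or not. We then rerun the coherent weighted-star potential analysis of Section~\ref{sec:iid}, with $W=\fbs(f,0^n)$ and the same potential $\Psi_t=c_t+3q_t$. The key observation is that Lemma~\ref{lem:local-potential} never used that $p$ is constant. Its proof splits on the realized $Z_t$, and conditional on the accessible history $\cF_{t-1}$ the flag satisfies $\Pr[Z_t=1\mid\cF_{t-1}]=p_t$. This holds because $Z$ is sampled independently of all computational registers, so conditioning on $\cF_{t-1}$ only reveals $Z_1,\dots,Z_{t-1}$ about the schedule. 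The displayed computation in that proof therefore gives, verbatim,
\[
\E[\Psi_t-\Psi_{t-1}\mid\cF_{t-1}]\le \frac{12}{W}+\frac{12a_t}{\sqrt W}-2p_ta_t^2 ,
\]
with $a_t$ the $\cF_{t-1}$-measurable active magnitude. Lemma~\ref{lem:terminal} is stated for the common purification and does not depend on the noise law, so it still gives $\E\Psi_T\ge\gamma_\eps$.

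Next we bring in the predictable $\theta_t\in[0,1]$ and split each step's gain into two parts. We write
\[
\frac{12a_t}{\sqrt W}=\theta_t\frac{12a_t}{\sqrt W}+(1-\theta_t)\frac{12a_t}{\sqrt W}.
\]
For the $(1-\theta_t)$ part we complete the square against $-2p_ta_t^2$, as in the IID proof:
\[
(1-\theta_t)\frac{12 a_t}{\sqrt W}-2p_ta_t^2\le\frac{18(1-\theta_t)^2}{p_tW}.
\]
The convention $p_t=0\Rightarrow\theta_t=1$ handles degenerate times. For the $\theta_t$ part we need a bound on $\sum_t\theta_t a_t$ that yields the square $(\sum\theta_t)^2$. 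The crude bound $a_t\le 1$ only gives $\sum\theta_t/\sqrt W$, which is linear. To fix this we use $a_t^2\le q_{t-1}\le\Psi_{t-1}$ and argue in amplitude. Set $S_t=\sum_{s\le t}\theta_s$ and try to prove by induction a bound of the form
\[
\E\Psi_t\le \frac{C}{W}\Big(\E[S_t^2]+\E\sum_{s\le t}\frac{(1-\theta_s)^2}{p_s}\Big)+\frac{Ct}{W}.
\]
The natural route is to bound $\theta_ta_t\le\theta_t\sqrt{\Psi_{t-1}}$ and compare with the deterministic recursion $\Psi_t\le\Psi_{t-1}+\frac{c\,\theta_t}{\sqrt W}\sqrt{\Psi_{t-1}}+\cdots$, which integrates to $\sqrt{\Psi_t}\lesssim S_t/\sqrt W+\ldots$. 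The cleaner alternative is to track $\sqrt{\Psi_t}$, or an amplitude-level distance as in Lemma~\ref{lem:hybridq}, whose increment from the coherent part is $O(\theta_t/\sqrt W)$. Squaring at the end then gives $(\sum\theta_t)^2/W$.

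The leftover $12T/W$ term is a nuisance. I would first try to absorb it: the $12/W$ contribution really carries a factor $\alpha^2\|x\|^2$, and it can be charged to the $\theta$-part as $\theta_t^2/W$ and to the $(1-\theta_t)$ part as $(1-\theta_t)^2/W\le(1-\theta_t)^2/(p_tW)$. This uses the decomposition $1\le 2\theta_t^2+2(1-\theta_t)^2$ together with $\sum_t\theta_t^2\le(\sum_t\theta_t)^2$. Combining everything gives $\gamma_\eps\le\E\Psi_T\le C\,r_\theta(\mu)/W$ for every $\theta$, and taking the infimum yields $\cR(\mu)\ge c_\eps W$.

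The main obstacle is the $\theta$-weighted coherent accumulation: turning $\E\sum\theta_ta_t/\sqrt W$ into $\E(\sum\theta_t)^2/W$ when $\theta$ and $a_t$ are random and adapted. Because $\sqrt{\cdot}$ is concave, the expectation cannot simply be pushed through. I would handle this with a pathwise potential such as $\Phi_t=(\sqrt{\Psi_t}-cS_t/\sqrt W)_+^2$, or by absorbing the cross term through $\theta_ta_t\le \theta_t S_{t-1}/W\cdot\lambda+ a_t^2\cdot(\ldots)$ via Young's inequality, with $\lambda$ chosen so that $\theta_t S_{t-1}$ telescopes to $S_T^2/2$.
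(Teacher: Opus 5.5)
Most of your ingredients match the paper's proof. These are the conditional growth bound of Lemma~\ref{lem:local-potential} with $p$ replaced by $p_t$ (justified, as you say, by the schedule's independence from the computation), and the split $12a_t=12\theta_ta_t+12(1-\theta_t)a_t$. They also include completing the square to get $\frac{18(1-\theta_t)^2}{p_tW}$, and absorbing the $12T/W$ term via the pathwise inequality $S^2+J\ge\sum_t\bigl(\theta_t^2+(1-\theta_t)^2\bigr)\ge T/2$, where $J=\sum_t(1-\theta_t)^2/p_t$ and $p_t\le1$.

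The gap is the step you flag as the main obstacle: you never actually bound the $\theta$-weighted coherent term. The induction toward $\E\Psi_t\le\frac{C}{W}\bigl(\E S_t^2+\cdots\bigr)$, the potential $(\sqrt{\Psi_t}-cS_t/\sqrt W)_+^2$, and the Young's-inequality telescoping are only sketched as things to try. Each would require pushing the adapted random quantities $a_t,\theta_t$ through a nonlinear potential whose increments are controlled only in conditional expectation. As written, the proposal does not establish the bound $\E\Psi_T\le C\,r_\theta(\mu)/W$ that your conclusion relies on.

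The missing observation is that this obstacle is not real. The crude bound $a_t\le1$ you discard already suffices, because you only need $r_\theta(\mu)/W$ to be bounded below by a constant, not $\E\Psi_T$ to be linear in it. Summing the local bounds with $a_t\le1$ gives
\[
\E\Psi_T\le\frac{12T}{W}+\frac{12}{\sqrt W}\E S+\frac{18}{W}\E J .
\]
Since $r_\theta(\mu)=\E[S^2+J]$, you have $\E S\le\sqrt{\E S^2}\le\sqrt{r_\theta(\mu)}$, $\E J\le r_\theta(\mu)$ and $T\le2r_\theta(\mu)$. With $r=r_\theta(\mu)/W$ this yields $\gamma_\eps\le\E\Psi_T\le12\sqrt r+42r$, which forces $r\ge c_\eps>0$ for an explicit $c_\eps$ depending only on $\gamma_\eps$. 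The quadratic $(\sum_t\theta_t)^2$ in the resource enters only through Jensen, turning the linear term $\E S/\sqrt W$ into $\sqrt r$. Taking the infimum over $\theta$ and translating the center then finishes exactly as in the paper.
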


We prove the theorem in~\cref{sec:proof-cot}.  Intuitively, the value $\theta_t$ can be thought of as the share of the one-step progress estimated through
coherent accumulation and such shares add at the amplitude level.  The complementary share $1-\theta_t$ is estimated through
dephasing. The accumulation hence serves as a proxy of the opportunity for an algorithm (not necessarily physically realizable) to make progress. This converts noisy query complexity to an optimization problem oblivious to quantum algorithms, and we showcase a few applications below. 

\subsubsection{Application: hybrid and IID specializations}
\label{sec:special-dephase}

\begin{corollary}[Deterministic $0/1$ schedule]
\label{cor:pht-hybrid}
If the dephasing schedule \(\mu\) is deterministic $\{0,1\}$-valued, with $Q$
quantum coherent and $C$ dephased queries, then \( \cR(\mu) = Q^2+C, Q\ge1\), and hence 
\( C +Q^2=\Omega_\eps(\fbs(f))\). 
\end{corollary}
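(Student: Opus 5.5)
For a deterministic schedule, every $p_t$ lies in $\{0,1\}$ and is known in advance. So $\cR(\mu)$ reduces to a deterministic minimization over $\theta\in[0,1]^T$, with no expectation left to take. The plan is to evaluate this scalar problem exactly and then invoke Theorem~\ref{thm:opportunity}.

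Let $T_q=\{t:p_t=0\}$ with $|T_q|=Q$, and $T_c=\{t:p_t=1\}$ with $|T_c|=C$.

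\textbf{Step 1 (forced coordinates).} By the extended-real convention in Definition~\ref{def:resource}, any $t\in T_q$ with $\theta_t\neq 1$ makes the cost $+\infty$. Hence $\theta_t=1$ on $T_q$ for any finite-cost allocation. On $T_c$ the cost term is simply $(1-\theta_t)^2$.

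\textbf{Step 2 (reduce and optimize).} The resource becomes
\[
\cR(\mu)=\min_{\theta\in[0,1]^{T_c}}\Bigl(Q+\sum_{t\in T_c}\theta_t\Bigr)^2+\sum_{t\in T_c}(1-\theta_t)^2 .
\]
This is convex and symmetric in the coordinates on $T_c$, so a common value $\theta_t=s$ is optimal. With $m:=1-s\in[0,1]$ we must minimize
\[
g(m)=(Q+C-Cm)^2+Cm^2 .
\]
The point $m=1$ gives $Q^2+C$. The derivative there is $g'(1)=-2C(Q)+2C=2C(1-Q)$, which is $\le0$ exactly when $Q\ge1$. Since $g$ is convex, for $Q\ge1$ the minimum over $[0,1]$ is attained at $m=1$, so $\cR(\mu)=Q^2+C$. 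This is where the hypothesis $Q\ge1$ enters.

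\textbf{Step 3 (conclude).} Theorem~\ref{thm:opportunity} gives $Q^2+C=\cR(\mu)\ge c_\eps\fbs(f)$.

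\textbf{The case $Q=0$.} Here the minimum is $C^2/(C+1)\cdot\ldots$, more precisely $\min_m g(m)$ with interior optimum $m=C/(C+1)$, giving value $C^2/(C+1)\le C$. This still yields $C=\Omega_\eps(\fbs(f))$. Alternatively, one can cite Theorem~\ref{thm:hybrid:helper} together with translation covariance, so the stated bound holds for all $Q\ge 0$.

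\textbf{Expected obstacle.} The only delicate point is the convexity/boundary argument in Step 2. One must make sure the minimizer sits at the corner $\theta\equiv0$ on $T_c$ rather than in the interior, which is exactly the sign check on $g'(1)$.
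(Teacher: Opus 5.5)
Your proposal is correct and follows the paper's argument: force $\theta_t=1$ on the $Q$ coherent slots, minimize the resulting deterministic objective over the $C$ dephased slots, and invoke the Computational-Opportunity Theorem, treating $Q=0$ separately to get $C^2/(C+1)$. The only difference is in certifying the minimizer. The paper uses the identity $F(x)-(Q^2+C)=2(Q-1)y+y^2+\sum_j x_j^2\ge 0$ (with $x_j\ge 0$), whereas you symmetrize by convexity and check $g'(1)=2C(1-Q)\le 0$, which is equally valid.
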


\begin{proof}
Every coherent query slot has $p_t=0$ and therefore forces $\theta_t=1$.  Let
$x_1,\ldots,x_C$ be the values of $\theta$ on the dephased queries.  The
pathwise objective is
\[
  F(x)=\left(Q+\sum_{j=1}^Cx_j\right)^2
       +\sum_{j=1}^C(1-x_j)^2.
\]
If $Q\ge 1$ and $y=\sum_jx_j$, then
\[
  F(x)-(Q^2+C)
  =2(Q-1)y+y^2+\sum_jx_j^2\ge0,
\]
with equality at $x_j=0$. This proves the corollary. 

For completeness, if $Q=0$, Cauchy--Schwarz reduces the
minimization to \(y^2+{(C-y)^2}/{C}\), 
whose minimizer is $y=C/(C+1)$, equivalently
$x_j=1/(C+1)$.  This gives $C^2/(C+1)$. Therefore, 
\[
  \cR(\mu)
  =
  \begin{cases}
    Q^2+C, & Q\ge1,\\
    C^2/(C+1), & Q=0,\ C\ge1,\\
    0, & Q=C=0.
  \end{cases}
\] 
\end{proof}

\begin{corollary}[IID dephasing]
\label{cor:pht-iid}
If $Z_1,\ldots,Z_T$ are IID Bernoulli variables with dephasing probability
$p$, then
\[
  \cR(\mu)=\frac{T^2}{1+pT}.
\]
Hence
\[
  T=\Omega_\eps\!\left(\sqrt{\fbs(f)}+p\,\fbs(f)\right).
\]
\end{corollary}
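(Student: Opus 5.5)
Under IID Bernoulli($p$) dephasing the conditional rate is constant, $p_t=p$ for every $t$ and every past schedule. I would therefore first argue that the infimum in $\cR(\mu)$ is achieved by a deterministic (schedule-independent) allocation $\theta$, and then solve a scalar convex problem.

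For an arbitrary predictable $\theta$, set $S=\sum_t\theta_t$. Jensen gives $\E[S^2]\ge(\E S)^2$. The second term is linear in the means: $\E\sum_t(1-\theta_t)^2/p\ge\sum_t(1-\E\theta_t)^2/p$, again by Jensen. Hence replacing each $\theta_t$ by $\bar\theta_t:=\E\theta_t$, which is a deterministic and therefore admissible predictable process, does not increase $r_\theta(\mu)$. It thus suffices to minimize
\[
F(x)=\Bigl(\sum_{t=1}^Tx_t\Bigr)^2+\frac1p\sum_{t=1}^T(1-x_t)^2 .
\]
By convexity and symmetry the minimizer has all $x_t$ equal to some $x$. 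Alternatively, fix $y=\sum_tx_t$; Cauchy--Schwarz then gives $\sum_t(1-x_t)^2\ge (T-y)^2/T$. The problem reduces to minimizing $y^2+(T-y)^2/(pT)$ over $y\in[0,T]$. Setting the derivative $2y-2(T-y)/(pT)=0$ gives $y=T/(1+pT)$, and substituting yields $y^2+(T-y)^2/(pT)=T^2/(1+pT)$. The value $y=T/(1+pT)$ lies in $[0,T]$, so the constraint $x_t\in[0,1]$ is satisfied. This gives $\cR(\mu)=T^2/(1+pT)$, with equality attained by $\theta_t\equiv 1/(1+pT)$. The case $p=0$ is handled by the convention: then $\theta\equiv1$ is forced and $\cR=T^2$, which agrees with the formula.

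For the lower bound I would apply Theorem~\ref{thm:opportunity}, which gives $T^2/(1+pT)\ge c_\eps\fbs(f)$, and split into two cases. If $pT\le1$, then $T^2\ge c_\eps\fbs(f)$, so $T\ge\sqrt{c_\eps\fbs(f)}$. If $pT>1$, then $T^2/(1+pT)\le T/p$, so $T\ge c_\eps p\,\fbs(f)$. Either way $T\ge\min\{\sqrt{c_\eps\fbs(f)},\,c_\eps p\fbs(f)\}$.

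The stated conclusion is a sum, $\Omega(\sqrt{\fbs}+p\fbs)$, so I must check both terms. First, $T^2\ge T^2/(1+pT)\ge c_\eps\fbs$ always holds, so $T\ge\sqrt{c_\eps\fbs}$ unconditionally. Second, in the case $pT\le1$ I need $T\gtrsim p\fbs$ as well. Here $p\fbs\le T\cdot(\fbs/T^{2})\cdot(pT)\cdot T$, which is not immediate. Instead I would use that $T^2/(1+pT)\le T/p$ holds whenever $p>0$, since $pT^2\le T(1+pT)$. This gives $T\ge c_\eps p\fbs$ in every case. Combining the two bounds, $T\ge\max\{\sqrt{c_\eps\fbs},c_\eps p\fbs\}$, which is $\Omega_\eps(\sqrt{\fbs}+p\fbs)$.

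The only step that needs care is the reduction to deterministic $\theta$. Here one uses that the constant $p_t$ makes the objective an expectation of a convex function of $\theta$, and that the averaged process remains predictable.
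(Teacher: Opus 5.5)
Your proof is correct and follows essentially the paper's route: Cauchy--Schwarz reduces the objective to $y^2+(T-y)^2/(pT)$, the constant rule $\theta_t\equiv 1/(1+pT)$ attains $T^2/(1+pT)$, and the two bounds $T^2/(1+pT)\le T^2$ and $T^2/(1+pT)\le T/p$ give the $\sqrt{\fbs(f)}$ and $p\,\fbs(f)$ terms separately. The only difference is your Jensen reduction to a deterministic $\theta$, which is valid but unnecessary: because $p_t\equiv p$, the paper applies the Cauchy--Schwarz bound pathwise to every realization and then takes expectations.
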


\begin{proof}
Let \(S=\sum_t\theta_t \), and
Cauchy--Schwarz gives
\[
  S^2+\frac1p\sum_{t=1}^T(1-\theta_t)^2
  \ge
  S^2+\frac{(T-S)^2}{pT}
  \ge\frac{T^2}{1+pT}.
\]
Equality is attained by the deterministic rule
\[
  \theta_t=\frac1{1+pT}
  \qquad(1\le t\le T) \, .
\]
For $p=0$, finite cost forces $\theta_t=1$, and the same formula gives
$T^2$.  Theorem~\ref{thm:opportunity} now gives
\[
  \frac{T^2}{1+pT}=\Omega_\eps(\fbs(f)) \, ,
\]
which separately implies
$T=\Omega_\eps(\sqrt{\fbs(f)})$ and
$T=\Omega_\eps(p\fbs(f))$. 
\end{proof}

\subsubsection{Application: A Tight Aging-Noise Application}
\label{sec:powerlaw}

Here we apply the theorem to a new model that reflects a possible real-world scenario: as computation progresses, later stages become less reliable because of cumulative or slowly varying imperfections. For cleanliness, we consider 
\[
  p_t=\min\{1,\lambda t^\gamma\},
  \qquad 0<\gamma<1,
  \qquad \lambda>0.
  \label{eq:pht-powerlaw-profile}
\]
where \(\lambda\) controls the overall noise scale and
\(\gamma\) controls the rate at which the noise worsens with time. Then define 
\[ \mu_{\lambda,\gamma}^{(T)}
  =\bigotimes_{t=1}^T\operatorname{Bernoulli}(p_t) \, .\]

\begin{theorem}[Increasing power-law dephasing for search]
\label{thm:powerlaw}
For every fixed $\eps<1/2$ and $0<\gamma<1$, in the flagged model,
\[
Q^{(p_t)}_\eps(\OR_N)
=\Theta_{\eps,\gamma}
\begin{cases}
\sqrt N,
&\lambda\lesssim N^{-(1+\gamma)/2},\\[1mm]
(\lambda N)^{1/(1-\gamma)},
&N^{-(1+\gamma)/2}\lesssim\lambda\lesssim N^{-\gamma},\\[1mm]
N,
&\lambda\gtrsim N^{-\gamma}.
\end{cases}
\]
Boundary conventions affect only constants.  The same upper and lower bounds
hold when the error flags are withheld. 
\end{theorem}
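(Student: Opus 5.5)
The plan is to prove the lower bounds via the Computational-Opportunity Theorem (\cref{thm:opportunity}) with $\fbs(\OR_N)=N$, and the upper bounds by explicit algorithms. Throughout, set $T_0:=(\lambda N)^{1/(1-\gamma)}$ and, for a horizon $T$, let $\tau:=\lambda^{-1/\gamma}$ be the time at which $p_t$ reaches a constant.

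\textbf{Lower bound.} Since $\mu_{\lambda,\gamma}^{(T)}$ is a product measure, $p_t$ is deterministic. It therefore suffices to upper bound $\cR(\mu)$ by a deterministic allocation $\theta$; the resulting inequality $\cR(\mu)\ge c_\eps N$ then forces $T$ to be large. I will use the threshold rule $\theta_t=1$ for $t\le s$ and $\theta_t=0$ for $t>s$, which gives
\[
r_\theta\le s^2+\sum_{t=s+1}^{T}\frac1{p_t}\lesssim s^2+\frac{T^{1-\gamma}}{\lambda}+T .
\]
The $+T$ term covers the slots with $p_t=1$. Choosing $s=0$ yields $T^{1-\gamma}/\lambda+T\gtrsim N$, so $T\gtrsim\min\{T_0,N\}$. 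Separately, $r_\theta\le T^2$ (take $\theta\equiv1$ when $p_t>0$, or use $\cR\le T^2/(1+\min_t p_t\, T)$) gives $T\gtrsim\sqrt N$. Hence $T\gtrsim\max\{\sqrt N,\min\{T_0,N\}\}$, which matches the three regimes. At the boundaries, $T_0=\sqrt N$ exactly when $\lambda=N^{-(1+\gamma)/2}$, and $T_0=N$ exactly when $\lambda=N^{-\gamma}$.

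\textbf{Upper bounds.} In the first regime I will run Grover with $O(\sqrt N)$ queries. There the cumulative dephasing probability is $\sum_{t\le c\sqrt N}\lambda t^\gamma\approx\lambda N^{(1+\gamma)/2}\lesssim$ a small constant, so with constant probability no dephasing occurs and the output is correct. For the bounded error $\eps$ the constants must be tuned; alternatively, one can repeat a constant number of times, restarting whenever a flag appears. In the third regime, the classical scan of $N$ queries works because dephasing is harmless for classical (basis-state) queries.

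In the middle regime, the natural algorithm splits the $N$ items into blocks of size $m$ and runs Grover on each block for $\sqrt m$ queries. A run at time $t$ survives if $\lambda t^\gamma\sqrt m\cdot\sqrt m\lesssim1$, i.e.\ $m\lesssim1/(\lambda t^\gamma)$. This suggests choosing block sizes adaptively as $m_t\approx 1/p_t$, so that each segment of $\sqrt{m_t}$ queries is dephased with constant probability at most. The total time then satisfies $\sum(\text{segments})\sqrt{m_t}$ covering $N$ items, which gives $dN\approx m_t/\sqrt{m_t}\,dt=\sqrt{m_t}\,dt$, hence $N\approx\int^T\lambda^{-1/2}t^{-\gamma/2}dt$. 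This appears to give $T\approx(\sqrt\lambda N)^{2/(2-\gamma)}$, which does not match $T_0$.

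So I will instead take $m_t\approx1/p_t^2$. Then the cumulative dephasing probability of a segment is $p_t\sqrt{m_t}\approx1$, and the covering rate is $\sqrt{m_t}=1/p_t$ items per query. This yields $N\approx\int^T dt/(\lambda t^\gamma)=T^{1-\gamma}/(\lambda(1-\gamma))$, i.e.\ $T\approx T_0$, consistent with the dual allocation.

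\textbf{Main obstacle.} The hardest step is making this rigorous. A constant failure probability per segment must be handled by restarting the segment on a dephasing flag, which costs a constant-factor overhead. In the withheld-flag version, a dephased Grover segment must also not produce false positives; I plan to handle this by verifying every candidate classically with one query (a one-query check is immune to dephasing) and by amplifying the per-block success probability through constant repetition. The claim that the same bounds hold without flags follows because the lower bound is flag-independent (\cref{thm:opportunity} already conditions on revealed flags, and withholding them only weakens the algorithm), and because the upper-bound algorithm above uses verification rather than flags.
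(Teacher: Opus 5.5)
Your lower bound is the paper's argument: the trial allocations $\theta\equiv1$ and $\theta\equiv0$ give $\cR(\mu)\le\min\{T^2,\sum_{t\le T}1/p_t\}$. Your estimate $\sum_{t\le T}1/p_t\le T+T^{1-\gamma}/((1-\gamma)\lambda)$ reaches $T\gtrsim\max\{\sqrt N,\min\{T_0,N\}\}$ slightly more directly than the paper's case split on $T\le\tau$ versus $T\ge\tau$.

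Your upper bound shares the paper's core: Grover on blocks of size about $p^{-2}$, constantly many repetitions, and a classical verification query, so flags are never needed. The difference is that the paper runs one algorithm in all three regimes with a single block size. It fixes a budget $H=\kappa S$ and bounds every slot's rate by $\bar p=p_H$ using monotonicity of $p_t$. It then takes $m=\Theta(\min\{N,\bar p^{-2}\})$ and uses $\lambda N=S^{1-\gamma}$ to get cost $O(N/\sqrt m)=O(\max\{\sqrt N,N\bar p\})=O_\gamma(S)$. This implicitly requires $\kappa$ large enough that the run ends before time $H$, which $\gamma<1$ permits. Your adaptive choice $m_t\approx p_t^{-2}$ covers items at rate $1/p_t$, so it tracks $\sum_t 1/p_t$, the same quantity the $\theta\equiv0$ allocation uses; this makes the tightness transparent. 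For a power law, however, $H/p_H$ and $\sum_{t\le H}1/p_t$ differ only by the factor $1/(1-\gamma)$, so adaptivity buys only a $\gamma$-dependent constant while adding bookkeeping. In the other direction, your separate classical scan in the third regime is cleaner than the paper's uniform estimate $(1-\bar p)^{c\sqrt m}$. That estimate degenerates as $\bar p\to1$ unless one takes $m=1$, which is just a classical scan.

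One step of your sketch needs repair. The claim that every segment has cumulative dephasing $p_t\sqrt{m_t}\approx1$ fails for $t<t^*:=\lambda^{-1/(1+\gamma)}$. There the prescribed segment length $1/p_t$ exceeds $t$, so the rate inside the segment is no longer controlled by $p_t$. In the interior of the middle regime, a segment started at $t=1$ accumulates dephasing $\gg1$. Your clean-run argument then gives no constant success probability for the first block, which may contain the marked item. Two fixes work:
\begin{itemize}
\item Make the first block of size about $(t^*)^2$, searched with $O(t^*)$ queries at cumulative dephasing $O(1)$; note $t^*\le\sqrt N$ in the middle regime. Afterwards, a segment of length at most $t$ started at $t\ge t^*$ sees rates at most $2^\gamma p_t$.
\item Simply use the paper's worst-case rate $p_H$ throughout.
\end{itemize}
A similar issue arises near the first regime boundary. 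Plain Grover there has cumulative dephasing $O(1)$ but not small, and repetitions drift to noisier times. So take the first regime with a small constant and let the block algorithm handle everything beyond it.
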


\paragraph{Lower bound from computational opportunity}

For $\OR_N$, the singleton packing at $0^N$ gives $W=N$.  The two
deterministic trial rules $\theta_t\equiv1$ and $\theta_t\equiv0$ yield
\[
  \cR(\mu_{\lambda,\gamma}^{(T)})\le T^2,
  \qquad
  \cR(\mu_{\lambda,\gamma}^{(T)})\le
  A_\lambda(T):=\sum_{t=1}^T\frac1{p_t}.
  \label{eq:pht-powerlaw-trials}
\]
Theorem~\ref{thm:opportunity} therefore forces both
\[
  T^2=\Omega_\eps(N),
  \qquad
  A_\lambda(T)=\Omega_\eps(N).
  \label{eq:pht-powerlaw-necessary}
\]

Put
\[
  \tau=\lambda^{-1/\gamma},
  \qquad
  K=\min\{T,\lfloor\tau\rfloor\}.
\]
Then 
\[
  A_\lambda(T)
  =\lambda^{-1}\sum_{t=1}^{K}t^{-\gamma}+(T-K).
  \label{eq:pht-powerlaw-exact-sum}
\]
For $0<\gamma<1$, when \(T\le \tau\), by 
\[ \int_1^{K+1}x^{-\gamma}\,dx
\le
\sum_{t=1}^K t^{-\gamma}
\le
1+\int_1^K x^{-\gamma}\,dx \, , \]
we get that 
\[A_\lambda(T) = \Theta_\gamma(\lambda^{-1}T^{1-\gamma})\, , \]
where \(\Theta_\gamma(\cdot )\) subsumes constants that only depend on \(\gamma\). When \(T\ge \tau \), we have 
\[ A_\lambda(T) = \lambda^{-1} \sum_{t\le \tau} t^{-\gamma} + (T - \tau) +O(1) = \Theta_\gamma(T) \, .\]
If $\lambda\ge1$, then $p_t=1$ for every
$t\ge1$ and $A_\lambda(T)=T$ exactly.

To determine the case boundaries, first consider when \(T \le \tau \) the solution to $A_\lambda(T)=\Theta(N)$ is
\[
  T_0=(\lambda N)^{1/(1-\gamma)}.
  \label{eq:pht-powerlaw-Tzero}
\]
Hence \(T_0=\sqrt N\) whenever \(
  \lambda=N^{-(1+\gamma)/2}\). 
On the other hand when \( \lambda=N^{-\gamma} \), \(T_0=\tau=N\).  Combining these facts proves all three lower bounds in
Theorem~\ref{thm:powerlaw}. 

\paragraph{Matching block-Grover upper bound.}
We use a standard Grover search subroutine  \(\grover\) with the following promise~\cite{G96,BBHT98}. There is an absolute constant \(c\) such that, for every Boolean function on \(m\) inputs with at least one marked item, \(\grover\) uses at most \( c\sqrt m \) oracle queries which outputs a marked item with probability at least \(1/4\), without knowing the number of marked items. 

Let \(S\) denote the target complexity in the respective lower-bound cases. 
\[
S=S(N,\lambda,\gamma):=
\begin{cases}
\sqrt N,
& \lambda\le N^{-(1+\gamma)/2},\\[1mm]
(\lambda N)^{1/(1-\gamma)},
& N^{-(1+\gamma)/2}\le \lambda\le N^{-\gamma},\\[1mm]
N,
& \lambda\ge N^{-\gamma}.
\end{cases}
\]
We show that $\OR_N$ can be computed with $O_{\gamma}(S)$ queries.

Fix an absolute constant $\kappa>0$. Let
\(H :=\kappa S \) denote a total query budget and denote \(\bar p:=p_{H}
=\min\{1,\lambda Q_*^\gamma\}\). Choose \(m=\Theta\!\left(\min\{N,\bar p^{-2}\}\right)\) and some univeral constant \(R>0\).

\begin{algorithm}[ht]
\caption{Search under aging noise}
\label{alg:powerlaw-search}
\begin{algorithmic}[1]
\State Partition $[N]$ into blocks $B_1,\ldots,B_K$ of size at most $m$
\For{$j=1,\ldots,K$}
    \For{$r=1,\ldots,R$}
        \State $i\gets\Call{\grover}{B_j}$
        \Comment{at most $c\sqrt m$ queries}
        \State Query $x_i$ and \Return $1$ if $x_i=1$
    \EndFor
\EndFor
\State \Return $0$
\end{algorithmic}
\end{algorithm}

Since $K=O(N/m)$ and $R=O(1)$, the total query count is
\[
O\!\left(\frac{N}{m}\sqrt m\right)
=
O\!\left(\frac{N}{\sqrt m}\right)
=
O\!\left(\max\{\sqrt N,N\bar p\}\right).
\]
We claim that
\[
N\bar p=O_\gamma(S).
\]

In the regime of $S=\sqrt N$ and
$\lambda\le N^{-(1+\gamma)/2}$, we have
\[
N\bar p
\le
N\lambda(\kappa S)^\gamma
=
O_\gamma\!\left(
N\,N^{-(1+\gamma)/2}N^{\gamma/2}
\right)
=
O_\gamma(\sqrt N)
=
O_\gamma(S).
\]

In the intermediate regime,
\[
S=(\lambda N)^{1/(1-\gamma)},
\qquad\text{hence}\qquad
\lambda N=S^{1-\gamma}.
\]
Therefore
\[
N\bar p
\le
N\lambda(\kappa S)^\gamma
=
O_\gamma(N\lambda S^\gamma)
=
O_\gamma(S).
\]

In the classical regime $S=N$, trivially
\[
N\bar p\le N=S.
\]
Thus the total number of queries is $O_\gamma(S)$.

It remains to check that a marked block is found with constant
probability.  One invocation of $\Call{GroverUnknown}{}$ uses at most
$c\sqrt m$ query slots, all before query $Q_*$, and therefore each has
dephasing probability at most $\bar p$.  Hence its probability of
experiencing no dephasing is at least
\[
(1-\bar p)^{c\sqrt m}.
\]
By the choice
\[
m=\Theta\!\left(\min\{N,\bar p^{-2}\}\right),
\]
we have
\[
\bar p\sqrt m=O(1),
\]
so this probability is bounded below by a positive constant.
Conditioned on this clean event, the Grover subroutine succeeds with
constant probability whenever the block is nonempty.  A constant number
of repetitions therefore gives bounded error, while the verification
query rules out false positives.

Combining with the lower bound,
\[
Q_\varepsilon^{(p_t)}(\OR_N)
=
\Theta_{\varepsilon,\gamma}(S) \, .
\]

\paragraph{Example: an intermediate \(N^{2/3}\) regime.}
Take $\gamma=1/2$ and $\lambda=N^{-2/3}$.  Then
\[
  T_0=(\lambda N)^2=N^{2/3},
  \qquad
  \tau=\lambda^{-2}=N^{4/3} \, .
\]
At
$H=\Theta(N^{2/3})$,
\[
  p_H=\Theta(N^{-1/3}),
  \qquad
  m=\Theta(N^{2/3}).
\]
There are $\Theta(N^{1/3})$ blocks, each searched in
$\Theta(N^{1/3})$ queries, and therefore
\[
  Q_\eps^{(p_t)}(\OR_N)=\Theta_\eps(N^{2/3}).
\]

\subsection{Proving the computational-opportunity theorem}
\label{sec:proof-cot}


Thanks to the IID analysis, we can use the same purified weighted-star setup, hidden index record, and \(A/Q/C\) decomposition. Adopting the same potential \(\Psi_t\), we can show the one-step growth and terminal condition in Lemmas~\ref{lem:ext-dephase-potential} and~\ref{lem:ext-terminal} essentially unchanged except for the time-dependent dephasing probability.

For~\cref{lem:ext-dephase-potential}, conditioned on the accessible history \(\cF_{t-1}\), the \(t\)-th query is dephased with conditional probability \(p_t\) rather than a fixed probability \(p\). Since the local \(A/Q/C\) decomposition and the active magnitude \(a_t\) are defined after conditioning on this history, averaging the coherent and dephased branches gives exactly the same one-query bound with \(p\) replaced by \(p_t\).

The terminal condition~\cref{lem:ext-terminal} is unchanged, because it depends only on the final record-dependent decomposition and the algorithm's worst-case correctness, not on how the dephasing probabilities vary over time. IID proof hence carries over verbatim.

\begin{lemma}[Conditional one-query growth]
\label{lem:ext-dephase-potential}
Conditioned on the accessible history immediately before query~$t$,
\[
  \E[\Psi_t-\Psi_{t-1}\mid\cF_{t-1}]
  \le
  \frac{12}{W}
  +
  \frac{12a_t}{\sqrt W}
  -2p_ta_t^2.
\]
\end{lemma}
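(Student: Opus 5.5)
The plan is to rerun the proof of Lemma~\ref{lem:local-potential} essentially verbatim. The only change is that the dephasing probability of the $t$-th query is the predictable rate $p_t$ instead of a constant $p$. Fix a positive-probability accessible history in $\cF_{t-1}$ and its normalized conditional truth--algorithm--record state $|\Omega_{t-1}\rangle$ in the common purification. Before the $t$-th oracle call, the algorithm applies $U_t$, which acts only on accessible registers. This leaves the hidden record and the truth register untouched, so it does not change $q$ or $c$. Hence the $A/Q/C$ decomposition for each record $R$, the active planes $\operatorname{span}\{|u_R\rangle,|e_{R,i}\rangle\}$, and the active mass $a_t$ are all determined by $\cF_{t-1}$ together with the state just before the query.

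The key structural fact is algorithm independence of the schedule (Definition~\ref{def:dephasing-schedule}). Since $Z$ is sampled independently of the input, the algorithm and all registers, the conditional law of $Z_t$ given $\cF_{t-1}$ equals its law given $\cG_{t-1}$. So $\Pr[Z_t=1\mid\cF_{t-1}]=p_t$. I would justify this by sampling $Z$ in advance and noting that everything else in $\cF_{t-1}$ is generated from $(Z_1,\dots,Z_{t-1})$, the input-independent purified dynamics, and independent coins. Conditioned on $Z_t$, the query channel is fixed: either the reflection $D_i$ on each active plane, or $D_i$ followed by appending $i$ to the record. Therefore the two pathwise estimates from Lemma~\ref{lem:local-potential} apply unchanged, since they used no property of $p$. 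Namely, $\Delta\Psi_{\rm coh}\le 12/W+12a_t/\sqrt W$ and $\Delta\Psi_{\rm deph}\le 1/W+2a_t/\sqrt W-2a_t^2$. Both come from the blockwise computations with $\alpha_{R,i}^2\le k_{R,i}\le 1/W$ followed by Cauchy--Schwarz over orthogonal record/index components. The passive hit mass moving from $Q$ to $C$ contributes nonpositively.

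Averaging by the tower rule then gives
\[
\E[\Psi_t-\Psi_{t-1}\mid\cF_{t-1}]\le(1-p_t)\Bigl(\tfrac{12}{W}+\tfrac{12a_t}{\sqrt W}\Bigr)+p_t\Bigl(\tfrac1W+\tfrac{2a_t}{\sqrt W}-2a_t^2\Bigr),
\]
and dropping the negative terms $-11p_t/W-10p_ta_t/\sqrt W$ gives the claim. The main obstacle is making sure $\Psi_t$ is interpreted correctly. It should be the expected potential after the query, averaged over the post-query accessible outcomes, including the revealed flag $Z_t$ and any later measurements before the next query. The potential is a sum of squared norms of projections of the common purification, and these masses are preserved when one branches on orthogonal classical transcript registers. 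Hence averaging the history-conditioned $\Psi_t$ over those outcomes equals the mass computed before revealing them. Measurements on accessible registers commute with the record-dependent projectors $\Pi_Q,\Pi_C$, so conditioning on them does not alter the masses in expectation. With this checked, the bound follows exactly as in the IID case.
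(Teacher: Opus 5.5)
Your proposal is correct and follows the paper's argument. Fix the accessible history, reuse the two pathwise branch bounds from Lemma~\ref{lem:local-potential} (which never depend on the value of $p$), and average them with weights $1-p_t$ and $p_t$. Your added points fill in steps the paper only asserts: algorithm independence of the schedule gives $\Pr[Z_t=1\mid\cF_{t-1}]=p_t$, and accessible measurements commute with the record-dependent $A/Q/C$ projectors, so the masses are preserved in expectation.
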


\begin{lemma}[Record-dependent terminal progress]
\label{lem:ext-terminal}
If the algorithm has worst-case error at most $\eps<1/2$, then
\[
  \E\Psi_T\ge\gamma_\eps,
  \qquad
  \gamma_\eps
  :=\frac{\bigl(1-2\sqrt{\eps(1-\eps)}\bigr)^2}{36}>0.
\]
\end{lemma}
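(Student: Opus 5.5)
The plan is to reduce the statement to Lemma~\ref{lem:terminal}. Its proof never used the dephasing rate, only the form of the final purified state and worst-case correctness. The one new ingredient is the schedule $Z\sim\mu$. Since $\mu$ is independent of the hidden input, the algorithm and its registers, I would purify the sampling of $Z$ into an extra classical register $\sum_z\sqrt{\mu(z)}\ket{z}$, prepared before the computation starts. This register is then treated exactly like the other accessible transcript registers in the IID proof: it controls, coordinate by coordinate, whether the $t$-th query is followed by the purified index-dephasing that appends $i$ to the hidden record.

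First, I will write the final common purification in the form \eqref{eq:pht-terminal-state}, with the schedule, measurement and coin registers all absorbed into $x_r$ and $y_{B,r}$. This is legitimate because the schedule register is prepared identically on the center and on every leaf branch of the truth register. Next, I will check that the projectors $\Pi_A,\Pi_Q,\Pi_C$ are block diagonal in the accessible classical registers and depend only on the hidden record. It then follows that $\E q_T=\bar q_T$ and $\E c_T=\bar c_T$ are the masses of the unconditional purified state, where the expectation now runs over $\mu$ as well as over the other transcript branches. This is the same averaging identity used in the IID case.

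Second, I will rerun the two-sided estimate on $H=\sum_B\pi_B\|\psi_0-\psi_B\|^2$. For the lower bound: worst-case error $\eps$ on the center and on each leaf, together with monotonicity of fidelity under the final measurement, gives $|\langle\psi_0|\psi_B\rangle|\le2\sqrt{\eps(1-\eps)}$, hence $H\ge2(1-2\sqrt{\eps(1-\eps)})$. For the upper bound $H\le12\sqrt s$, I will use the same decomposition $H\le d+2X+2Y$. The bound $d\le4\bar q_T$ comes from splitting each term through the $A$-projection $z_r$, and $Y\le2\bar c_T$ holds because those terms lie in $C$. The estimate on $X$ uses the triangle inequality on the vectors $U,V$, and the final case split is on $s<1/9$. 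None of these steps refers to $p_t$, so combining the two bounds gives $s\ge\gamma_\eps$ and therefore $\E\Psi_T\ge s\ge\gamma_\eps$.

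The step I expect to need care is the first: justifying that a correlated, non-IID schedule can be folded into the purification without changing the decomposition. The two points to verify are that the fidelity bound still holds between center and leaf states that now carry the entangled schedule register, and that conditional potentials average to unconditional subspace masses. The first holds because correctness is worst-case over inputs and the schedule register evolves input-independently. The second holds because $\Pi_Q$ and $\Pi_C$ commute with projections onto accessible histories. Once these are checked, the IID proof applies with no further change.
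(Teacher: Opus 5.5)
Your proposal is correct and is essentially the paper's own argument: the paper proves this lemma by noting that the proof of Lemma~\ref{lem:terminal} uses only the final record-dependent $A/Q/C$ decomposition and worst-case correctness, never the dephasing rates, so it carries over verbatim. Your explicit purification of the schedule as $\sum_z\sqrt{\mu(z)}\ket{z}$, together with the check that $\Pi_Q,\Pi_C$ are block diagonal in the accessible registers, spells out details the paper leaves implicit.
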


We can then put them together to prove~\cref{thm:opportunity}. 
\begin{proof}
Assume $W>0$ and fix an arbitrary
finite-cost predictable opportunity-allocation process $\theta$.  Let
\[
  S=\sum_{t=1}^T\theta_t,
  \qquad
  J=\sum_{t=1}^T\frac{(1-\theta_t)^2}{p_t}.
\]
Although expectations involving $\Psi_t$ also average over the hard input
prior and the algorithm, the variables $S$ and $J$ depend only on the schedule.
Consequently their expectations are exactly their expectations under $\mu$.
Split the linear term in Lemma~\ref{lem:local-potential} as
\[
  12a_t=12\theta_ta_t+12(1-\theta_t)a_t.
\]
For $p_t>0$, we get the inequality below which saturates at \( a_t = \frac{3(1-\theta_t)^2}{p_t\sqrt W}\,,\)  
\[
  \frac{12(1-\theta_t)a_t}{\sqrt W}-2p_ta_t^2
  \le
  \frac{18}{W}\frac{(1-\theta_t)^2}{p_t}.
  \label{eq:pht-square-completion}
\]
At $p_t=0$, finite cost forces $\theta_t=1$, and both exposed terms vanish.
Using $a_t\le1$, summing the local inequalities, and applying the tower
property yields
\[
  \E\Psi_T
  \le
  \frac{12T}{W}
  +\frac{12}{\sqrt W}\E S
  +\frac{18}{W}\E J.
  \label{eq:pht-summed-drift}
\]

Since $p_t\le1$ and $0\le\theta_t\le1$, pathwise
\[
  S^2+J
  \ge
  \sum_{t=1}^T
    \bigl(\theta_t^2+(1-\theta_t)^2\bigr)
  \ge\frac T2.
\]
Therefore
\[
  T\le2r_\theta(\mu),
  \qquad
  \E J\le r_\theta(\mu),
  \qquad
  \E S\le\sqrt{\E S^2}\le\sqrt{r_\theta(\mu)}.
\]
Substitution in \eqref{eq:pht-summed-drift} gives
\[
  \E\Psi_T
  \le
  12\sqrt{\frac{r_\theta(\mu)}{W}}
  +42\frac{r_\theta(\mu)}{W}.
  \label{eq:pht-resource-upper}
\]

Let $r=r_\theta(\mu)/W$.  Lemma~\ref{lem:terminal} implies
\[
  \gamma_\eps\le12\sqrt r+42r,
\]
and hence
\[
  r\ge
  c_\eps
  :=\left(
      \frac{\sqrt{144+168\gamma_\eps}-12}{84}
    \right)^2>0.
\]
Thus $r_\theta(\mu)\ge c_\eps W$ for every predictable $\theta$.
Taking the infimum only now proves
$\cR(\mu)\ge c_\eps W$.  Here \(W\) is the \(\fbs\) centered at $0^n$. As we argued in~\cref{sec:iid}, for an
arbitrary center $x^\star$, apply it to the translated function
\[
  g(y):=f(x^\star\oplus y).
\]
Then
\[
  \fbs(g,0^n)=\fbs(f,x^\star),
\]
and querying $g$ is equivalent to querying $f$ after the fixed permutation
$y\mapsto x^\star\oplus y$ of the query-index basis.  Query-index dephasing is
invariant under such a permutation, so the same schedule law and resource
apply.  Maximizing over $x^\star$ and over the fractional packing therefore
gives $\fbs(f)$. 
\end{proof}

\ifsubmission\else
\subsection*{Acknowledgments.}
The authors were partially supported by the National Science Foundation under Award No. 2348489 
\fi

\section*{AI Disclosure}
In addition to checking typos, grammars, and editorial suggestions and corrections by LLMs, there are two substantively interactions.

\begin{itemize}
\item The authors asked Claude Fable 5 to investigate whether the \(\fbs(f,0^n)\) lower bound for negligent queries could be extended to general \(\fbs(f)\).  The model suggested a partial-function counterexample.  The authors independently verified the construction and subsequently generalized it to the lifting result in~\cref{thm:neg-partial-lifting}.

\item The authors proposed the extended stochastic dephasing model and a conjectured generalization of the hybrid lower bound to GPT-5.6 Sol (High mode).  The model identified a counterexample to the conjectured formulation and suggested the opportunity-resource formulation used in this paper.  The authors independently formulated and proved the Computational-Opportunity Theorem, and developed its applications, including the aging-dephasing example.

\end{itemize}
The authors verified the correctness and originality of all AI-assisted content and take full responsibility for the results in this paper.

\newpage
\bibliographystyle{alpha}
\bibliography{qfbs}
\end{document}

\typeout{get arXiv to do 4 passes: Label(s) may have changed. Rerun}